\newenvironment{CompactEnumerate}{
\begin{list}{\arabic{enumi}.}{%
\usecounter{enumi}
\setlength{\leftmargin}{10pt}
\setlength{\itemindent}{1pt}
\setlength{\topsep}{1pt}
\setlength{\itemsep}{1pt}
}}
{\end{list}}
\newenvironment{CompactItemize}{
\begin{list}{\tiny$\bullet$}{%
\setlength{\leftmargin}{10pt}
\setlength{\itemindent}{10pt}
\setlength{\topsep}{5pt}
\setlength{\itemsep}{1pt}
}}
{\end{list}}
\newcommand{\eps}{\epsilon}
\newcommand{\addr}{\texttt{addr}}
\newcommand{\heapaddr}{\texttt{a}}
\newcommand{\bits}{b}
\newcommand{\mc}[1]{\mathcal{#1}}
\newcommand{\ouralg}{\textsc{CoDEQ}}
\DeclareMathOperator*{\argmax}{arg\,max}
\newtheorem{theorem}{Theorem}[section]
\crefname{theorem}{Theorem}{Theorems}
\newaliascnt{lemma}{theorem}
\crefname{lemma}{Lemma}{Lemmas}
\newaliascnt{proposition}{theorem}
\newtheorem{proposition}[proposition]{Proposition}
\crefname{proposition}{Proposition}{Propositions}
\newaliascnt{corollary}{theorem}
\crefname{corollary}{Corollary}{Corollaries}
\newaliascnt{fact}{theorem}
\crefname{fact}{Fact}{Facts}
\newaliascnt{definition}{theorem}
\newtheorem{definition}[definition]{Definition}
\crefname{definition}{Definition}{Definitions}
\newaliascnt{remark}{theorem}
\crefname{remark}{Remark}{Remarks}
\newaliascnt{conjecture}{theorem}
\crefname{conjecture}{Conjecture}{Conjectures}
\newaliascnt{claim}{theorem}
\crefname{claim}{Claim}{Claims}
\newaliascnt{question}{theorem}
\crefname{question}{Question}{Questions}
\newaliascnt{exercise}{theorem}
\crefname{exercise}{Exercise}{Exercises}
\newaliascnt{example}{theorem}
\crefname{example}{Example}{Examples}
\newaliascnt{notation}{theorem}
\crefname{notation}{Notation}{Notations}
\newaliascnt{problem}{theorem}
\crefname{problem}{Problem}{Problems}
\title{Quantization for Vector Search under Streaming Updates}
\author{%
 Ishaq Aden-Ali\thanks{UC Berkeley. Work done during an internship at Amazon}
  \and
  Hakan Ferhatosmanoglu\thanks{Amazon}
  \and
  Alexander Greaves-Tunnell\thanks{Amazon}
  \and
  Nina Mishra\thanks{Amazon}
  \and
  Tal Wagner\thanks{Amazon and Tel Aviv University}
}
\begin{document}
\maketitle

\begin{abstract}

Large-scale vector databases for approximate nearest neighbor (ANN) search typically store a quantized dataset in main memory for fast access, and full precision data on remote disk. State-of-the-art ANN quantization methods are highly data-dependent, rendering them unable to handle point insertions and deletions. This either leads to degraded search quality over time, or forces costly global rebuilds of the entire search index. In this paper, we formally study data-dependent quantization under streaming dataset updates. We formulate a computation model of limited remote disk access and define a dynamic consistency property that guarantees freshness under updates. We use it to obtain the following results: Theoretically, we prove that static data-dependent quantization can be made dynamic with bounded disk I/O per update while retaining formal accuracy guarantees for ANN search. Algorithmically, we develop a practical data-dependent quantization method which is provably dynamically consistent, adapting itself to the dataset as it evolves over time. Our experiments show that the method outperforms baselines in large-scale nearest neighbor search quantization under streaming updates.

\end{abstract}

\section{Introduction}

Approximate nearest neighbor (ANN) search is a fundamental task 
in modern retrieval contexts. 
Its significance has grown as vector search has increased in popularity with the rising quality of learned embeddings. 
In the context of large language models (LLMs), both retrieval-augmented generation (RAG) and search agents use ANN search to incorporate contextual and proprietary content.

ANN search systems often encode data elements into vectors and maintain a vector database for searching. 
Modern vector databases are {\em large} due to both the quantity of indexed documents and the dimensionality of embeddings.  These datasets tend to be too large or too expensive to fit in main memory.  Consequently, it is common to quantize data, sketching each point into a prescribed small number of bits. 
The quantized dataset is retained in main memory while full-precision vectors are stored on disk.  When a query arrives, the quantized data is used to identify results, either producing the final outputs or forming a candidate set that is then optionally retrieved from disk and re-ranked by exact distance to identify nearest neighbors. ANN algorithms must therefore carefully balance main memory vs. disk given the performance disparity and costs of communication between storage tiers.

Modern vector databases are often {\em dynamic} in the sense that new content continuously arrives while old content is removed. When there are differences between the new and old content, the result is a drifting distribution of both indexed data and queries. Baranchuk  et al.~\cite{dedrift} illustrate this phenomenon practically in the context of embedded image vectors for a photo database, where content changes over seasons.  
The quantization that works on a prefix of a stream (pictures in the summer) needs to change as the underlying data shifts over time (pictures in the fall). The topic was also the subject of the Streaming Track of the NeurIPS 2023 BigANN Competition~\cite{simhadri2024results}.

This raises the main challenge studied in the present work: \emph{how can we efficiently and reliably update a quantized vector database over a stream of insertions and deletions?} 

This question is related to the broader matter of data-dependence in ANN algorithms \cite{andoni2015optimal,andoni2015tight,quadsketch,jayaram2024data}. 
Quantization algorithms for ANN can be either \emph{data-oblivious} or \emph{data-dependent}. Loosely speaking, data-oblivious quantization quantizes each vector on its own, irrespective of the vector dataset it is part of. Data-dependent quantization quantizes each vector in the context of the dataset, and thus the same vector would be assigned different quantized representations within different datasets. We give a detailed overview of these notions in \Cref{app:datadependent}. 

Here, for concreteness, we mention  
Product Quantization (PQ)~\cite{jegou2010product, vqindex2002}, a widely-used data-dependent ANN quantization method based on k-means clustering. It partitions vector dimensions into subspaces, applies k-means clustering within each subspace, and represents vectors through the cross product of their nearest centroids in each subspace.
Consider trying to adapt PQ to the streaming ANN settings where vectors are inserted to and deleted from the vector database. 
This presents two primary challenges:  
\begin{CompactItemize}
\item[(i)] Updating quantization centroids to reflect evolving data distributions. 
\item[(ii)] Reassigning each point in the dataset to new centroid among the updated centroids. 
\end{CompactItemize}
Existing work addresses 
centroid updates for k-means clustering \cite{valiant,gmmmo}, but these methods can trigger widespread point reassignments across the dataset. 
For ANN quantization, they address challenge (i) but not challenge (ii), which might may render the computational cost per update linear in the entire dataset size. 
Such massive point-to-centroid reassignments are indeed unavoidable for $k$-means, where single point insertions or deletions can alter many assignments. Each reassignment requires disk access to retrieve the affected point (as its full precision is necessary to determine its nearest centroid)  and creates substantial I/O overhead that becomes prohibitive at scale. Therefore, vector quantization over a stream requires methods that not only can update cluster centroids, but also reassign only a \emph{limited} number of points per update to minimize disk access. 

In summary, at present, there is no theoretical framework for ANN quantization under streaming updates, and existing heuristic methods struggle either with scalability or with maintaining data freshness. The goal of the present paper is to address this gap at a foundational level.

\subsection{Our Contributions}
We undertake a principled study of data-dependent ANN quantization under streaming data updates.
To this end, we first formalize the \emph{Dynamic Sketches with Disk Access (DySk)} model: a computational model for hybrid memory-disk access with limited I/O, which captures ANN quantization.

We then use this model to obtain two results. Our first result is proving a worst-case quantization bound for $(1+\epsilon)$-ANN with dynamic updates, which essentially matches the best possible bound for \emph{static} quantization, while also supporting efficient updates with limited disk I/Os. 

Our second result is a practical dynamic quantization method, achieved by combining insights from the empirically successful PQ method with our theoretical framework and with new algorithmic ideas. 
As a building block for our method, we develop a new kind of memory-disk priority queues, with a focus on a \emph{constant} number of consecutive disk I/Os. 
This departs from prior work on memory-disk priority queues (e.g., in the external memory model), which has mostly allowed for a logarithmic number of consecutive I/Os. Since minimizing the number of consecutive I/Os is crucial in ANN search, we introduce these new data structures that support updates through polylog-size memory-disk data transfer with only $O(1)$ consecutive disk I/Os.

Empirically, we design an experimental framework for ANN search under streaming updates, and show that our method outperforms existing methods and maintains its level of recall as the stream of inserts and deletes evolves over time, while baselines decline.

\paragraph{Our techniques.}
Our worst-case quantization bound for dynamic $(1+\varepsilon)$-ANN is based on the quadtree-based static ANN quantization method of \cite{indyk2017near}. Their method is based on data-dependent lossy compression that discards certain portions of the tree identified as irrelevant to the points presently in the dataset. In the dynamic case, if a new point is inserted, then due to data-dependence, a discarded portion may become necessary again. We analyze the compression scheme under dynamic updates with remote disk access, and show that for every inserted point it is possible to identify a small portion of discarded data that can be recovered from the disk with a constant number of I/Os of polylogarithmic size. This maintains essentially the same static near-optimal quantization bound from \cite{indyk2017near} in the dynamic setting. The result is presented in \Cref{thm:quadsketch}.

Our practical method is based on kd-trees, a classical method for nearest neighbor search. 
However, our use kd-trees is different from their usual manner of use. Classically, their role is pruning the dataset at query time by limiting the search to a beam in the kd-tree around the query.
In contrast, we build a \emph{shallow} kd-tree -- its depth is the desired number of quantization bits per point -- and use the partition induced by the leaves as a clustering-based quantizer to reduce memory size. 

The reason a shallow kd-tree is useful for dynamic ANN quantization is that it maintains a data-dependent hierarchical partition of the dataset, which is induced by coordinate medians at each node. Crucially, a median-based partition is \emph{stable} under incremental updates: if a point is inserted or deleted, the coordinate-median of the dataset moves by at most one position, so only one point may need to be reassigned to the other side of the partition (i.e., moved from its current node in the kd-tree to the sibling node at a given tree level). Since point reassignments are expensive (requiring the retrieval of each reassigned point from the remote disk), this property is valuable for maintaining the partition under streaming updates to the dataset. 
Nonetheless, reassigning a single point at a single level of the kd-tree may lead to a cascade of reassignments in other levels. Fortunately, we prove (\Cref{thm:kd-tree-stability}) that the cascade of reassignments can lead to at most one reassignment per tree node.

To efficiently keep track of evolving coordinate medians at each kd-tree node over infinitely many sequential updates, we use priority queues. To this end, we develop our new I/O-efficient priority queues (\Cref{lem:max_DDS}). We define notions of insertion and deletion paths in heap trees, which facilitate batched disk-based updates by retrieving a constant number of entire root-to-leaf paths per update. Putting everything together, we prove the theoretical guarantees of our practical quantization method in \Cref{thm:tbq,thm:tbq_consistent}.

\subsection{Prior Methods}
There are several approaches for ANN quantization over a stream with minimal memory/disk communication. One is to ignore drift altogether and just stick with a quantization ``frozen'' on a prefix of the stream; we call this approach FrozenPQ.  This elides both challenges (i) and (ii). Similar in spirit is the recent RaBitQ method \cite{gao2024RaBitQ,gao2024practical}, even though it is not based on PQ. Its pre-processing step remains ``frozen'' on the original data and is no longer updated over the stream.
OnlinePQ \cite{online-pq} takes a different approach: it starts with a PQ partitioning and never changes that initial partitioning.  Instead, the quantization centers are shifted within a partition to accommodate insertions/deletions, which addresses challenge (i) but still abstains from challenge (ii).
The benefit of all these solutions is that no memory/disk communication is required: inserts and deletes are performed fully in memory.  However, their price is recall drop as the stream progresses, due to their non-adaptability or partial adaptability to data drift. 

On the other end of the spectrum, one can recompute the quantization from scratch every time the data changes, or periodically at fixed intervals.  
We call this method RebuildPQ. 
This method achieves the highest recall immediately after an index rebuild, although it is too slow, as it involves loading the entire dataset from disk to memory. Additionally, during stream intervals between rebuilds, recall degrades similarly to FrozenPQ. 

A potential middle ground is to allow limited disk/memory transfers and rely on them to mitigate recall degradation due to drift. The recent DeDrift algorithm \cite{dedrift} takes this approach. 
It continues to assign new points to existing centroids until clusters grow too large.  When that occurs, it reassigns points from the largest clusters to new clusters. 
DeDrift as originally presented in \cite{dedrift} was designed to update an inverted file index (IVF) over a stream. We use the term DeDriftPQ for its natural adaptation to the PQ paradignm, which yields a quantization method updateable over a stream with limited disk access. 
DeDriftPQ has lower latency than RebuildPQ due to its more frugal disk access, albeit still much larger than the disk-free methods FrozenPQ, RaBitQ and OnlinePQ.

An extended overview of related work is given in \Cref{sec:related-work}.

\section{Definitions and Computational Model}\label{sec:model}
Large-scale vector databases typically use a memory-disk architecture where quantized vectors reside in main memory and full-precision vectors are stored in external storage \cite{weber1998quantitative, ferhatosmanoglu2000vector, chen2021spann, jayaram2019diskann, matsui2018survey}. This design reflects the requirements that queries need low-latency access and storing large vectors exceeds practical memory budgets. Two modeling tenets characterize these workloads:
\begin{CompactItemize}
    \item Query processing should execute in main memory using compressed representations for approximate results, with optional re-ranking for refinement.
    \item Memory-disk communication should be minimized in both total data transfer volume and dependent I/O round-trips.
\end{CompactItemize}

\subsection{Dynamic Consistency}

We start by formalizing the notion of a dynamic data structure. Let $\mathcal{U}$ be a finite universe and let $\mathcal{F}$ be a family of query functions $f : 2^{\mathcal{U}} \to \mathcal{O}$ that map subsets of $\mathcal{U}$ to some output range $\mathcal{O}$.
A dynamic data structure $D$ supports the following operations:

\begin{CompactItemize}
    
\item[(i)] \textbf{Build:} Given $X\subset \mathcal{U}$, initialize $D$ with the subset $X$. 

\item[(ii)] \textbf{Query:} Given a query function $f \in \mathcal{F}$, return $f(X)$. 

\item[(iii)] \textbf{Insert:} Given $x\in \mathcal{U}$, add $x$ to $X$. 

\item[(iv)] \textbf{Delete:} Given $x\in X$, delete $x$ from $X$.
\end{CompactItemize}

Our goal is to develop dynamically consistent structures where after every update, query responses match those of a structure built from scratch. This ensures performance equivalent to optimal static quantization while supporting efficient updates. Dynamic consistency prevents gradual performance degradation under updates and eliminates prohibitively expensive rebuilds.

Formally, we view a dynamic index structure $D$ as maintaining an answer-state
$\mathrm{\mathbf S}_D \in \mathcal{O}^{|\mathcal{F}|}$
of its answers to all possible queries  $\mathcal{F}$. Denote by $\mathrm{\mathbf S}_D(\texttt{seq})$ its state after the sequence of operations $\texttt{seq}$. Note that query operations do not change the state. 

\begin{definition}\label{def:dynamicallyconsistent}
A deterministic dynamic index structure $D$ is \emph{dynamically consistent} if it satisfies for any $X \subset \mathcal{U}$, $y \in \mathcal{U} \setminus X$ and $x \in X$ the following: 
\begin{CompactItemize}
\item $\mathrm{\mathbf S}_D($`Build($X$); Insert($y$)'$)=\mathrm{\mathbf S}_D($`Build$(X\cup\{y\}')$,

\item $\mathrm{\mathbf S}_D($`Build($X$); Delete($x$)'$)=\mathrm{\mathbf S}_D($`Build$(X\setminus\{x\}')$. 
\end{CompactItemize}

If $D$ is randomized, it is \emph{dynamically consistent} if this holds for every fixed random seed.
\end{definition}

We remark that dynamic consistency is related to but different from the notions of weak/strong history independence (abbrev. WHI/SHI). These notions from the field of data structure security assume that an adversary observes the memory state of the data structure. They require that the data structure’s \emph{memory state} remains the same (or at least indistinguishable to the adversary) under inserts/deletes that end with the same set of elements in the data structure. In contrast, dynamic consistency requires the \emph{answer state} of the data structure to remain to same, i.e., its answers to all queries, while allowing its memory state to be different (as in general, there are multiple different memory states that yield the same set of answers to all queries).

\subsection{The DySk Model}

We introduce our DySk model, a memory-disk computational model tailored to capture dynamic ANN quantization. 

To motivate our need for a new model, let us review existing models in the literature in the context of ANN quantization. We recall the desiderata of ANN quantization. We have limited  main memory for storing the quantized data and a slower remote disk for storing the full data. 
The quantized dataset contains a sketched representation of all points, thus its size need be linear in the number of points, but smaller the total size of the unsketched dataset (due to its dimensionality and coordinate bit precision). 
ANN queries should be answerable from the quantized data using main memory only (with a possible subsequent disk I/O for refining the ranking of results). Streaming updates to the dataset may require unsketched data access and may probe the remote disk, but the size of the data transfer and the number of consecutive disk I/Os should be kept to a minimum.

With these desiderata in mind, we go over existing memory-cost models for data structures:
\begin{CompactItemize}
    \item The classical \emph{cell probe model} \cite{miltersen1999cell} is designed to capture memory access costs in data structures. However, it only models a single storage device rather than memory-disk interplay.
    \item The \emph{external memory model} \cite{vitter2001external} incorporates secondary storage access. However, it assumes a disk-centric design where typically every operation requires multiple consecutive I/Os. The main memory size is fixed to be independent of (and smaller than) the dataset size, thus it cannot store a quantized representation per point, and cannot support operations like ANN queries that may need to return any data point without disk access. 
    \item The \emph{sketching model} \cite{nelson2020sketching} is designed to enable query processing in constrained memory through lossy compression with a sketch size that scales with the dataset size. However, it provides no mechanism for disk-based maintenance of sketch consistency under data modifications. Streaming updates to the sketch rely solely on the sketched data (e.g., \cite{li2014turnstile,braverman2017clustering}).

\end{CompactItemize}

\noindent The DySk model adapts elements from these models to capture ANN quantization. 
Essentially, it can be seen as the sketching model endowed with external memory for sketch updates under insertions and deletions. 
It maintains and queries the quantized data in a main memory whose size scaled with the dataset size, and enables access to the full data on a remote disk for updates only, restricting each update to $O(1)$ dependent I/Os independently of dataset size. 
We do not explicitly parameterize the block size or the max transfer size per I/O, and instead treat it as a complexity measure to be minimized (in our data structures it will generally be polylogarithmic in the dataset size). 
Measuring both total data transfer sizes and dependent I/O roundtrips enables the model to handle different storage architectures, from local disks to cloud object stores and disaggregated systems, where I/O latency can vary and dominate throughput costs. The strict bound on consecutive I/Os avoids costly dependent I/O operations that would incur cascading latency penalties that compound across roundtrips.
 
\paragraph{Model definition.}
Formally, the DySk model is defined as follows. 

\noindent\textbf{Storage devices.}  We model two storage devices: main memory and external storage, which we refer to as ``disk''.
We measure main memory space in bits and disk space in machine words.
Main memory is free to access but has limited size.   
Disk is expensive to access but has large size. 
The disk is modeled as a table that maps an address $\texttt{a}$ to an arbitrarily sized piece of data $\texttt{data}(\texttt{a})$ stored at that address.
The main memory stores two pieces of information for each element: 
\begin{CompactItemize}
\item a local sketch $sk(i)$, which is a compressed synopsis of the point $x_i$; 
\item a disk address $\texttt{addr}(i)$ where the full data associated with $x_i$ is stored remotely.
\end{CompactItemize}

\noindent\textbf{Disk I/Os.} 
Query operations are considered performance-critical in the DySk model and must run fully in-memory, without disk access. Insert and delete operations may access the remote disk by two types of I/O -- reading and writing. A read I/O specifies a list of disk addresses $\texttt{A}_R$ and retrieves the collection $\{\texttt{data}(\texttt{a}):\texttt{a}\in\texttt{A}_R\}$ from disk to memory. 
A write I/O specifies a list of disk addresses $\texttt{A}_W$ and corresponding data $\{\texttt{data}^*(\texttt{a}):\texttt{a}\in\texttt{A}_W\}$, and overwrites those addresses in the disk as $\forall_{\texttt{a}\in\texttt{A}_W}\texttt{data}(\texttt{a})\leftarrow\texttt{data}^*(\texttt{a})$. 

\noindent\textbf{Efficiency measures.}
The primary measures are the main memory size, the number of sequential I/Os, and the total I/O size. 
Our goal is for the main memory to store a pre-specified number of $t$ bits per point ($\forall_i |sk(i)|=t$), corresponding to a desired quantization level, while supporting inserts/deletes with  $\log^{O(1)}(n)$ total I/O size and $O(1)$ sequential I/Os. 

The main memory size is defined as $\sum_{i=1}^n|sk(i)|$ over the $n$ points currently in the data structure. Note that this disregards the disk addresses $\texttt{addr}(i)$ which impose a fixed cost on all data structures. The size of a read I/O is $\sum_{\texttt{a}\in\texttt{A}_R}|\texttt{data(a)}|$, and the size of a write I/O is $\sum_{\texttt{a}\in\texttt{A}_W}|\texttt{data}^*(\texttt{a})|$. Both correspond to the total amount of data communicated between memory and disk. 

\paragraph{ANN quantization.}
The DySk model captures quantization in vector databases, storing $b$-bit quantizations $sk(i)=\widetilde x_i$ of each $x_i\in X$ in memory. At query time, a subset of $k'$ candidate vector IDs $I\subset[n]$ is formed in-memory as the $k'$ nearest $\widetilde x_i$'s to the query, where $k'$ satisfies $k<k'\ll n$ and governs the I/O size. Then a ``reranking'' read I/O is made to retrieve the full representation of the candidates $\{x_i\}_{i\in I}$ from the remote disk. The $k'$ candidates are reranked by their exact distance from the query, and the $k$ nearest neighbors among them are returned. Our goal is to endow this setup with streaming data updates. The memory-only requirement of the DySk model queries ensures that the quantized candidate selection stage does not introduce additional I/Os on top of the reranking I/O at query time.

\section{Provably Accurate ANN Quantization with Dynamic Updates}\label{sec:quadsketch}
To present our first result, we define the following formal notion of ANN. 
In the \emph{$(\epsilon,\delta)$-approximate nearest neighbor search} problem ($(\epsilon,\delta)$-ANN), for any query $q\in\mathbb{R}^d$, the data structure must return a vector $x\in X$ such that, over the internal randomness of the data structure,
\[ \Pr\left[\|q-x\|\leq(1+\epsilon)\min_{x^*\in X}\|q-x^*\|\right] \geq 1-\delta . \]

A long line of work has studied upper and lower bounds for dataset compression for ANN. 
The Johnson-Lindenstrauss dimensionality reduction lemma
\cite{johnson1984extensions,achlioptas2001database,ailon2006approximate,larsen2017optimality} is key to compression by reducing data dimension. Other works have studied further reduction in the total number of bits by sketching (i.e., quantizing) the dataset  \cite{kushilevitz1998efficient,alon2017optimal,indyk2017near,quadsketch,indyk2018approximate,indyk2022optimal,pagh2019space,huynh2020fast,zhang2021faster}. 
In particular, Indyk et al. \cite{quadsketch,indyk2018approximate} 
 showed that $(\epsilon,\delta)$-ANN can be solved with a quantization size of $O(\log(1/\epsilon))$ amortized bits per coordinate. 
However, these results are for static quantization and do not support point insertions and deletions. 

Here, we prove that this \emph{static} quantization result can be extended within the DySk model into a \emph{dynamically consistent} quantization result that formally solves the streaming $(\eps,\delta)$-ANN problem.  See \cref{app:quadsketch} for the formal definitions and proofs.

\begin{theorem}\label{thm:quadsketch}
There is a dynamically consistent data structure that solves the streaming $(\eps,\delta)$-ANN problem with the following guarantees:
    \begin{CompactItemize}
        \item \textbf{Memory size:} 
        $O(n\frac{\log N}{\eps^2}(\log(\frac{1}{\eps\delta}) + \log\log N))$ bits.
        \item \textbf{I/Os per update:} One read I/O and one write I/O, each of size $O(\eps^{-2}\log N)$ words.
        \item \textbf{Update and Query time:} $\tilde{O}(\eps^{-2}\log^2 N)$ each.
    \end{CompactItemize}
\end{theorem}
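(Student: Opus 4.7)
The plan is to implement the quadtree-based static quantizer of Indyk et al.~\cite{quadsketch,indyk2018approximate} inside the DySk model by carefully splitting the representation between main memory and disk. After a Johnson--Lindenstrauss projection to dimension $m = O(\eps^{-2}\log N)$ and embedding into a randomly shifted quadtree of depth $O(\log N)$, the static scheme represents each point $x_i$ by the sequence of cell labels along its root-to-leaf path, and then applies a data-dependent lossy compression that truncates portions of the path deemed irrelevant to $(\eps,\delta)$-ANN (roughly, at nodes whose subtree is a singleton, or whose side length is already much smaller than $\eps$ times the separation from the remaining points). I would keep the compressed path $sk(i)$ in memory and stash the uncompressed residual information --- the full-precision $x_i$ together with per-node path labels that were discarded by the pruning rule --- on disk, indexed by $\texttt{addr}(i)$ so that an entire root-to-leaf path can be batched into a single I/O.

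The first step is a bookkeeping check: the static amortized bound of $O(\log(1/\eps))$ bits per coordinate from \cite{quadsketch,indyk2018approximate}, combined with $m = O(\eps^{-2}\log N)$ projected dimensions, an $O(\log(1/\eps\delta))$-fold independent repetition for success amplification, and $O(\log\log N)$ bits per cell label, reproduces exactly the claimed memory cost $O\!\bigl(n\tfrac{\log N}{\eps^2}(\log(1/\eps\delta)+\log\log N)\bigr)$.

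The core step is the update procedure. For an insertion of $y$, I would issue a single read I/O that retrieves the entire root-to-leaf path of $y$ in the quadtree: $O(\log N)$ nodes, each storing $O(\eps^{-2})$ projected coordinates, for a total of $O(\eps^{-2}\log N)$ words. The decisive structural fact to establish is that the pruning predicate used at a node depends only on the multiset of points currently inside its subtree, so inserting $y$ can change the pruned/unpruned status of only those nodes that lie on $y$'s own path. This lets me recompute the affected sketches entirely in memory --- namely $sk(y)$ together with at most one previously-pruned neighbor sketch per level --- and then commit them back with a single write I/O of the same size. Deletions are symmetric: remove $y$'s contribution along its path and possibly re-compress nodes whose subtree has just collapsed back to a singleton.

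Finally, I would establish dynamic consistency in the sense of \Cref{def:dynamicallyconsistent} by fixing the random tree shift as the seed and noting that, given this seed, both the in-memory compressed path and the disk residual are deterministic functions of the current point set; the update rules above implement exactly those functions on $X\cup\{y\}$ or $X\setminus\{x\}$, so the post-update answer-state matches a fresh build. I expect the main obstacle to be the rigorous ``no cascade'' argument: showing that a single insertion cannot reactivate pruned information at any node lying off $y$'s path. This reduces to inspecting the precise pruning predicate from \cite{quadsketch,indyk2018approximate} and verifying that it is a local function of the node's subtree occupancy (and at most a short summary inherited from its parent), so that point-counts change only along $y$'s path. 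Once this is in hand, the $\tilde O(\eps^{-2}\log^2 N)$ update and query times follow immediately from traversing an $O(\log N)$-length path with per-level work $O(\eps^{-2})$ and $O(\log\log N)$ hashing overhead per cell.
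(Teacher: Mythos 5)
Your high-level architecture matches the paper's: JL projection to $O(\eps^{-2}\log N)$ dimensions, a randomly-shifted quadtree, the in-memory compressed representation, and full-precision vectors on disk. But the heart of your argument --- the ``no cascade'' claim that only nodes on $y$'s path can change pruned/unpruned status --- is not correct for QuadSketch's actual compression rule, and the fix you anticipate (that the predicate is ``a local function of the node's subtree occupancy'') is precisely what fails.

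QuadSketch does not prune on $\eps$-separation; it performs \emph{middle-out compression}: any maximal chain of degree-1 nodes of length $\geq 2\Lambda+1$ has its middle cut out and replaced by a single ``long edge'' whose interior bit labels are discarded. Whether a node is compressed therefore depends on the \emph{topology above and below it}, not on its own subtree. Now insert $y$: if $y$ branches off an existing maximal chain $P=(v_h,\dots,v^*,\dots,v_t)$ at a formerly degree-1 node $v^*$, then $v^*$ becomes degree-2 and $P$ is split into $(v_h,\dots,v^*)$ and $(v^*,\dots,v_t)$. The second segment lies entirely \emph{off} $y$'s path, yet its nodes may now have to be \emph{un}compressed, because the shorter split segment can fall below the $2\Lambda+1$ threshold. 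So the invariant you intend to rely on is false, and your single read of ``$y$'s path'' cannot recover the discarded labels on the $(v^*,\dots,v_t)$ side. A secondary issue is addressing: $y$ is new and has no disk address, and the discarded labels are not indexed by $y$. You must first traverse the in-memory compressed tree (tentatively filling in long-edge bits so as to agree with $y$) until you land at a leaf holding some existing $x_i$, and then read $\texttt{addr}(i)$. This is exactly what the paper's proof does: the one read I/O fetches the full-precision $x_i$, whose root-to-leaf path passes through all of $P$, so all discarded labels on the affected maximal path are recoverable by rounding $x_i$ --- there is no need to store discarded labels separately on disk as you propose, which would be redundant. After recovering the labels, the paper recompresses the (possibly split) maximal paths and argues by case analysis on $\deg(v^*)$ in $T(X)$ that this reproduces $T'(X\cup\{y\})$ exactly. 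Deletion is the mirror image: the branching node $v^*$ can drop from degree 2 to 1, \emph{merging} two maximal paths into one longer one, and the single read of a surviving sibling leaf $x_j$ supplies the labels needed to recompress correctly. Your proposal gets the resource accounting and the dynamic-consistency framing right, but it is missing the mechanism (read an existing leaf, not ``$y$'s path'') that makes the single-I/O bound actually achievable in the face of the cross-path cascade.
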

For reference, the memory size after dimension reduction and without further quantization would be $O(n\tfrac{\log N}{\epsilon^2}\log n)$, due to $O(\tfrac{\log N}{\epsilon^2})$ dimensions and $O(\log n)$ bits per dimension, assuming for simplicity that full coordinate precision fits in $O(\log n)$ bits per coordinate (we remove this assumption in the appendix). 
Thus, similarly to \cite{quadsketch,indyk2018approximate,indyk2022optimal}, our result improves the bound to a single log factor, which is necessary in static settings \cite{alon2003problems,molinaro2013beating,larsen2017optimality,indyk2022optimal}.

The challenge in this result is that the static methods of \cite{quadsketch,indyk2018approximate} heavily relies on lossy data-dependent compression and cannot support streaming updates without additional access to full-precision points.  
\Cref{thm:quadsketch} leverages the DySk model framework for showing that limited disk I/Os (of constant number and logarithmic size) suffice to that end. The proof is in \Cref{app:quadsketch}.

\section{Practical and Provable Method: \ouralg}\label{sec:tbq}

In this section we leverage the theoretical foundation developed in the preceding section to present a principled and practical method, \ouralg: Consistent Dynamic Efficient Quantizer, our solution for streaming quantization for vector databases.
\ouralg\ leverages data-dependence while supporting efficient updates via median-based partitions that remain stable under changes and need only limited reassignments. We prove that \ouralg\ is dynamically consistent with bounded I/O operations through disk-aware priority queues.
The theoretical guarantees for \ouralg\ are stated in \Cref{sec:codeq_theory} after the detailed presentation of the method.

\subsection{The Basic Quantizer}
\ouralg\ is based on kd-trees. However, the way it uses them is different from their classical use. While usually their role is reducing search time by pruning the dataset with a beam search, \ouralg~uses the kd-tree partition as a quantizer to reduce memory size. 

A kd-tree induces a hierarchical data-dependent partition of the dataset determined by coordinate medians at each tree node. 
For clarity, we briefly recall how to build a depth $L+1$ kd-tree. Let $X \subset \mathbb{R}^d$ be a dataset.
First, uniformly sample a random sequence of indices $(j_1, \dots, j_{L}) \in \{1, \dots, d\}^L$ without replacement.
The tree root of the kd-tree is associated with all points in $X$. We split it into two halves by at the median according to coordinate $j_1$. 
The left child of the root inherits all points less than the median in coordinate $j_1$ and the right child inherits the remaining points.
The splitting procedure continues recursively on the child nodes, wherein  nodes at level $\ell$ use the coordinate $j_\ell$ for their median split. 
The $2^L$ leaves of the kd-tree then define a partition of $X$ into $2^L$ clusters.

In \ouralg, we view these clusters as proxies for $k$-mean clusters (with $k=2^L$) and compute the mean of each cluster to serve as a proxy for the points in that cluster. 
The quantized representation of each point is the $L$-bit ID of the leaf it resides in. 
To improve effectiveness in high dimensions, we use the kd-tree quantizer as the subvector quantizer in each block of a product structure. This results in a PQ-like quantization scheme, except that $k$-means quantization in each block is replaced by a kd-tree quantization with the same number of clusters.  
This is depicted in \Cref{pq-tbq}. 

\begin{figure}[t]
\centering
\includegraphics[width=0.7\linewidth]{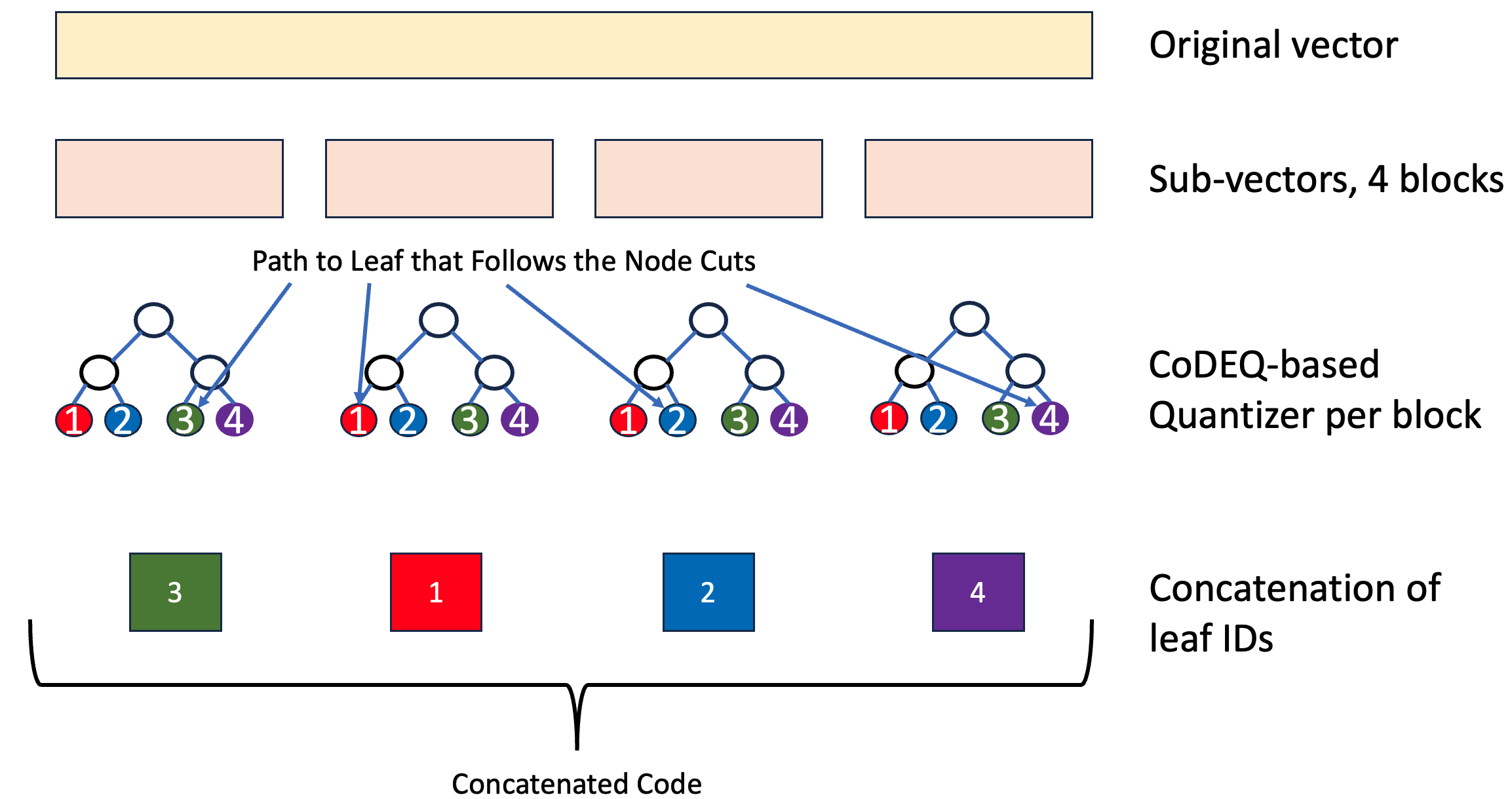}
\caption{Illustration of a \ouralg\ based Product Quantizer. The top row depicts a $d$-dimensional vector. 
Each vector is partitioned into blocks, shown on the next row as sub-vectors.  A \ouralg\ is computed on each block of dimensions.
This \ouralg~substitutes the $k$-means algorithm used in the usual PQ algorithm.  The leaves of each tree represent the codebook. 
The last row shows how the codebook is computed, i.e., concatenating the leaf node IDs corresponding to which leaf each subvector falls into in the tree.}
\label{pq-tbq}
\end{figure}

\subsection{Dynamic Updates}

\begin{wrapfigure}
{R}{0.6\textwidth}
\includegraphics[width=\linewidth]{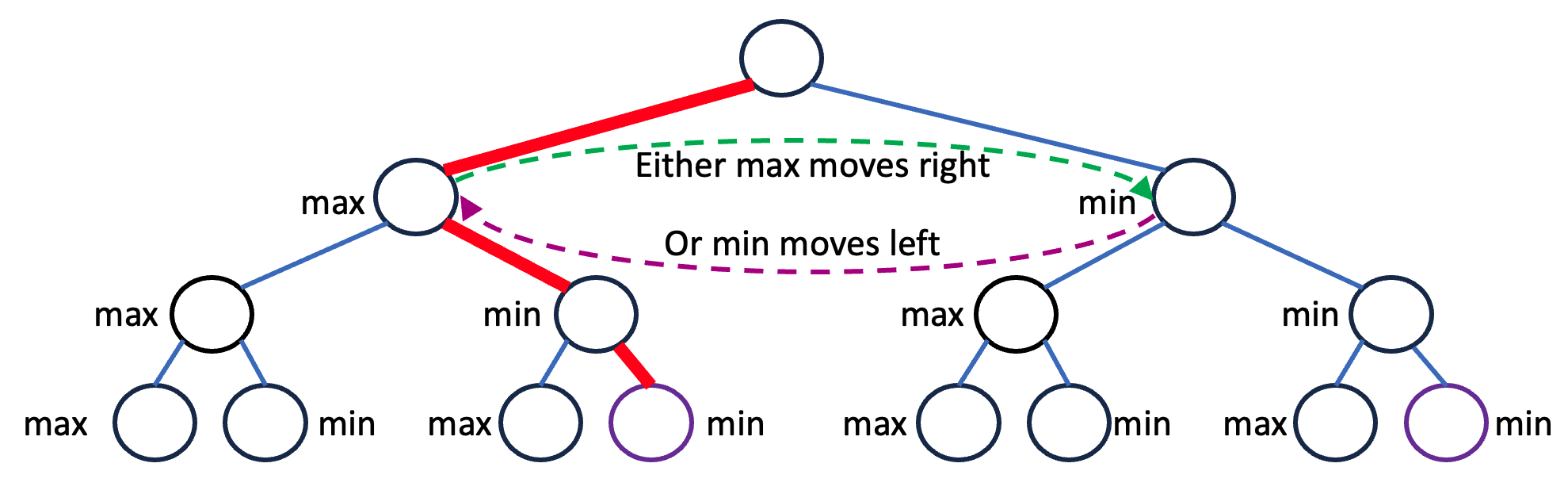}
\caption{
Illustration of a \ouralg~update. 
As the new point is inserted/deleted from each node along the path to its leaf, the median of the node can shift by one position, requiring reassigning some point from one child to the other. This reassignment can lead to further cascading reassignments \Cref{thm:kd-tree-stability} proves that the cascading effect can lead to at most one point reassignment per tree node.}
\label{update-tbq}
\end{wrapfigure}

Now we arrive at our motivation for replacing $k$-means with kd-trees: $k$-means clustering cannot be updated over a stream without potentially having to retrieve many full-precision points from the remote disk. In contrast, we show that the kd-tree quantizer makes this possible. 

The crucial property we exploit is that the median-based partition at each kd-tree node is \textbf{stable} under incremental updates. If a point is inserted or deleted, the coordinate-median of the dataset moves by at most one position to the left or to the right. Therefore, only one point may need to be reassigned to the other side of the partition. 
Furthermore, the identity of the point that needs to be reassigned can be easily determined: it must be one of the two points on the partition boundary.
Since re-assigned points need to be retrieved from disk, this stability property is useful in minimizing disk access while supporting updates: there are only two points to fetch from disk, and computing which points these are is immediate and causes no latency. 

Leveraging this property for an efficient streaming quantizer raises a few challenges. One challenge is the potential cascading effects of point reassignment. 
When a new point is inserted or deleted from the kd-tree, its inserted/deleted from all of the nodes along  the root-to-leaf path to the leaf that contains it leaf, as depicted in \Cref{update-tbq}. This already entails a potential point reassignment per tree level. But these are not all the necessary reassignments. When a point is reassigned from (say) the bottom half to the top half of the partition in tree node, it gets moved from its left-child to its right-child, meaning it causes an insertion and a deletion in the child nodes. These in turn may cascade to further insertions and deletions in yet more nodes. Can this cascade of changes in the tree be bounded?
Fortunately, the answer is yes. We prove that this cascading effect can lead to at most one reassignment per tree node. The proof of the following theorem is in \Cref{app:tbq_stability}.

\begin{theorem}\label{thm:kd-tree-stability}
Fix a dataset $X \subset \mathbb{R}^d$ and integer $L > 0$. 
Let $T$ be a depth $L$ kd-tree built on $X$. 
Let $X'$ be the dataset after inserting or a deleting a point.
Then, every node of the new kd-tree $T'$ has changed by removing at most one point from and/or inserting at most one point into the corresponding node in $T$.
Furthermore, the only points $X$ that can change their root-to-leaf path are the maximum (resp.~minimum) element of a node $v$ which is a left (resp.~right) child in $T$.
\end{theorem}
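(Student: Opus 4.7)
The plan is to prove the theorem by induction on the depth $\ell$ of a node in the kd-tree. The inductive hypothesis at depth $\ell$ is: for every node $v$, the point set $S_{v'}$ inhabited by the corresponding node in $T'$ differs from $S_v$ by at most one insertion and at most one deletion. The base case $\ell = 0$ is immediate, as the root either gains the single inserted point or loses the single deleted point.

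For the inductive step, I fix a node $v$ at depth $\ell$ with splitting coordinate $j = j_\ell$, and let $y$ (resp.\ $z$) denote the point inserted into (resp.\ deleted from) $S_v$, with either possibly absent. Sorting the points of $S_v$ and $S_{v'}$ along coordinate $j$, the two sorted lists differ by at most one insertion of $y$ and at most one removal of $z$; consequently, for every $x \in S_v \cap S_{v'}$ the rank of $x$ along $j$ shifts by at most one between the two lists. Since the kd-tree split places a fixed-size prefix of the sorted list in the left child and the remaining suffix in the right, at most one element of $S_v \cap S_{v'}$ can cross the split boundary -- specifically, the maximum element of the current left child (the rank-$m$ element, for $m$ the left child's size) or the minimum of the current right child (the rank-$(m{+}1)$ element), along $j$. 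A short case analysis on which side of the median $y$ and $z$ land, combined with the possible crossing, then verifies that each child inherits at most one insertion and at most one deletion, closing the induction and yielding the first claim.

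The second claim is a byproduct of the same analysis: whenever some $x \in X$ changes between the left and right child of a node $v$, the analysis forces $x$ to be the maximum of $v$'s left child or the minimum of $v$'s right child along $v$'s split coordinate; since these children are precisely the ``left child in $T$'' and ``right child in $T$'' of the statement, this matches the characterization exactly. The main obstacle to handle carefully is the change in split-size when $|S_v|$ changes parity -- that is, under a pure insertion or deletion at $v$ -- since the kd-tree left-child size (e.g., $\lfloor |S|/2 \rfloor$) shifts by one and the split boundary moves simultaneously with the individual rank shifts. I would resolve this by casing on the parity of $|S_v|$ and on whether $y$ (or $z$) lies above or below the relevant median, using the specific kd-tree sizing convention. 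I would assume general position along each $j_\ell$ to keep the rank arithmetic clean, noting that ties can be absorbed by a deterministic tie-breaking rule applied identically to $T$ and $T'$.
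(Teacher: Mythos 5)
Your proposal is correct and takes essentially the same approach as the paper: induction on tree depth, with a per-node median-stability argument showing that at most one point crosses the split boundary and it must be the max of the left half or the min of the right half along the split coordinate. The paper isolates the per-node stability argument into a separate proposition (\cref{prop:stability_median}) and handles the parity/split-boundary shift obstacle you flag via the same explicit case analysis you outline, so there is no substantive difference in strategy.
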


The second challenge with the stability of median partitions is that it addresses only a single update to the quantizer. At the beginning of the stream we know that the two points that might need to be fetched from disk are those adjacent on the partition boundary. 
Once a point reassignment occurs, there would be new points adjacent on the boundary. 
They might get reassigned at the next streaming update, introducing other points at the decision boundary, and so on. To maintain efficient quantizer updates over an infinite stream of inserts and deletes, we need a way to keep track of which points are adjacent on the evolving partition boundary of each kd-tree node.

Since the two points adjacent on the partition boundary are always the maximum of the bottom half and the minimum of the top half (where the maximum and minimum are w.r.t. the coordinate $j_\ell$ by which that kd-tree node partitions the data points), the natural way to keep track of them under data updates is with priority queues. Indeed, a priority queue precisely keeps track of the minimum/maximum element in a set under insertions and deletions. 
Nonetheless, we cannot just use vanilla priority queues, as they would occupy too much space on the main memory. Rather, we need to design memory/disk efficient priority queue. We do so in the next section.

\begin{wrapfigure}
{R}{0.4\textwidth}
\begin{center}
\includegraphics[width=0.4\textwidth]{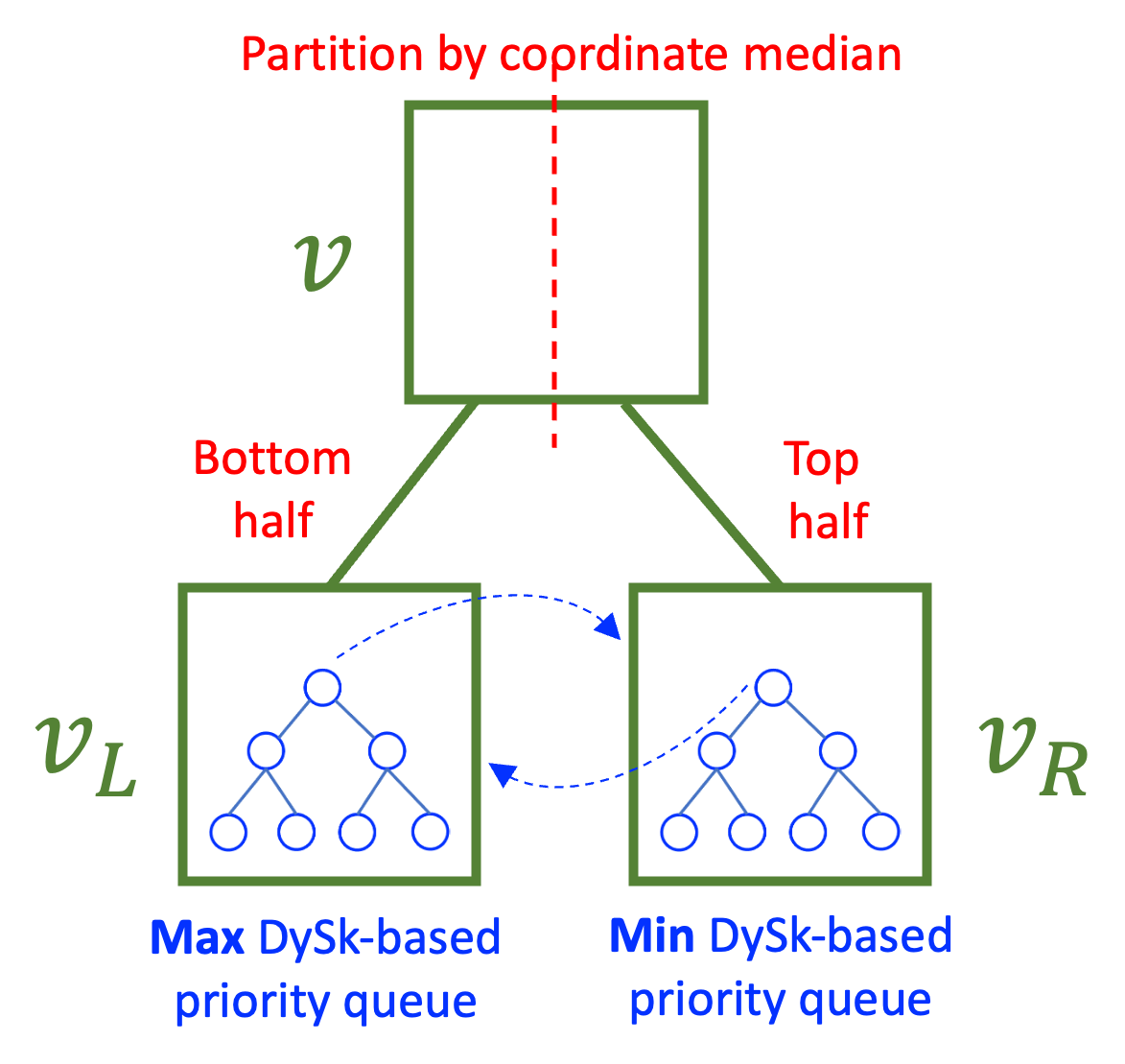}
\caption{\ouralg\ single node update. A parent node $v$ in the kd-tree partitions its cluster along a coordinate median. The left and right children $v_L,v_R$ maintain a DySk-based max and min priority queue, respectively.
Upon insertion/deletion to $v$, the coordinate median changes, and either the maximum element in $v_L$ moves to $v_R$, or the minimum element in $v_R$ moves to $v_L$. This may cause further cascading updates in other nodes (see \Cref{{update-tbq}}).
}
\label{fig:kdtreesmall}
\end{center}
\end{wrapfigure}

\subsection{Disk-Efficient Priority Queues}\label{sec:tbq_priority}

A priority queue  is a classical data structure for maintaining the maximum\footnote{We describe a maximum priority for concreteness. A minimum priority queue is similar.} priority element in a set of points under data updates. 
Formally, it maintains a set of elements $X = \{x_1, \dots, x_n\} \subset U$ from a strict totally ordered universe $U$.
The goal of the data structure is to answer a single query: return $(x^*, i^*) = (\max_{i} x_i, \argmax_{i} x_i)$.

For ANN quantization we need a disk/memory efficient priority queue. This means that \emph{(i)} the priority queue should minimize main memory utilization, maintaining most of its data on the remote disk, and \emph{(ii)} data updates should be disk efficient, minimizing I/O dependency and size.

Existing priority queue approaches in external memory models (e.g., \cite{kumar1996improved,fadel1999heaps,eenberg2017decreasekeys,jiang2019faster}) typically require many dependent I/Os. This makes them unsuitable for streaming quantization where minimizing dependent I/O roundtrips is critical for update efficiency.
Loosely described, existing priority queues maintain a binary heap tree on disk and its root (the maximum or minimum element) in the main memory. When a point is inserted or deleted, they traverse up or down the tree to propagate the update, fetching from the remote disk the appropriate node on the traversal path. In many block size regimes, this leads to a logarithmic number of dependent I/Os -- each I/O needs the previous one to complete before it can begin -- which causes prohibitive latency in modern ANN settings.

As an alternative, we develop memory/disk efficient priority queues with a constant number of dependent I/Os per update. The main result for this section is stated next and proved in \Cref{app:tbq_priority}.

\begin{theorem}\label{lem:max_DDS}
There is a dynamically consistent priority queue with the following properties: 
    \begin{CompactItemize}
        \item \textbf{Main memory space:} $b + \log N$ bits of space, where $b$ is the number of bits needed to store a single numerical value.
        \item \textbf{I/Os per-update:}
        A sequence of three read I/Os of size $O(T)$, $ O(T)$, $O(T\log n)$ words and a single write I/O of size $O(T\log n)$ words where $T = \log N$.
        \item \textbf{Update and Query time:} $O(\log n\log N)$ and $O(1)$, respectively.
    \end{CompactItemize}
\end{theorem}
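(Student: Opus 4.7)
The plan is to realize the priority queue as a standard binary max-heap stored as an array on disk, so that the address of heap-position $i$ is a fixed function of $i$ and parent/child relations are computable in memory. Main memory holds only the root's value, its external identifier, and the current size $n$, which totals $b + O(\log N)$ bits and lets the return-max query be answered directly in $O(1)$. The central trick is to exploit the DySk model's multi-address read/write primitive so that an entire root-to-leaf path of $O(\log n)$ nodes can be fetched or committed in one batched I/O of size $O(T \log n)$, rather than one node per I/O as in classical external-memory heaps; this is what enables $O(1)$ dependent I/Os per update.

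For insert$(x)$, the insertion position $k = n + 1$ is known from memory, so the full ancestor sequence $1, 2, \ldots, \lfloor k/2 \rfloor, k$ is determined without any disk access. I would issue one batched read of size $O(T \log n)$ over these addresses, perform the classical sift-up of $x$ entirely in memory, then commit via one batched write of size $O(T \log n)$, and finally refresh the in-memory root and increment $n$. For delete of an element at a position $k$ (including delete-max at $k=1$), the plan is to use the three reads in the theorem statement: a first $O(T)$ read fetches the last leaf at position $n$ (the replacement value), a second $O(T)$ read fetches local context at position $k$ (its current value together with its parent and children) to determine in memory whether a sift-up or a sift-down is required, and a third $O(T \log n)$ read fetches the actual path to be modified in one batched I/O. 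The sift is then performed in memory and committed by a single write I/O of size $O(T \log n)$.

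The main obstacle is the sift-down case of delete, because the sift-down path depends on which child is larger at every level, information not present in memory before the third read. My plan to overcome this is to augment each internal heap node with a short pointer indicating which of its two children currently holds the larger value; these pointers are maintained at no extra I/O cost, since every update already reads and rewrites the full affected path. With this augmentation, once step two has determined that a sift-down is required, we can walk the pointer chain in memory starting at position $k$ to list the $O(\log n)$ addresses that form the sift-down path, and then retrieve them in the third batched read. Dynamic consistency is then immediate: the only query (return the max) depends solely on the current set of elements, so any correct priority queue implementation is trivially answer-state consistent in the sense of \Cref{def:dynamicallyconsistent}. The update time bound is dominated by $O(\log n)$ in-memory sift-up/sift-down comparisons and the assembly of $O(\log n)$ disk addresses of $O(\log N)$ bits each, yielding the claimed $O(\log n \log N)$ total; the query time is $O(1)$ by reading the cached root. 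The hardest single step in carrying out this plan is verifying that the larger-child pointers (or whatever equivalent metadata is chosen) can be refreshed consistently within the same $O(T \log n)$ write I/O used for the path update, so no auxiliary I/Os are triggered.
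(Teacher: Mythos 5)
Your high-level plan is sound and matches the paper's overall strategy (a disk-resident heap, a constant number of batched root-to-leaf reads, in-memory sift, one batched write). You have also correctly identified the crux: the sift-down path for a delete is data-dependent and not computable from memory in advance. However, your proposed fix does not close that gap.

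You augment each internal node with a single larger-child pointer and then claim ``we can walk the pointer chain in memory starting at position $k$ to list the $O(\log n)$ addresses that form the sift-down path.'' But the pointers live on disk, not in memory: after your second read you know only $k$'s own pointer, hence only the first step $c_1$ of the path. To learn $c_2$ you must read $c_1$'s pointer, and so on. Walking the chain therefore costs $\Theta(\log n)$ \emph{dependent} reads, which is exactly what the theorem forbids. Nothing about the fixed-array addressing helps, because what's unknown is the \emph{contents} of the pointers, not the addresses; and prefetching all descendants of $k$ to depth $\log n$ would be $\Theta(n)$ nodes, not $O(\log n)$.

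The paper's solution is strictly stronger than a one-hop pointer: it stores at every node $v$ the \emph{entire} deletion-path address sequence $(\heapaddr_u)_{u\in P_v}$ for the full root-through-$v$-through-larger-children-to-leaf path. Then one read of $v$'s metadata reveals the whole sift-down path at once, enabling the third batched read. This in turn requires additional machinery you would also need to supply: the paper shows that an update can only change the deletion paths of nodes on the affected path $P$, and that each changed path takes the clean form of being diverted at some node to a sibling and then following that sibling's deletion-path suffix; this is why the reads fetch not just $P$ but the ``off-by-one tree'' of siblings of $P$, so the new suffixes can be copied in. Your single-pointer scheme, by contrast, is easy to maintain but cannot be queried with $O(1)$ I/Os. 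Finally, a smaller omission: you never say how an arbitrary delete locates its heap position $k$ from only the in-memory state; the paper stores each point's heap address at its per-point disk address $\texttt{addr}(i)$, which is the purpose of its first read I/O.
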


Here we give an overview of the construction and proof. 
Our priority queue maintains the maximum element in main memory, so queries can be answered in $O(1)$ time without disk access. On the disk, we store a maximum heap tree, which we recall is a balanced binary tree with a one-to-one correspondence between nodes and values in $X$, such that the value at each node is larger than the values at both its children. 

To bound the I/O cost of insertions and deletions we define the notion of \emph{insertion/deletion paths}. The \emph{insertion path} $P_+$ of the heap runs from the root to the first vacant leaf. For each tree node $v$, its \emph{deletion path} $P_v$ runs from the root to $v$, then proceeds downward to the larger of the two children, until reaching a leaf. 

Insertion and deletion paths have the following useful properties: 
\emph{(i)} It can be observed that insertion only impacts the insertion path $P_+$ and the deletion paths $\{P_u\}$ of the nodes $u$ that reside on $P_+$. Similarly, the deletion of a value stored in a node $v$ only impacts $P_+$, $P_v$, and the deletion paths $\{P_u\}$ of nodes $u$ that reside on $P_+$ or $P_v$. 
\emph{(ii)} Crucially, the impact on an impacted insertion/deletion path takes a clear and convenient form: the path is diverted at some node to its sibling, then proceeds with the sibling's deletion path. Therefore, updating an impacted insertion/deletion path boils down to copying over a sibling path suffix to override the current path suffix. 

Due to these structural properties, in order to update the priority queue, we do not need to traverse the heap tree level by level with many consecutive I/Os. Instead, we can fetch from the disk at once the insertions and deletion paths that might be impacted by the update (those mentioned above), and their siblings' deletion paths in order to effectuate the necessary updates by copying over sibling path suffixes as needed. 

As we know in advance all the insertion and deletion paths we might need for the update, we can fetch them from the disk with $O(1)$ consecutive I/Os. Furthermore, the total size of those paths is polylogarithmic, and hence those disk I/Os are of small size. 
The process is depicted (for deletion) in \Cref{fig:heapdel}, with full formal details provided in \Cref{app:tbq_priority}.

\begin{figure*}[t]
\begin{center}
    {\includegraphics[width=0.12\linewidth]{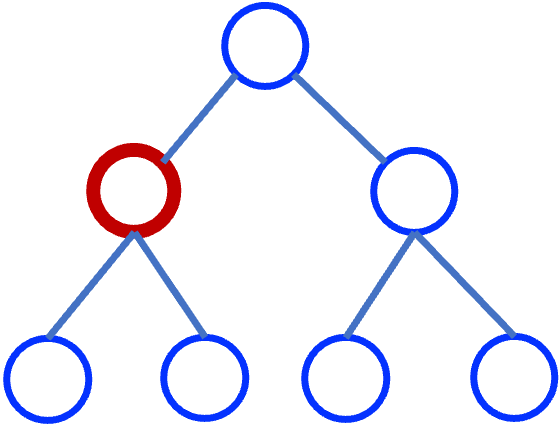}}
\hfil
    {\includegraphics[width=0.12\linewidth]{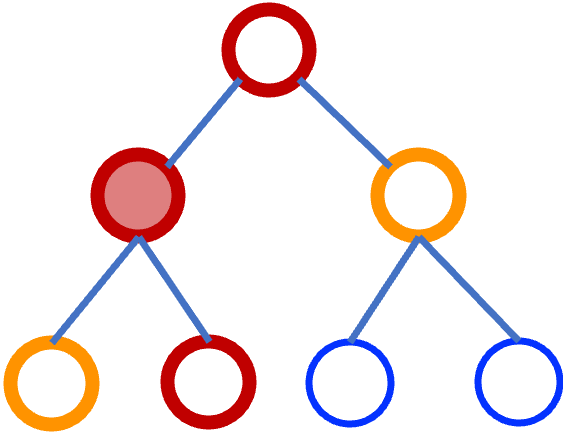}}\hfil
    {\includegraphics[width=0.12\linewidth]{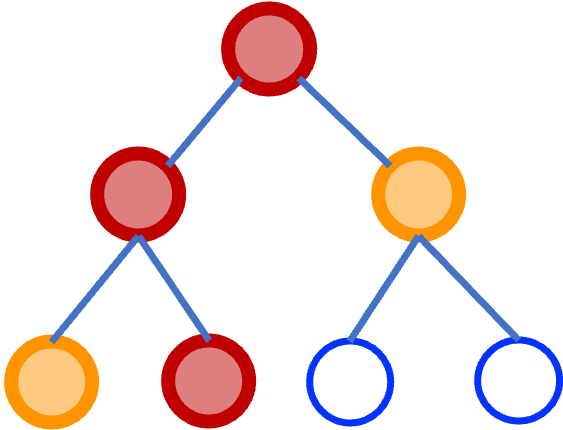}}\hfil
    {\includegraphics[width=0.12\linewidth]{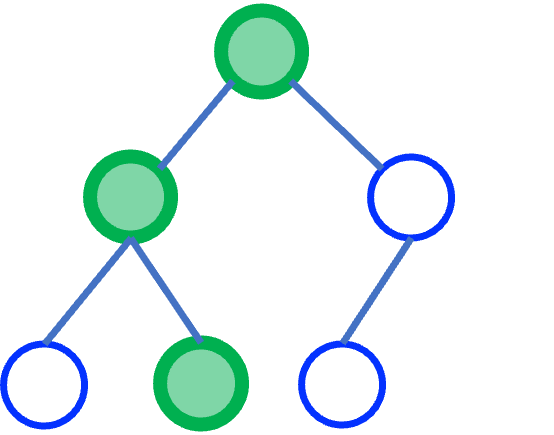}}
\caption{Heap deletion I/Os in the DySk model. Let $x_i$ be the point to be deleted. Disk access, left to right: \textbf{1st I/O:} read from $\texttt{addr}(i)$ the \textcolor{BrickRed}{node address} $\texttt{a}_v$ in which $x_i$ is currently stored in the heap. \textbf{2nd I/O:} read from $\texttt{a}_v$ the addresses $\{\texttt{a}_u\}$ of the \textcolor{BrickRed}{nodes on the deletion path} $P_v$ and the \textcolor{orange}{nodes sibling to them}. There are $\leq2\log n-1$ such nodes. \textbf{3rd I/O:} read the contents of all addresses $\{\texttt{a}_u\}$ retrieved in the 2nd I/O. \textbf{4th I/O:} \textcolor{Green}{Write back to disk} the updated node contents and deletion paths.}
\label{fig:heapdel}
\end{center}
\end{figure*}

\subsection{Putting Everything Together}

We now give the full \ouralg\ algorithm. Pseudocode is included in the appendix. 

As a preprocessing step, we perform a standard (data-oblivious) random rotation of the data. Since a kd-tree partitions the data points across randomly chosen coordinates, a random rotation helps neutralize differences in variance between them.\footnote{See \Cref{app:rr} for details on the oblivious random rotation step.} The resulting \ouralg\ build algorithm is:

\begin{CompactEnumerate}
\item Randomly rotate the dataset $X$ to obtain a dataset $X'$.
\item Build a depth $L+1$ kd-tree on $X'$.
\item For each node $v_l$ that is a left child, store a DySk based max priority queue on the points associated with $v_l$. 
Similarly, for each node $v_r$ that is a right child, store a DySk based min priority queue on the points associated with $v_r$. This is depicted in Figure \ref{fig:kdtreesmall}. 
\item For each leaf $v$, compute the mean $\mu_v$ of all points in $X$ (un-rotated) that land in $v$ after rotation.
\end{CompactEnumerate}
The leaf mean $\mu_v$ will serve as the quantized proxy for distance queries for every $x_i$ that lands in the leaf $v$.
The memory-resident data in~\ouralg~is the $L$-bit encoding per point, and a ``codebook'' of size $2^L d\bits$ to store all leaf averages, for a total of $nL + 2^L d \bits$, as standard in ANN quantization. Note that $2^L$ is the number of clusters (the analog of $k$ in $k$-means), so notwithstanding the exponential notation $2^L$, the memory size is the same as PQ for the same quantization resolution. 

The algorithm for inserting or deleting a point $x$ is as follows:
\begin{CompactEnumerate}
    \item Insert $x$ to or delete $x$ from every node on its root-to-leaf path in the kd-tree. 
    \item A node $v$ in the kd-tree is called \emph{unbalanced} if the number of points in its left child differs from the number of points in its right child by more than one. The foregoing insertions/deletions may cause some tree nodes to become unbalanced.
    \item While there is an unbalanced node $v$ in the kd-tree, let $\ell$ be its level and $j_\ell$ the coordinate it splits by. Rebalance $v$ by either reassigning the largest point from the left child to the right child, or the smallest point from thr right side to the left child, where the ordering of the points by coordinate $j_\ell$. Reassigning a point requires reading its full precision from the disk, since its new location in the tree after reassignment may depend any of its coordinates. 
\end{CompactEnumerate}

Note that inserting/deleting a point from a tree node $v$ means also inserting/deletion it from the min/max heap stored in $v$.

\subsection{Theoretical Guarantees}\label{sec:codeq_theory}
We recall notation and introduce some new notation:
\begin{CompactItemize}
    \item $n$: The number of points currently in the data structure.
    \item $d$: Data dimension.
    \item $N$: Upper bound on the number of points in the data structure at any point of its lifetime.
    \item $b$: Number of bits required to store a single numerical value in the underlying system.
    \item $L$: A user-chosen parameter that determines the quantization resolution. 
\end{CompactItemize}
each data point would be quantized into an $L$-bit representation. Note that $2^L$ is the analog of $k$ in $k$-means, thus a $2^L$ term is linear in the number of clusters.

All full proofs from this section are in \Cref{app:tbq}. 
Our first result is the efficiency of \ouralg.

\begin{theorem}\label{thm:tbq}
\ouralg\ can be implemented with: 
\begin{CompactItemize}
        \item \textbf{Main memory space:} $nL +  2^L d \bits + 2^{L+1}(2\log N + d\bits)+ d^2b$ bits of space.
        \item \textbf{I/Os per-update:}
        A sequence of three read I/Os of size $O(T)$, $ O(T)$, $O(T\log n)$ words and a single write I/O of size $O(T\log n)$ words where $T=2^L\max\{d,L, \log N\}$.
        \item \textbf{Update time:} $O(d^2 + 2^L \log n \max\{d, L, \log N\})$ time.
    \end{CompactItemize}
\end{theorem}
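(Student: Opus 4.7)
The theorem is a three-part bookkeeping statement (main memory size, I/Os per update, update time), so I would prove each clause separately by tracing through the construction of \ouralg\ and appealing to the bounds established for the priority queue in \Cref{lem:max_DDS} and for the cascading updates in \Cref{thm:kd-tree-stability}.

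For the memory bound, I would simply tally the components: (i) the $L$-bit leaf ID stored per point gives $nL$ bits; (ii) the codebook of $2^L$ leaf means, each a $d$-dimensional vector at $\bits$ bits per coordinate, contributes $2^L d \bits$ bits; (iii) each of the $\le 2^{L+1}$ kd-tree nodes stores the main-memory portion of its priority queue (which is $\bits + \log N$ bits by \Cref{lem:max_DDS}), a running coordinate-sum of $d\bits$ bits for recomputing leaf means incrementally, and a point count of $\log N$ bits, giving the $2^{L+1}(2\log N + d\bits)$ term; and (iv) the oblivious random rotation matrix stored for preprocessing contributes $d^2 b$ bits. Summing these matches the stated bound.

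For the I/O bound, I would combine \Cref{thm:kd-tree-stability} with \Cref{lem:max_DDS}. By the stability theorem, a single insert/delete changes at most one point per kd-tree node, so at most $O(2^L)$ priority queue updates are triggered across the tree. Each individual priority queue update follows the rigid pattern of \Cref{lem:max_DDS}: three read I/Os of sizes $O(\log N), O(\log N), O(\log N \log n)$ and one write I/O of size $O(\log N \log n)$ words. The plan is to execute all $O(2^L)$ priority queue updates in parallel over the same three read rounds and single write round, multiplying each round's I/O size by the $2^L$ factor. The reassignments additionally require fetching the full precision ($d\bits$ bits each) of each reassigned point from disk to determine its new placement, contributing a further $O(2^L d)$ words to the read I/O. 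All contributions are absorbed into $T = 2^L \max\{d, L, \log N\}$. For the update time, the random rotation costs $O(d^2)$; the $O(2^L)$ priority queue updates cost $O(\log n \log N)$ each by \Cref{lem:max_DDS}; and per-node incremental bookkeeping (updating sums and counts) costs $O(d)$ at each of the $O(2^L)$ affected nodes; combining these gives $O(d^2 + 2^L \log n \max\{d, L, \log N\})$.

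The main obstacle is justifying the parallel batching of priority queue updates into $O(1)$ dependent I/O rounds rather than $O(\log n)$ per queue (which would cascade disastrously across the tree). Within a single queue, \Cref{lem:max_DDS} already resolves this by identifying the insertion/deletion paths up front. Across queues, I would argue that by \Cref{thm:kd-tree-stability}, the only candidates for reassignment at each node are the maximum of a left child or minimum of a right child, and these are maintained in main memory by the priority queues themselves. Consequently, the full schedule of which queues to update and which root-to-leaf paths to fetch from disk can be computed entirely in main memory before the first disk read is issued, which is what allows us to parallelize the $O(2^L)$ priority queue updates across the same constant number of dependent I/O rounds.
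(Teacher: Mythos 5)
Your overall decomposition (tally memory, invoke \Cref{thm:kd-tree-stability} to bound reassignments, batch the priority-queue updates of \Cref{lem:max_DDS}) matches the paper's approach, and the arithmetic works out. However, there is a genuine gap in how you account for the $d\bits$ term per node, and it undermines the key step of your I/O argument.

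You attribute the $d\bits$ per node to a ``running coordinate-sum'' for incrementally recomputing leaf means, and you separately claim that the candidates for reassignment ``are maintained in main memory by the priority queues themselves.'' Neither claim is right. \Cref{lem:max_DDS} stores only $b + \log N$ bits per queue: a \emph{single numerical value} (the one-dimensional comparison key used for median ordering at that node) and an identity. It does \emph{not} store the full $d$-dimensional point. But to compute in main memory the entire cascade of reassignments before touching disk --- which is exactly what your batching argument requires --- you need the full rotated representation of the boundary element: once that point is reassigned from one child to the other, its descent through the opposite subtree depends on coordinates $j_{\ell+1}, j_{\ell+2}, \dots$, which the one-dimensional comparison key does not contain. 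Without those coordinates in memory, determining the next reassignment requires a disk read, and the cascade forces $\Theta(L)$ dependent I/O rounds rather than $O(1)$. The paper resolves this by explicitly augmenting the priority queue at each node with the \emph{full $d$-dimensional rotated vector} of its max/min element, and this is precisely what the $d\bits$ per node pays for. Your running coordinate-sum is both insufficient (it cannot substitute for the full boundary vector when computing the reassignment schedule) and unnecessary (the leaf means in the codebook plus the per-node counts already suffice to update means once the reassigned point's full representation is available). So your memory breakdown and your batching argument are mutually inconsistent: the resource your batching step silently relies on is not the one you budgeted. Replacing the coordinate-sum with the full rotated max/min vector, as the paper does, closes the gap and makes the rest of your argument go through.
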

\begin{proof}[Proof sketch]
    We start with main memory size. Each point is quantized into $L$ bits, specifying the ID of the kd-tree leaf (i.e., the cluster) with which it is associated. This leads to the $nL$ term. The codebook contains the mean of each cluster, which is the point that serves as the quantized proxy for points in that cluster. Since there are $2^L$ clusters, and each mean takes $db$ bits to solve, this leads to the $2^Ldb$ term. By Lemma \ref{lem:max_DDS}, each priority queue takes $\log N+b$ bits in the main memory. Since the root point in the priority queue might get reassigned to other kd-tree nodes during an update, we keep its full representation (size $db$) in memory, as it is a negligible increase (dominated by the codebook size) and saves a disk I/O during updates. Since we have a priority queue per internal kd-tree node, this leads to the $2^{L+1}(\log N+db)$ term. Finally, storing the random rotation matrix takes $d^2b$ bits. This accounts for main memory usage. The I/Os per update and update time are inherited from the priority queues in Lemma \ref{lem:max_DDS} and the stability bound in Theorem \ref{thm:kd-tree-stability}. 
\end{proof}
We note the following important features of this theorem:
\begin{CompactItemize}
    \item The main memory term is governed by $nL$, i.e., $L$ quantized bits per point. The other low-order terms are for storing the codebook and additional auxiliary information (similarly to PQ and most other quantization methods).
    \item Disk access uses only $O(1)$ consecutive I/Os, each of size only polylogarithmic in $N$.
    \item Update time is only polylogarithmic in $N$. 
\end{CompactItemize}
We remark that Theorem \ref{thm:tbq} states the the efficiency bounds for a single \ouralg\ quantizer (one kd-tree). In a product \ouralg\ scheme with $m$ subvectors, we of course maintain $m$ copies of the above theorem, each with dimension $d/m$. 

Our second result is that \ouralg\ is dynamically consistent as defined in Definition \ref{def:dynamicallyconsistent}.

\begin{theorem}\label{thm:tbq_consistent}
    \ouralg\ satisfies dynamic consistency.
\end{theorem}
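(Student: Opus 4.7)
The plan is to show that for any fixed random seed -- which fixes both the random rotation matrix and the kd-tree coordinate sequence $(j_1, \ldots, j_L)$ in every product block -- the answer state of \ouralg\ after a Build followed by any single Insert or Delete equals the answer state after a fresh Build on the updated dataset. The answer state is determined entirely by (i) the leaf identifier assigned to each point and (ii) the codebook of leaf means, so it suffices to show that the partition of the rotated dataset $X'$ into kd-tree leaves, and hence the leaf means, coincide in both scenarios. Once a single update is handled, dynamic consistency for arbitrary update sequences follows by induction on the sequence length.

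First, I would observe that given the fixed random seed, the static Build produces a uniquely determined partition: the rotated dataset $X'$ is fixed, and at each kd-tree node $v$ at level $\ell$ the partition splits the points associated with $v$ into the lower and upper halves by coordinate $j_\ell$, with a consistent tie-breaking rule (e.g., by point ID). Proceeding by induction on the tree level, each node's set of points and each leaf's mean are uniquely determined by $X'$ and the random seed. So it suffices to show that the dynamic update restores exactly this canonical configuration.

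Second, I would compare the pre-update tree built on $X$ with the canonical tree on $X' = X \cup\{y\}$ or $X' = X \setminus\{x\}$. By \Cref{thm:kd-tree-stability}, the per-node sets in these two trees differ by at most one insertion and/or one deletion per node, and the only points that migrate between root-to-leaf paths are extremal elements at sibling boundaries. The \ouralg\ update procedure first walks the inserted/deleted point down its root-to-leaf path in the pre-update tree, which applies exactly the pointwise changes permitted by \Cref{thm:kd-tree-stability}. The subsequent rebalancing loop repeatedly takes an unbalanced node and moves the appropriate boundary extremum across its sibling cut. The extremum is the maximum (resp.\ minimum) of the left (resp.\ right) child under coordinate $j_\ell$, retrieved from the DySk priority queue at that node; because those priority queues are themselves dynamically consistent by \Cref{lem:max_DDS}, the element returned is exactly the one a fresh build would place on the boundary. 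Hence each rebalancing step performs the unique reassignment identified by \Cref{thm:kd-tree-stability}, and the loop terminates in the canonical configuration, node by node.

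Third, once the leaf point-sets coincide with those of the fresh build, the algorithm recomputes each affected leaf mean from its (now correct) set of assigned points, so the codebook and the $L$-bit leaf ID stored for every point match those of the canonical build. The data-oblivious random rotation is applied identically in both scenarios, so the answer states agree. The main obstacle is ensuring that the rebalancing loop reaches the \emph{correct} balanced configuration rather than some other balanced one; this is where \Cref{thm:kd-tree-stability} does the heavy lifting, by pinning down both the number of reassignments (at most one per node) and their identity (an extremum at the boundary), which forces the loop into the unique canonical outcome. A secondary but routine issue is that median ties must be broken by the same rule (e.g., point ID) in both the static Build and the dynamic update, so that the canonical partition itself is well-defined.
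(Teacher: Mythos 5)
Your proposal is correct and follows essentially the same approach as the paper's proof sketch: the kd-tree partition is kept canonical via the rebalancing driven by \Cref{thm:kd-tree-stability}, the priority queues are dynamically consistent by \Cref{lem:max_DDS}, and the leaf means (and hence the codebook) follow because the mean is a linear function of the leaf point-set. You have elaborated the details (fixed-seed canonical build, induction over updates, tie-breaking by ID) more explicitly than the paper's terse sketch, but the underlying argument and key lemmas invoked are identical.
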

\begin{proof}[Proof sketch]
    The re-balancing of the median partition in all kd-tree nodes upon each \ouralg\ update guarantee that the kd-tree remains dynamically consistent. The dynamic consistency of the priority queues is by Lemma \ref{lem:max_DDS}. The mean of each cluster in the codebook is trivial to update and keep dynamically consistent upon updates (without any disk access required) since the mean is a linear function. 
    These are all the updateable components in \ouralg.
\end{proof}
\section{Experiments}\label{sec:experiments}

\begin{table}[hb]
\centering
\caption{Datasets and quantization settings}
\begin{tabular}{lrrrr}
\toprule
\multicolumn{3}{c}{\emph{Datasets}} & \multicolumn{2}{c}{\emph{Quantization}}\\
\cmidrule(lr){1-3} \cmidrule(lr){4-5}
\textbf{Source} & \textbf{Size} & \textbf{Dim.} & \textbf{Blocks} & \textbf{Bits/dim.} \\
\midrule
DEEP \cite{deep1b} & 100,000,000 & 96 & 8 & 1.0 \\
BigANN \cite{bigann} & 100,000,000 & 128 & 8 & 0.75\\
Text2Image \cite{deep1b} & 100,000,000 & 200 & 10 & 0.6\\
\bottomrule
\label{tab:data}
\end{tabular}
\end{table}

We empirically study the performance of product \ouralg\ in terms of recall performance under drift and disk I/O costs at varying scales. We first define a simple framework for constructing drifting vector search workloads from static benchmarks. We generate drifting scenarios for three popular benchmark datasets and evaluate the recall performance of \ouralg\ against several baselines. Finally, we compare disk communication costs between \ouralg\ and heuristic alternatives for PQ.

\paragraph{Methods.} We evaluate \ouralg\ against three categories of baseline: data-oblivious quantization, data-dependent but static quantization, and data-dependent dynamic quantization. We use LSHForest \cite{bawa2005lsh} for the oblivious baseline and product quantization (PQ) trained only on the initial dataset (``FrozenPQ") as the static baseline. Disk-dependent methods perform an update that requires previously seen full-precision data and thus communication with disk. We adapt a recent proposal for streaming IVF, DeDrift \cite{dedrift}, to the PQ setting by applying its heuristic k-means update to each of the PQ blocks. The DeDrift update assigns new data to existing clusters, then re-clusters all vectors belonging to the largest $m$ clusters by membership, where $m$ is small relative to $k$.

\paragraph{Metrics.} Our main metric for search performance is 
recall-$10@10$ (the fraction of true top-$10$ nearest neighbors also ranked in the top-$10$ by the quantizer).
This captures our specific focus on approximate nearest neighbor search while remaining agnostic to the choice of search algorithm that is combined with quantization in practice. To measure disk I/O cost, we count the number of full-precision vectors required from disk to perform the in-memory update step of a given quantizer. Since all in-memory quantizers will at least require writing new batches to disk, it is the read probe size that determines the difference in disk communication cost across various update methods.

\begin{wrapfigure}
{R}{0.45\textwidth}
\begin{center}
\includegraphics[width=0.45\textwidth]{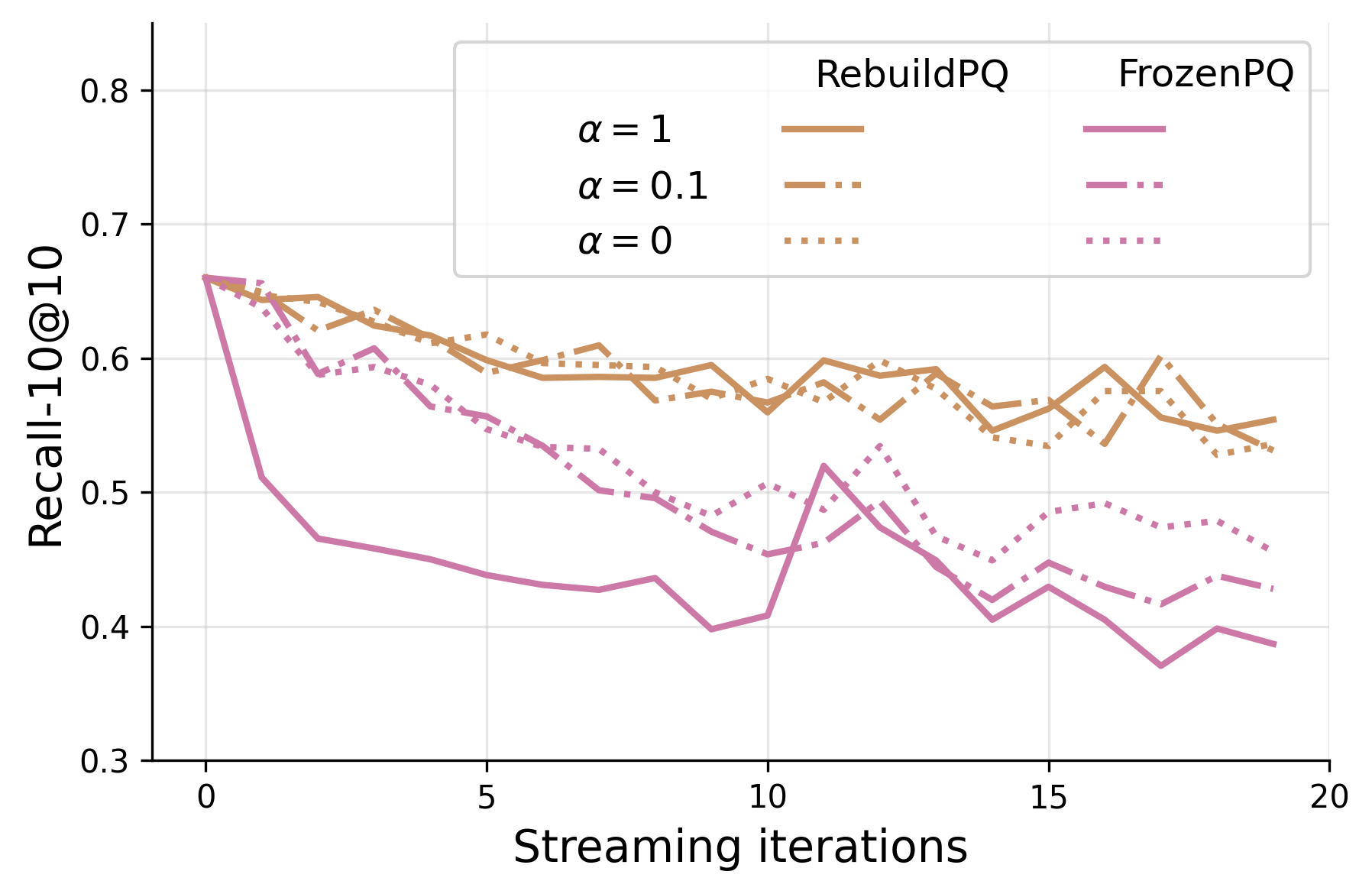}
\caption{Recall of static (“FrozenPQ”) vs.~fully-retrained (“RebuildPQ”) quantizers for the Deep-\emph{drift} dataset. The gap in performance depends on both data drift and query freshness, parameterized by $\alpha$. As $\alpha$ rises (fresher queries), the gap grows.}
\label{fig:freshness}
\end{center}
\end{wrapfigure}

\paragraph{Construction of streaming datasets.}

A streaming vector search workload can be modeled as a sequence of (update, query) tuples  $\{(U_t, Q_t)\}_{t=0}^T$, where at each time $t$ we apply the update $U_t$ to an existing dataset $X_{t-1}$, then search the updated dataset $X_t$ for the nearest neighbors of each element of $Q_t$. A valid update consists of inserts $I_t \subset \mathbb{R}^d \setminus X_{t-1}$ and deletes $D_t \subseteq X_{t-1}$. In each iteration, we obtain $X_{t+1}$ as $X_{t+1} = (X_t \setminus D_t) \cup I_t$.

Despite the practical interest of evaluating vector search in the streaming setting, there are few options for benchmark datasets with an intrinsic notion of streaming order and a relevant level of drift. Therefore we define a simple protocol to construct such a scenario from any static benchmark for vector search, similar to recent proposals in \cite{aguerrebere2024locally, mohoney2024incremental,simhadri2024results}. The full dataset is clustered and $X_0$ is instantiated as a subset of clusters. Inserts are drawn from the remainder. After each update, we search a new set of queries. The \emph{freshness} of these queries, i.e. their relative emphasis on the most recently inserted data, is parameterized by a scalar $\alpha$; $\alpha=1$ corresponds to queries drawn entirely from the cluster inserted at time $t$, while $\alpha=0$ corresponds to uniform sampling over all clusters observed so far. See Algorithm~\ref{alg:cdss} in~\Cref{app:experiments} for full details.

This protocol generates a streaming search dataset where the distribution of the underlying dataset gradually shifts over time, requiring a dynamic solution for quantization. In Figure~\ref{fig:freshness}, we validate that this is the case by comparing FrozenPQ to a fully-retrained PQ quantizer (“RebuildPQ”) over 20 iterations on a streaming dataset constructed from a 100K sample of Deep1B \cite{deep1b}. At time $t=0$ the quantizers are identical, yet over subsequent iterations a gap in recall performance emerges as the FrozenPQ quantization becomes increasingly stale with respect to the new data and query vectors. Moreover, the recall performance of the static quantizer decays more rapidly when queries are drawn from more recent data ($\alpha>0$). 
\paragraph{Recall under data and query drift.}

 We evaluate the recall performance of \ouralg\ on streaming datasets constructed according to the protocol above. We use 100M-scale versions of three ANN benchmark datasets: Deep1B \cite{deep1b}, BigANN \cite{bigann}, and Text2Image \cite{deep1b}. For each, we target quantization at or below 1 bit per dimension (see Table~\ref{tab:data}), reflecting the typical compression range of product quantizers such as PQ or CoDEQ. Parameters for streaming dataset construction are detailed in \cref{app:experiments}. To distinguish from their static sources, we refer to the streaming versions as Deep-100M-\emph{drift}, BigANN-100M-\emph{drift}, and Text2Image-100M-\emph{drift}, respectively.

 \begin{figure*}[t]
\begin{center}
\includegraphics[width=\linewidth]{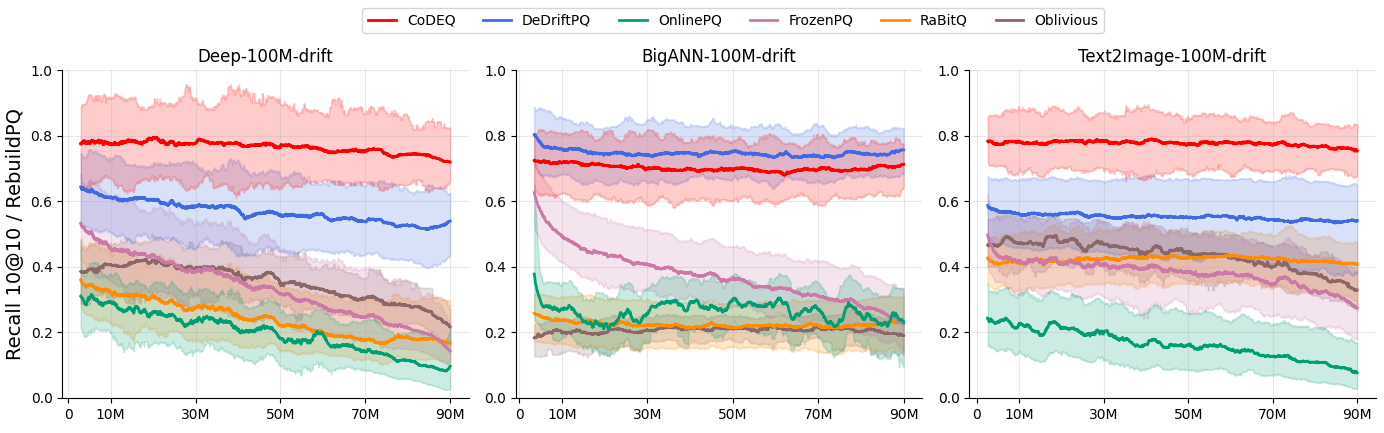}
\caption{Recall vs data streamed for the 100M-scale dynamic search scenarios. Solid lines denote the rolling median while shaded regions capture the rolling 10th-90th quantiles. Recall-10@10 is expressed as a fraction of RebuildPQ.}
\label{fig:recall}
\end{center}
\vskip -0.2in
\end{figure*}

Figure~\ref{fig:recall} reports the recall performance of \ouralg\ and baselines versus the total number of vectors streamed. By construction, the data and query vector distributions continue to drift from their initial states as the total number of vectors streamed increases. We show recall-10@10 for each quantizer expressed as a fraction of RebuildPQ, which represents an idealized but practically infeasible algorithm for streaming quantization. We plot the rolling median and shade a region representing the rolling 10th-90th quantiles. 

We observe two key features of \ouralg\ performance. First, it shows consistently superior recall over the baseline algorithms, with the exception of DeDriftPQ on BigANN-100M-drift, where the performance is comparable. Second, \ouralg\ maintains a roughly constant level of recall, despite the drifting updates and queries. By contrast, heuristic PQ updates such as OnlinePQ degrade versus RebuildPQ due to lack of dynamic consistency, while static or scalar quantization baselines are uncompetitive.

\begin{wrapfigure}
{R}{0.45\textwidth}
\begin{center}
\includegraphics[width=0.45\textwidth]{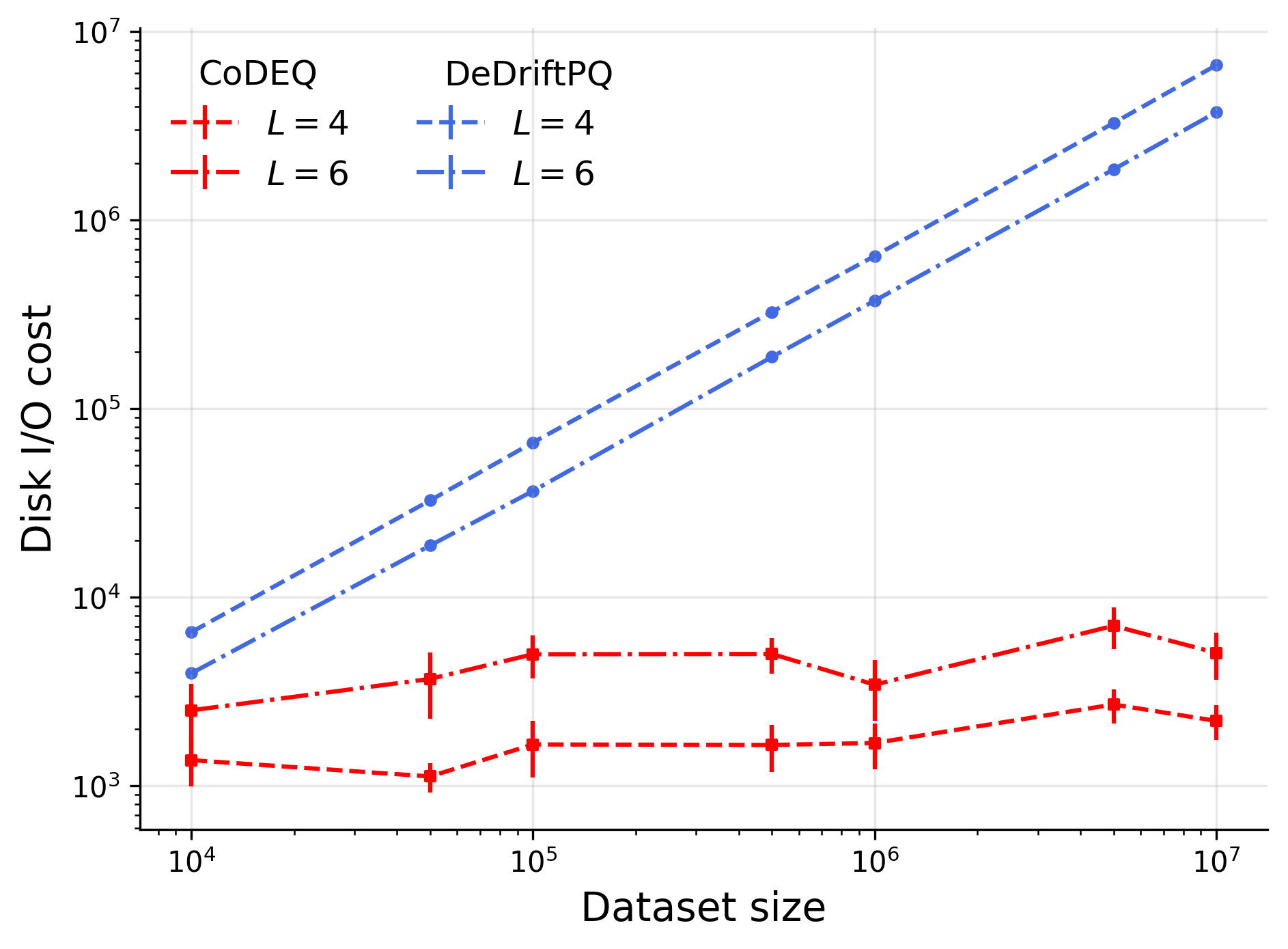}%
\caption{Disk I/O cost of a streaming update vs dataset size. Units of cost are full-precision vectors read from disk for a streaming update. Bars indicate std. error over 10 trials. }
\label{fig:iocosts}
\end{center}
\end{wrapfigure}

\paragraph{I/O costs of dynamic updates.} \ouralg\ is designed specifically for efficient communication with disk during updates. We show this empirically by measuring disk I/O cost for a single update to the quantizer and comparing directly to DeDriftPQ, the nearest competitor in terms of quantization recall performance under drift. Disk-independent baselines such as OnlinePQ, FrozenPQ, and RaBitQ have no disk I/O cost, but show substantially worse recall performance.

We sample datasets ranging over several orders of magnitude from Deep1B. For each, we initialize the quantizer on the full dataset, then compute the number of full-precision vectors required to update the quantizer when a single additional insert is made. We vary $L \in \{4, 6\}$. We set $m$ for DeDrift as a function of the $L$, specifically the smallest integer such that at least $2\%$ of clusters are reassigned. Note that in this context, smaller $m$ favors DeDrift, since this requires fewer clusters to be retrieved from disk. Results are shown in Figure~\ref{fig:iocosts}. The contrast between I/O costs illustrates the advantage of grounding \ouralg\ in the DySk model, with a hard limit on disk I/Os per update, while the DeDrift heuristic may require that a substantial fraction of the total dataset is retrieved.

\section{Extended Related Work}
\label{sec:related-work}

\paragraph{Quantization for ANN search.}
Product Quantization (PQ) \cite{vqindex2002, jegou2010product} and variants \cite{OPQ, gong2012iterative, babenko2014additive, kalantidis2014locally, matsui2018survey} are widely used for ANN search in vector databases.  
The $k$-means step in PQ scales poorly with the size and dimensionality of the data, with slow convergence in high dimensions.  
Our tree-based \ouralg\ approach replaces clustering with a dynamically consistent partitioning based on kd-trees and localizes changes to a limited number of vectors. 
Instead of maintaining explicit kd-tree partitions, we use them to quantize and represent vectors as compact codes. Quantizing the vectors in the same format as other methods simplifies incorporation of our approach into current similarity search libraries and services, such as FAISS, Milvus, NGT, PGVector, and OpenSearch  \cite{douze2024faiss, wang2021milvus, ngt, pgvector, opensearch}, which currently focus on static datasets. 

ANN quantization methods have also been designed for specific measures such as inner products \cite{guo2020accelerating} and bitwise operations \cite{gao2024RaBitQ,aguerrebere2024locally}. RaBitQ \cite{gao2024RaBitQ} quantizes points using randomly rotated bi-valued unit vectors, and is extended \cite{gao2024practical} to support more bits per dimension. 
That line of work targets larger sizes, typically $\geq1$ bit per dimension, while PQ and our method \ouralg~target a smaller size regime, typically $<1$ bit per dimension.

Quantization is commonly used to complement index structures like HNSW \cite{malkov2018efficient} and IVF \cite{jegou2010product} by enabling compressed representations in memory. Hybrid systems \cite{matsui2018survey, chen2021spann, pan2024survey, jayaram2019diskann, gou2025symphonyqg} leverage this combination for memory efficiency and avoiding I/O during queries. \ouralg~can serve as a drop-in replacement for existing quantization methods in such architectures, e.g., replacing PQ in DiskANN \cite{jayaram2019diskann} or RaBitQ in SymphonyQG \cite{gou2025symphonyqg}, and provide guarantees of dynamic consistency and update performance under streaming updates.

\vspace{-3pt}
\paragraph{Incremental clustering for ANN search.}
Recent incremental clustering methods for ANN search include Online PQ \cite{online-pq}, DeDrift \cite{dedrift}, and SPFresh \cite{spfresh}.\footnote{Of these, only Online PQ targets incremental clustering for the purpose of quantization, while DeDrift and SPFresh are designed for the IVF component of the ANN pipeline~\cite{jegou2010product}; however, they can be adapted for dynamic quantization as we implemented for our experimental evaluation.}  
Online PQ \cite{online-pq} updates cluster centers by assigning new points to the closest cluster and updating the corresponding center accordingly. However, it never reassigns cluster membership, reducing the effectiveness and adaptability of the quantization over time. 
DeDrift \cite{dedrift} updates the cluster centers as new points arrive and aims to handle the cluster imbalance by moving points from large clusters to newly created clusters. However, DeDrift does not address disk storage and the interactions between disk and main memory, which is critical for large datasets.  SPFresh \cite{spfresh} uses approximately-balanced clustering and applies a heuristic to identify vectors for cluster reassignment after updates. This heuristic may still require retrieving a large proportion of vectors, or even the entire dataset, from disk. \ouralg~addresses these limitations by introducing a dynamically consistent mechanism for quantization that minimizes the number of vectors switching partitions during updates and enables efficient synchronization between in-memory sketches and disk-resident data. 

\vspace{-3pt}
\paragraph{I/O and update complexity of ANN search.}
Goswami et al.~\cite{goswami2020complexity} study the I/O complexity of exact and approximate k-NN search 
in the external memory \cite{miltersen1999cell} and indexability \cite{hellerstein1997analysis} models, focusing on the utility of block transfers. 
They prove that polynomial space indexing schemes for exact k-NN is unable to utilize block transfers, requiring $\Omega(k)$ block reads regardless of the block size $B$, and that this bound holds in $\ell_\infty$ spaces even for approximate $k$-ANN with approximation factor $c < 3$.
Conversely, for $c\geq3$, they show that indexing with $\lceil k/B \rceil$ I/Os per query is possible for every metric in the indexability model. 
As they state, their model captures coordination between servers in a distributed search system where each query requires multiple I/Os, whereas our model is meant to capture search on each server with zero or one I/O per query. 
Kapralov~\cite{kapralov2015smooth} establishes smooth insertion-query tradeoffs for approximate nearest neighbor search with locality-sensitive hashing (LSH), with insert complexity $O(N^{\alpha^2\rho_\alpha})$ and query complexity $O(N^{(1-\alpha)^2\rho_\alpha})$, where $\rho_\alpha = \frac{c^2+(1-\alpha)^2}{4-3\alpha^2}$. This is meant to balance computational costs between query and insertion, in terms of running time, though not to save memory by quantization, as LSH indexes generally entail super-linear storage costs in memory. 
In a similar vein, Yi~\cite{yi2009dynamic}  studies dynamic range indexing, which can be seen as low-dimensional search, and proves that query cost $O(q + K/B)$ and insertion cost $O(u/B)$ must satisfy $q \cdot \log(u/q) = \Omega(\log B)$ for $q < \alpha \ln B$. 
Also relatedly, Andoni et al.~\cite{andoni2017optimal} and Christiani \cite{christiani2017framework} establish optimal time-space trade-offs for ANN, though they do not touch on dynamic updates, and again use super-linear space.

\bibliographystyle{plain}
\bibliography{refs}


\clearpage
\appendix
\section{Data-Dependent vs.~Data Oblivious ANN Quantization}\label{app:datadependent}

For background, we give a brief overview of the notion of data-dependence in ANN quantization. Let $X\subset\mathbb R^d$ be the search dataset. An ANN quantizer is a (possibly randomized) data structure $D$ that assigns to every point $x\in X$ a quantized proxy $\tilde x$, such that the proxies admit an efficient bit representation and can be fully stored in main memory. If the quantized proxy $\tilde x$ is a function of only the point it describes, i.e., $\tilde x=f(x)$, then $D$ is a \emph{data-oblivious} quantizer. 
If $\tilde x$ also depends on the rest of the dataset, i.e., $\tilde x=F(x,X)$, then $D$ is a \emph{data-dependent} quantizer.

\paragraph{Examples.}
\begin{CompactItemize}
    \item Scalar Quantization (SQ) is an example of a deterministic data-oblivious quantizer. In SQ, for every $x\in X$, the quantized proxy $\tilde x$ is generated by rounding each coordinate of $x$ to a pre-specified bit precision. This does not involve other points in $X$.
    \item Classical locality-sensitive hashing (LSH) \cite{lsh} is an example of a randomized data-oblivious quantizer.\footnote{The original and primary use of LSH in ANN is not quantization, but for pruning the search dataset at query time. However, it can also be used formally for quantization and is often used in this way.} For example, in E2LSH \cite{e2lsh}, the points are projected onto a random direction, shifted by a random shift, and then SQ is performed on the projected and shifted coordinates. Even though the projection and the shift are chosen at random, they are chosen from fixed distributions which are independent of the dataset $X$, and thus each quantized proxy $\tilde x$ is a (randomized) function only of $x$. 
    \item $k$-means is an example of a data-dependent quantization (as well as many quantizers based on it, like PQ). The quantized proxy $\tilde x$ is the centroid of the cluster that contains $X$. Since the partition into clusters and the cluster centroids depend on the entire dataset $X$, this is a data-dependent quantizer. 
    \item \ouralg\ is also a data-dependent quantizer. In \ouralg, the leaves of the kd-tree induce a clustering of $X$, and the proxy $\tilde x$ of each point $x$ is the mean of the leaf-cluster it belongs to. Since the kd-tree partition is determined by coordinate medians, which depend on the entire dataset $X$, this is a data-dependent quantizer.  
\end{CompactItemize}

\paragraph{Handling updates.}
For data-oblivious quantizers, point insertion and deletion are trivial, since removing $\tilde x_i=f(x_i)$ from $X$ does not impact $\tilde x_j=f(x_j)$ for $i\neq j$. In the terminology of \Cref{sec:model}, they are trivially \emph{dynamically consistent} without disk access. 
For data-dependent quantizers, updates create a difficulty. If a point $x_i$ is inserted, then formally, for every other $x_j\in X$, the proxy $\tilde x_j$ needs to be updated from $F(x_j, X)$ to $F(x_j, X\cup\{x_i\})$. Similarly, if $x_i$ is deleted, $\tilde x_j$ needs to be updated from $F(x_j, X)$ to $F(x_j, X\setminus\{x_i\})$. Dynamic consistency captures this requirement formally.

Dynamic consistency can be achieved trivially by building the quantizer from scratch, but such a rebuild is computationally expensive and requires reloading the full precision of all of $X$ from the remote disk. 
The meaningful challenge is to achieve dynamic consistency with limited disk access per update. This is the challenge our DySk model is aimed at formulating and our algorithms in this paper are aimed at addressing.

\section{Provably Accurate DySk-Based ANN Quantization with Dynamic Updates}\label{app:quadsketch}
In this appendix we include the omitted details from \cref{sec:quadsketch}. We formally define the $(\eps, \delta)$-dynamic approximate nearest neighbor search problem ($(\eps, \delta)$-DANN) and we describe the provably accurate and I/O efficient DySk-based dynamic data structure that solves $(\eps, \delta)$-DANN. 
\subsection{Dynamic approximate nearest neighbor search}
The \emph{static} nearest neighbor data structure problem for a the point set (dataset) $X \subset U = \mathbb{R}^d$ requires us to design a data structure $D$ that returns the nearest neighbor in $X$ to a query point $y \in \mc{F} := \mathbb{R}^d$.
Recall that such a data structure has two goals: 1) sketch (or compress) the represent of the points in $X$, and 2) return a nearest neighbor candidate that is ``accurate''.
The particular notion of accuracy that we will be interested in theoretically is the following.
\begin{definition}[$(\eps,\delta)$-ANN]
Fix $\eps, \delta \in (0,1)$.
We say that a randomized data structure $D$ solves the \emph{$(\eps,\delta)$-approximate nearest neighbor search}
\footnote{We will only consider the Euclidean distance.} 
problem if for any $X \subset \mathbb{R}^d$ and any query point $y \in \mathbb{R}^d$ we have
\[
\Pr[\|\hat x - y\|_2 \le (1+\eps)\min_{x^\star \in X}\|x^\star - y\|_2 ] \ge 1-\delta,
\]
where $\hat x \in X$ is the nearest produce by $D$ on query $y$.
Here, the probability is taken over the randomness of the data structure $D$.
\end{definition}
A sequence of work by \cite{indyk2017near,indyk2018approximate,indyk2022optimal} produces efficient algorithms that achieving strong compression guarantees for this problem.
Unfortunately, these solutions do not immediately solve the following natural extension of the $(\eps,\delta)$-ANN problem to the dynamic setting.
\begin{definition}[$(\eps,\delta)$-DANN]
Fix $\eps, \delta \in (0,1)$.
We say that a randomized dynamic data structure $D$ solves the \emph{$(\eps,\delta)$-dynamic approximate nearest neighbor search} problem if it achieves the following:
\begin{enumerate}
    \item $D$ can support an arbitrary sequence of insertion into and deletions from $X$.
    \item $D$ satisfies the $(\eps,\delta)$-ANN at any point in time. 
\end{enumerate}
\end{definition}
The main result of \cref{sec:quadsketch} (\cref{thm:quadsketch}) stated that there is a efficient and dynamically consistent data structure that solves the $(\eps,\delta)$-DANN problem. \cref{thm:quadsketch} is a special case of \cref{thm:quadsketch_general} below.
The data structure whose guarantees are stated in \cref{thm:quadsketch_general} is a modification of \emph{QuadSketch}.  
Specifically, while \cite{quadsketch} only prove in-sample guarantees for their data structure (i.e., the query $q$ must be one of the data points in $X$), it can be combined with the out-of-sample extension for quantized $(\epsilon,\delta)$-ANN from \cite{indyk2018approximate} to remove this limitation (see \cite{wagner2020efficient}, Chapter 4 for details).
The QuadSketch data structure is based on \emph{Quadtrees}.
We briefly outline the details of QuadSketch and will focus on emphasizing the aspects of QuadSketch that will be relevant for our dynamically consistent version of it in the DySk model. 
Let
\[
\Phi = \Phi(\mc{U}) = \frac{\max_{x,y \in \mc{U}} \|x-y\|_2}{\min_{\substack{x , y \in \mc{U}\\ x\not=y}}\|x-y\|_2}
\]
denote the aspect ratio of the universe $\mc{U}$ which is the ratio between the largest and smallest distance of the points in $\mc{U}$. 
We will make the natural assumptions that every vector $x$ in $\mc{U}$ satisfies $\|x\|_{\infty} \le \phi$ and that aspect ratio satisfies $\Phi \ge \phi$.

The QuadSketch data structure compresses $X$ as follows:
\begin{enumerate}
    \item Set $\Delta = 2^{\lceil \log \Phi\rceil}$. 
    Let $H'$ be a $d$-dimensional grid with side length $4\Delta$ centered at the zero vector $0 \in \mathbb{R}^d$. 
    Notice this grid contains the entire data set $X$ by our assumption on $\mc{U}$.
    \item Let $\sigma_1, \dots, \sigma_d$ be independent uniform random points from $[-\Delta, \Delta]$ and let $H$ be a random shift of $H'$ where we shift the $i$th coordinate by $\sigma_i$.
    For every integer $ -\infty < \ell \le \log_2(4 \Delta)$ we write $G_\ell$ to denote the axis-parallel grid with cell side $2^\ell$ which is aligned with $H$.
    Notice that this shifted grid still contains all points in $X$.
    \item A $2^d$-ary quadtree is built on the (nested) grids $G_\ell$ as follows: for every cell $C \in G_\ell$ with a tree node at level $\ell$ such that its children are the $2^d$ grid cells in $G_{\ell-1}$ that are inside the cell $C$. 
    \item The edge connecting $v$ to its child $v'$ is labelled with the string of $d$ bits as follows: the $j$th bit is $0$ if $v'$ coincides with the bottom half of $v$ along coordinate $j$ and $1$ otherwise.
    \item The data structure defines such a quadtree by associating the root with $H$ and recursively building the tree.
    Crucially, when we are constructing the tree we only include children nodes if their corresponding cell contains at least 1 point and we stop building the tree at a (to be defined) level $L$.
    The final tree is denoted $T$.
    \item Let $\Lambda = \log(\eps^{-1}\delta^{-1}16d^{1.5}\log \Phi)$.
    We say that a downward path $P = (v_0, v_1 , \dots, v_l)$ in $T$ satisfying $\deg(v_1) = \dots =  \deg(v_{k-1}) = 1$ and $\deg(v_0) , \deg(v_k) \ge 2$ is \emph{maximal} if $k \ge 1$, where $k$ is the length of the maximal path. Set the depth of the tree to $L = \log \Phi + \Lambda$.
    \item We apply \emph{middle-out compression} to every maximal path $P$ in $T$ of length $k \ge 2\Lambda + 1$: we remove the nodes $v_{\Lambda+1} \dots, v_{k-\Lambda}$ with all their adjacent edged (and labels) and connect $v_\Lambda$ directly to $v_{k-\Lambda+1}$.
    We call such an edge \emph{long}.
    Moreover, we label every long edge with its edge length (based on how many points were removed).
    \item Denote by $T'$ the final compressed tree defined by the previous steps. A simple argument (see \cite{quadsketch}) shows that this sketch can be stored exactly in memory using $O(nd\Lambda + n\log n)$ bits of space.
    \end{enumerate}
The dataset $X$ is a subset of the universe $\mc{U}$ with $|\mc{U}| \le N$, so a standard application of the Johnson-Lindenstrauss lemma tells us that we can assume that $d = O(\frac{\log N}{\eps^2})$ \cite{nelson2020sketching}, Chapter 5.
The following result shows how we can adapt QuadSketch to our setting.
\begin{theorem}\label{thm:quadsketch_general}
There is a dynamically consistent version of QuadSketch that solves the $(\eps,\delta)$-DANN problem in the DySk model with the following guarantees:
    \begin{itemize}
        \item \textbf{Main memory space:} $O(n\frac{\log N}{\eps^2}(\log(\frac{1
        }{\eps\delta}) + \log\log N +\log\Phi)  + n\log n)$ bits.
        \item \textbf{I/Os per-update:} One read and one write I/O each of size $O(\eps^{-2}\log N)$ words.
        \item 
        \textbf{Disk Space:} $O(n \eps^{-2}\log N)$ words.
        \item \textbf{Update and Query Time:} $O\left(\frac{\log N}{\eps^2}\left( \log(\frac{1}{\eps\delta}) + \log\log N + \log \Phi \right)\right)$.
    \end{itemize}
\end{theorem}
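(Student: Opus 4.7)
My plan is to realize QuadSketch in the DySk model by keeping in main memory the compressed tree $T'$, the random shift $(\sigma_1,\ldots,\sigma_d)$, and \emph{all} per-level $d$-bit labels along every long (middle-out-compressed) edge of $T'$, while keeping on disk only each point's full-precision coordinates at an address $\texttt{addr}(i)$. Storing the full label sequence of every long edge, rather than only its length as in the static construction, lets the update routine decide branching off compressed paths entirely in-memory; this accounts for the extra $O(nd\log\Phi)$ additive term in the stated memory bound, since there are $O(n)$ long edges of depth-extent at most $L=O(\log\Phi+\Lambda)$. Because the shift is drawn once at Build and never changed, the uncompressed tree $T$ and its compression $T'$ are a deterministic function of the current dataset, which is the key enabler for dynamic consistency.

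\textbf{Insert and delete.}
For \emph{insertion} of a point $y$, I issue one read I/O of $O(d)$ words to fetch $y$'s coordinates, then use the in-memory shift to compute $y$'s cell identifier at every grid level and traverse $T'$ from the root. The traversal terminates in one of three ways: (a) it reaches an existing leaf, where $y$ is registered; (b) it reaches a short branching node at which $y$'s cell label is new and a fresh leaf is attached; or (c) it enters a long edge $e$ on which $y$'s cell labels first disagree at some level $\ell^\star$, in which case $e$ is split at $\ell^\star$ into two new maximal subpaths joined by a new short branching node and a new leaf for $y$. In case (c), middle-out compression is re-applied locally to each of the two new subpaths, a purely in-memory operation since their full labels are already in memory. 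A single write I/O then commits $y$'s coordinates to disk. \emph{Deletion of $x$} is the symmetric procedure: one read I/O to obtain $x$'s coordinates, remove $x$ from its (in-memory pointed) leaf, prune any emptied cell, and if a formerly branching node has dropped to degree $1$, merge the two adjacent maximal subpaths and re-compress if their combined length exceeds $2\Lambda+1$. One write I/O finalizes the disk state.

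\textbf{Correctness and bounds.}
Dynamic consistency is proved by induction on the update sequence, with invariant ``$T'$ equals the tree produced by a static Build on the current dataset with the same shift.'' The inductive step reduces to two facts: (i) a single insertion/deletion alters the uncompressed tree $T$ only along the one root-to-leaf path of the affected point, leaving every other edge of $T$ intact; and (ii) middle-out compression is defined independently on each maximal degree-$1$ subpath, so applying it locally to the two subpaths adjacent to a modified branching node produces the same $T'$ as a global middle-out pass on the updated $T$. Given dynamic consistency, the $(\eps,\delta)$-ANN guarantee is inherited directly from the static analysis of QuadSketch~\cite{quadsketch} combined with the out-of-sample extension of~\cite{indyk2018approximate}. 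The memory size is $O(nd\Lambda+n\log n)$ for $T'$'s base representation plus $O(ndL)=O(nd\log\Phi)$ for the in-memory long-edge labels; each update performs exactly one read and one write I/O of $O(d)=O(\eps^{-2}\log N)$ words; and each update or query takes time $O(dL)$ for the traversal plus local re-compression.

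\textbf{Main obstacle.}
The most delicate step is making fact (ii) fully rigorous in the delete case, since removing a point can trigger a chain of pruning events in which an empty cell's parent becomes degree $1$, then its grandparent, and so on, potentially merging several previously separate maximal subpaths into one. I plan to handle this by expressing each pruning event as a ``merge two adjacent maximal subpaths'' primitive and showing by induction on the length of the chain that applying middle-out compression after each merge yields the same tree as applying it once globally; this reduces the whole argument to the commutativity of middle-out compression with path concatenation, which follows because middle-out depends only on the length of the enclosing maximal subpath and not on the order in which its constituent sub-subpaths were assembled.
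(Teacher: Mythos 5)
Your approach is genuinely different from the paper's, and on a close reading it does satisfy the stated bounds, but it does so in a way that sidesteps the key technical point the paper is making, and it relies on a slackness in the theorem's memory bound that you should not take for granted.

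The paper keeps exactly the compressed sketch $T'(X)$ in memory — long edges carry only a length, not their deleted $d$-bit labels, just as in the static QuadSketch — and recovers discarded labels on demand with a single read I/O: at update time it descends $T'$, tentatively fills in unknown bits on long edges using the bits induced by the incoming point, reaches some resident leaf $x_j$, fetches $x_j$'s full precision from disk, and uses it to correct the tentative fill-in and determine the true branching node. The whole content of the argument is that this one I/O is enough. You instead keep every discarded label resident in main memory, so no such recovery is needed and the traversal and local re-compression are pure memory operations. This makes the theorem much less interesting: it amounts to saying that one can afford not to compress the labels at all, and in fact it only type-checks against the theorem's bound because that bound already contains an additive $\log\Phi$ term that absorbs your $\Theta(nd\log\Phi)$ extra storage. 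Note that the static QuadSketch of~\cite{quadsketch} achieves $O(nd\Lambda + n\log n)$ with $\Lambda = O(\log\tfrac{1}{\eps\delta}+\log\log N+\log\log\Phi)$, i.e.\ a $\log\log\Phi$, not $\log\Phi$, dependence; the $\log\Phi$ in the theorem is attributed to the out-of-sample machinery from~\cite{wagner2020efficient}. If that $\log\Phi$ were tightened to $\log\log\Phi$ (which the paper's approach would support, since it never stores the discarded labels), your construction would overshoot the memory bound. The paper's proof degrades gracefully; yours does not.

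Two smaller points. First, the read I/O you charge to insertion is spurious: in the DySk model the update operation is given the full-precision point as input, and for an insertion the point is not yet on disk anyway, so there is nothing to read; your construction actually uses zero read I/Os, not one (for deletion you already hold $x$'s coordinates as the argument to Delete). Second, the ``main obstacle'' you worry about — a cascade of pruning events on deletion — does not occur: deleting $x_i$ removes exactly the contiguous suffix of $x_i$'s root-to-leaf path from the first surviving branching node $v^*$ downward, and $v^*$ is the only node whose degree changes, and it changes by exactly one, so at most one merge of two adjacent maximal paths can happen. The commutativity lemma you propose is correct but unnecessary at this generality.
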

\begin{proof}
We begin by introducing some more notation.
We write $T(X)$ to specify that the quadtree was built on the dataset $X$.
for a maximal path $(v_0, v_1, \dots, v_{\ell-1}, v_\ell)$ we will sometimes refer to $v_0$ as the head and $v_\ell$ as the tail.

\textbf{Build:} We build and store the compressed quadtree $T'(X)$ in main memory exactly as was described in the description of QuadSketch.
Furthermore, for each $x_i \in X$ we write to a unique disk address $\addr(i)$ the full precision of $x_i$.
We only store the QuadSketch and the disk addresses $\{\addr(i)\}_{i \in [n]}$ in main memory.

\textbf{Main memory Space:} From Chapter 4 of~\cite{wagner2020efficient} we know that we only need to use 
\[
O\left(n\frac{\log N}{\eps^2}\left(\log\left(\frac{1}{\eps\delta}\right) + \log\log N+\log\Phi\right)  + n\log n\right)\]
bits of space to store the tree $T'(X)$ in main memory.
We also use $n\log N$ bits to store the disk addresses $\{\addr(i)\}$ for each $x_i$ which we don't count towards our memory usage.)

\textbf{Disk Space:} We only store the full precision vectors on disk which requires $O(nd) = O\left(n\frac{\log N}{\eps^2}\right)$ words of memory. (Recall that we assume that the number of bits $\bits$ required to represent each coordinate can fit in a word).

\textbf{Insertion:}
Let $x \in \mc{U}$ be the new point that is to be inserted into $X$.
Let $v_x$ be the leaf corresponding to $x$ in $T(X \cup \{x\})$.
Our goal will be to insert $x$ into the pruned tree $T'(X)$ to directly obtain the pruned tree $T'(X\cup\{x\})$ without having to build the entire tree $T(X \cup \{x\})$.
In particular, we will show that we can recover some of the discarded labels on long edges on $T'(X)$ using a single round of I/O.

Define $v_x$'s  \emph{branching node} $v^*$ in $T(X \cup \{x\})$ to be $v_x$'s deepest ancestor that has degree larger than 1.
Our first step towards this goal will be to determine the branching node $v^*$ as well as all bits on the edges of the \emph{prefix path} $P_< = (v_r, \dots, v^*)$.
We will also find it useful to define the \emph{suffix path} $P_> = (v^*, \dots, v_x)$.
We begin by traversing the tree $T'(X)$ starting from the root $v_r$ along the downward path through $v_r$'s child that corresponds to the grid cell corresponding to $x$ (the one in which $x$ lands in on this level).
We continue this process going down the tree. 
There are two possibilities that differ from the above when traversing the tree. The first is that we reach a node $v'$ for which all its outgoing edges are labeled with bits that disagree with $x$.
In this case we pick an arbitrary child and continue the downward traversal.
The second one is that we run into a \emph{long edge} that has been compressed.
In this case we pretend that the bits that were removed correspond exactly to the bits that the point $x$ would have assigned to those edges.
We perform this traversal until we get to some leaf node corresponding to a point $x_i \in X$.
We then perform a read I/O to obtain the full precision of $x_i$ from disk.

Our first claim is that we have recovered the identity of $v^*$ as well as all the bits on the edges of the prefix path $P_< = (v_r, \dots, v^*)$.
If we determine $v^*$ before reaching a long edge the claim is immediate, so we will assume that this does not happen.
Consider the case that we have incorrectly filled-in the bits corresponding to at least one long edge we have come across.
In this case, we can use $x_i$ to determine the correct bits of any long edged for which we had incorrectly filled-in. 
In particular, we have that $v^*$ is a node that was compressed on the \emph{first} incorrectly filled-in long edge.
It is also easy to see that the claim holds when we fill in all long edges correctly.
Now that we have determined the prefix path $P_<$ and its associated edge labels, we insert the suffix path $P_>$ and its corresponding edge labels into $T'(X)$ and fill in all the bits deleted from long edges on the root-to-leaf path corresponding to $x_i$. 
From this we obtain the (partially compressed) tree $T_{\text{par}}$ and we apply middle out compression to it.

We will now argue that compressing $T_{\text{par}}$ yields the tree $T'(X \cup \{x\})$.
To argue this, we will consider two cases for the degree of the branching vertex $v^*$. Case 1): $\deg(v^*) > 1$ in $T(X)$. We have that all long edges not on $P_>$ remain the same by definition and that the only new long edges are on $P_>$. 
From this observation it is immediate that compressing $T_{\text{par}}$ yields $T'(X \cup \{x\})$. Case 2): $\deg(v^*) = 1$ in $T(X)$. We see that the insertion of $x$ has caused the degree of $v^*$ to increase.
As a consequence, this has altered \emph{a single pre-existing} maximal path $P = (v_h, \dots, v^*, \dots, v_t)$ of the original tree $T(X)$.
Notice that the path $P$ is part of the root-to-leaf path of $x_i$ by definition of our traversal procedure and the fact that $\deg(v^*) = 1$ in $T(X)$.
Thus, we have recovered the labels of all the edges on $P$ who's compression might change from the addition of $x$.
It is now straightforward to see that compressing $T_{\text{par}}$ yields $T'(X \cup \{x\})$.

Finally, we can write to a new disk address $\addr(n+1)$ the full precision of $x$.
Since the above procedure recovers $T'(X\cup \{x\})$, we conclude that the the delete procedure is dynamically consistent.The running time of the insert procedure is $O\left(\frac{L\log N}{\eps^2}\right) = O\left(\frac{\log N}{\eps^2}\left( \log(\frac{1}{\eps\delta}) + \log\log N + \log \Phi \right)\right)$.

\textbf{Delete:} 
Let $x_i \in X$ be the point that is to be deleted and let $v_{x_i}$ be its corresponding leaf node in the tree $T(X)$.
Our goal will be to directly delete $x_i$ from the pruned tree $T'(X)$ in order to obtain the pruned tree $T'(X\setminus\{x_i\})$ without having to build the entire tree $T(X \setminus \{x_i\})$.
In particular, we will show that we can recover some of the discarded labels on long edges on $T'(X)$ using a single round of I/O.

Similar to the insert procedure, our first step will be to determine the the branching node $v^*$ that is $v_x$'s deepest ancestor in $T(X)$ with degree larger than 1.
Notice that we can immediate determine this by traversing the compressed tree $T'(X)$ starting from the root $v_r$ through the child corresponding to  $x_i$'s grid cell since we have the full precision of $x_i$.
Once we have determined $v^*$, we continue our traversal from $v^*$ through any of its children that \emph{does not} correspond to $x_i$'s grid cell all the way down to an arbitrary leaf corresponding to some $x_j \in X$.
We then perform a read I/O to get  $x_j$ from $\addr(j)$.
Finally, we remove the branch of $T'(X)$ starting from $v^*$'s child that corresponds to $x_i$.
Put differently, we remove $x_i$'s contribution from the tree.
From this we get the (partially compressed) tree $T_{\text{par}}$ and apply middle out compression to it.

We will now argue that compressing $T_{\text{par}}$ yields the tree $T'(X\setminus \{x_i\})$.
We consider two cases for the degree of the branching vertex $v^*$. Case 1): $\deg(v^*) > 2$ in $T(X)$.
In this case no long edges that remain in the tree have changed from deleting $x_i$ thus compressing $T_{\text{par}}$ yields $T'(X \setminus \{x_i\})$. Case 2): $\deg(v^*) = 2$ in $T(X)$.
In this case removing $x_i$ implies $\deg(v^*) = 1$ in $T(X \setminus \{x_i\})$.
Thus, $v^*$ could be 1) some internal node of a maximal path \emph{or} 2) the head of some maximal path and/or the tail of some maximal path. Moreover, these are the only potentially maximal paths that have changed.
Regardless of which (sub)case we are in, we have the labels of all relevant edges from the full precision of $x_j$ whose root-to-leaf path goes through these maximal paths.
Thus we can see that compressing $T_{\text{par}}$ yields $T'(X \setminus \{x_i\})$.

Finally, we can request a write probe to delete the contents in $\addr(i)$ that contains the full precision of $x_i$.
Since the above procedure recovers $T'(X\setminus \{x_i\})$, we conclude that the the delete procedure is dynamically consistent.
Furthermore the running time of the delete operation is $O\left(\frac{L\log N}{\eps^2}\right) = O\left(\frac{\log N}{\eps^2}\left( \log(\frac{1}{\eps\delta}) + \log\log N + \log \Phi \right)\right)$.

\textbf{Query:} The data structure maintains a copy of the QuadSketch in memory so we can run the query algorithm defined in \cite{wagner2020efficient}, Chapter 4 which runs in $O\left(\frac{L\log N}{\eps^2}\right) = O\left(\frac{\log N}{\eps^2}\left( \log(\frac{1}{\eps\delta}) + \log\log N + \log \Phi \right)\right)$ time.
\end{proof}

\section{Our \ouralg\ Method}\label{app:tbq}
In this appendix we include all the omitted proofs from \cref{sec:tbq}.
We also include some background information on PQ and explain how to combine use \ouralg~with it.
\subsection{Stability of kd-trees}\label{app:tbq_stability}
In this section we prove that the partition induced by a kd-tree satisfy a certain notion of \emph{stability}: whenever a single point is inserted into/deleted from the tree, every internal node changes by at adding a single new point and/or deleting a single existing point.
This result is central to the proof of~\cref{thm:tbq}.
We begin with some notation.

Let $X = \{x_1, \dots, x_n \} \subseteq \mc{X}$ be a set of $n$ elements that live in a universe $\mc{X}$ that admits a strict total ordering. 
Define the median index $I(n) = \lceil\frac{n}{2}\rceil$ and consider a partition of $X$ based on the median where the parts of the partition are defined as
\begin{gather*}
L(X) = \{x_{(1)}, \dots, x_{(I(n)-1)}\} \quad \\
\text{and} \quad R(X) = \{x_{(I(n))}, \dots, x_{(n)}\},
\end{gather*}
where the notation $x_{(i)}$ denotes the $i$th smallest point in $X$.
The following proposition highlights the stability of a median based partition when one point is added and/or deleted from the dataset.
\begin{proposition}\label{prop:stability_median}
Fix an integer $n \ge 1$ and a set $X = \{x_1, \dots, x_n\} \subset \mc{X}$ where the universe $\mc{X}$ admits a strict total ordering.
Let $X^+ \subset \mc{X}$ be a set containing at most one element such that any element in $X^+$ is not in $X$.
Let $X^- \subset \mc{X}$ be a set containing at most one element such that any element in $X^-$ is in $X$.
The set $X' = \left(X \setminus X^{-} \right) \bigcup X^{+}$ satisfies
\begin{gather}\label{eq:setminus_stability1}
    |L(X) \setminus L(X')| \le 1 \quad
    \text{and} \quad |R(X) \setminus R(X')| \le 1,
\end{gather}
and
\begin{gather}\label{eq:setminus_stability2}
    |L(X') \setminus L(X)| \le 1 \quad
    \text{and} \quad |R(X') \setminus R(X)| \le 1.
\end{gather}
\end{proposition}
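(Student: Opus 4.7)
The plan is to prove the proposition by tracking two invariants under the update $X \to X'$: the rank $r(x)$ of each persisting element, and the cut position $I(n)$. Both are highly stable under a single insertion or deletion, and a short case check shows this stability forces at most one element to cross the partition boundary.

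The first step is to record two observations. Writing $n = |X|$ and $n' = |X'|$ we have $|n' - n| \le 1$, so by monotonicity of $I$ we get $|I(n') - I(n)| \le 1$; in fact $I(n') = I(n)$ in the swap case $|X^-| = |X^+| = 1$, and the parity-dependent identity $I(k) - I(k-1) \in \{0,1\}$ pins down exactly when the cut moves. Second, for each persisting element $x \in X \cap X'$,
\[
r'(x) - r(x) \;=\; \mathbf{1}[x^+ < x] - \mathbf{1}[x^- < x] \;\in\; \{-1, 0, +1\},
\]
where the relevant indicator is dropped when $X^+$ or $X^-$ is empty.

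The second step is the decomposition
\[
L(X) \setminus L(X') \;=\; \bigl(L(X) \cap X^-\bigr) \;\sqcup\; \bigl\{x \in L(X) \cap X \cap X' : r'(x) \ge I(n')\bigr\}.
\]
The first piece contributes at most $|X^-| \le 1$. For the second piece, combining $r(x) \le I(n) - 1$ with $r'(x) \ge I(n')$, $r'(x) \le r(x) + 1$, and $I(n') \ge I(n) - 1$ squeezes $r(x)$ into a very short range; a case check over insert-only, delete-only, and swap shows that only a single rank value is ever consistent with all four constraints simultaneously, so the second piece also has size at most one.

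The third step is to argue that the two pieces are not simultaneously nonempty, which upgrades the bound from $2$ to $1$. In the swap case, the unique candidate for the second piece is $x_{(I(n)-1)}$, which must lie strictly between $x^+$ and $x^-$; this forces $x^-$ strictly above the cut and so evacuates the first piece. The insert-only and delete-only subcases are handled by analogous short arguments, using the parity of $I$ from Step 1. The bound $|L(X') \setminus L(X)| \le 1$ follows by applying the same argument with the roles of $X$ and $X'$ (and of $X^+$ and $X^-$) exchanged, and the analogous $R$-bounds \eqref{eq:setminus_stability1}--\eqref{eq:setminus_stability2} follow by reversing the total order on $\mathcal{X}$.

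I expect the main obstacle to be Step 3: Steps 1 and 2 are essentially bookkeeping on ranks and the median index, but obtaining the sharp constant $1$ rather than the looser $2$ requires the exclusivity argument coupling the relative positions of $x^+$ and $x^-$ with the cut. This is exactly where the swap case is more than a naive composition of a deletion with an insertion, since such a composition would cost an additional $+1$ at the intermediate step.
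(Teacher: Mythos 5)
Your approach is genuinely different from the paper's and is essentially correct. The paper proves this by an exhaustive case analysis: four main cases according to $(|X^+|, |X^-|) \in \{0,1\}^2$, each refined into subcases by whether the inserted and deleted elements lie above or below the median, and in each subcase $L(X')$ and $R(X')$ are written out explicitly, from which the bounds are read off directly. You instead track the two invariants (persisting element ranks and the cut index $I(\cdot)$), prove a decomposition of $L(X)\setminus L(X')$ into a deleted piece and a rank-crossing piece, and argue each piece has size at most one and that they cannot both be nonempty. This is more conceptual, makes the underlying mechanism (the boundary can shift by at most one slot) explicit, and involves less bookkeeping than the eleven explicit subcases; the paper's version is more elementary and easier to verify line by line. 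Both routes are valid; the paper also reuses its explicit case expressions later in the proof of Theorem~\ref{thm:kd-tree-stability} to identify \emph{which} points move, so if you plan to extend your argument you should check that your Step~2 characterization (the unique crossing candidate is $x_{(I(n)-1)}$, or its $R$-side analogue $x_{(I(n))}$) supplies the same information.

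One concrete flaw: your claim that the $R$-side bounds ``follow by reversing the total order on $\mathcal{X}$'' does not hold as stated. With $I(n)=\lceil n/2\rceil$, we have $|L(X)|=I(n)-1$ and $|R(X)|=n-I(n)+1$, which differ by $1$ for $n$ even and by $1$ for $n$ odd as well; reversing the order maps the first $I(n)-1$ elements to the \emph{last} $I(n)-1$ elements, which is a proper subset of $R(X)$, not $R(X)$ itself. So order reversal does not interchange $L$ and $R$, and the $R$-bounds do not formally reduce to the $L$-bounds this way. The fix is cheap: redo Steps~2--3 symmetrically for $R$, using the decomposition $R(X)\setminus R(X') = (R(X)\cap X^-)\sqcup\{x\in R(X)\cap X\cap X' : r'(x)\le I(n')-1\}$, in which the unique candidate for the second piece is $x_{(I(n))}$ rather than $x_{(I(n)-1)}$. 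The case check and exclusivity argument go through mutatis mutandis. I'd also note that your ``squeeze'' in Step~2 by itself only pins $r(x)$ to $\{I(n)-2, I(n)-1\}$, so the subsequent per-case check is genuinely load-bearing (which you acknowledge, but it is worth being explicit that the squeeze alone does not finish the argument).
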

\begin{proof}
We consider 4 cases, each of which may have some further subcases.
Let $x'_{(1)} \le x'_{(2)} \le \dots \le x'_{(n+1)}$ be the elements of $X'$ in sorted order.

\textbf{Case 1:} Assume $|X^{+}| = |X^{-}| = 0$. 
Then $X' = X$ and the claim trivially holds.

\textbf{Case 2:} Assume $|X^{+} | = 1$ and $|X^{-}| = 0$, i.e.\ we will insert a single point into the set $X$.
Let $x^+$ be the point in $X^+$.
Let $I' = I(n+1)$ and $I = I(n)$.
Either $I' = I+1$ or $I' = I$.
We four exhaustive subcases.

\textbf{Case 2.i:} Assume that $I' = I+1$ and that $x^+ < x_{(I)}$. We see that $x'_{(I')} = x'_{(I+1)} = x_{(I)}$
and thus 
\begin{gather*}
L(X') = \{x^+\} \cup \{x_{(1)}, \dots, x_{(I-1)}\} \quad \\
\text{and} \quad R(X') = \{x_{(I)}, \dots, x_{(n)}\}.
\end{gather*}

\textbf{Case 2.ii:} Assume that $I' = I+1$ and that $x^+ > x_{(I)}$. We see that $x'_{(I'-1)} = x'_{(I)} = x_{(I)}$ and thus 
\begin{gather*}\label{eq:add_annoying_case_left}
L(X') = \{x_{(I)} \} \cup \{x_{(1)}, \dots, x_{(I-1)}\} \quad \\
\text{and} \quad  R(X') = \{x^+\} \cup \{x_{(I+1)}, \dots, x_{(n)}\} \setminus \{ x_{(I)}\}.
\end{gather*} 

\textbf{Case 2.iii:} Assume that $I' = I$ and that $x^+ < x_{(I-1)}$. We see that $x'_{(I')} = x'_{(I)} = x_{(I-1)}$, 
and thus 
\begin{gather*}
L(X') = \{x^+\} \cup \{x_{(1)}, \dots, x_{(I-1)}\} \setminus \{x_{(I-1)} \} \quad \\
\text{and} \quad R(X') =  \{x_{(I-1)} \}\cup \{  x_{(I)}, \dots, x_{(n)}\}
\end{gather*}
\textbf{Case 2.iv:} Assume that $I' = I$ and that $ x^+ > x_{(I-1)}$. We see that $x'_{(I'-1)} = x'_{(I-1)} = x_{(I-1)}$ and thus
\begin{gather*}
L(X') = \{x_{(1)}, \dots, x_{(I-1)}\} \quad \\
\text{and} \quad R(X') = \{x^+\} \cup \{x_{(I)}, \dots, x_{(n)}\},
\end{gather*}
In all of the above subcases of \textbf{Case 2} we can conclude that \cref{eq:setminus_stability1,eq:setminus_stability2} both hold.

\textbf{Case 3:} Assume $|X^{+}| = 0$ and $|X^{-}| = 1$, i.e.\ we will delete a single point from the set $X$.
The claim follows from applying the conclusion of \textbf{Case 2} with the ``initial'' set $\widetilde{X} = X \setminus X^{-}$, the ``insertion'' set $\widetilde{X}^+ = X^-$, and the ``resulting'' set $\widetilde{X}' = \widetilde{X} \cup \widetilde{X}^+$.

\textbf{Case 4:} Assume $|X^{+}| = |X^{-}| = 1$, i.e.\ we insert a single point \emph{and} delete a single point.
Let $x^+$ be the point in $X^+$ and let $x^-$ be the point in $X^-$.
It is easy to see that $|X'| = |X|$ so the median index $I = I(n)$ remains the same.
We consider four exhaustive subcases.

\textbf{Case 4.i:} Assume that $x^+ < x_{(I)}$ and $x^- < x_{(I)}$.
We have that $x'_{(I)} = x_{(I)}$ and thus 
\begin{gather*}
L(X') = \{x^+\} \cup \{x_{(1)}, \dots, x_{(I-1)}\} \setminus \{x^-\} \quad \\
\text{and} \quad R(X') = \{x_{(I)}, \dots, x_{(n)}\}.
\end{gather*}

\textbf{Case 4.ii:} Assume that $x_{(I-1)} < x^+$ and $ x_{(I-1)} < x^-$.
We have that $x'_{(I-1)} = x_{(I-1)}$ and thus 
\begin{gather*}
L(X') =  \{x_{(1)}, \dots, x_{(I-1)}\} \quad \\
\text{and} \quad R(X') = \{x^+\} \cup \{x_{(I)}, \dots, x_{(n)}\} \setminus \{x^-\}.
\end{gather*}
\textbf{Case 4.iii:} Assume that $x^+ < x_{(I-1)} < x_{(I)} \le x^- $.
We have that $x'_{(I)} = x_{(I-1)}$ and thus 
\begin{gather*}
L(X') =  \{x^+\} \cup \{x_{(1)}, \dots, x_{(I-1)}\} \setminus \{x_{(I-1)}\} \quad \\
\text{and} \quad R(X') = \{ x_{(I-1)}\} \cup \{ x_{(I)}, \dots, x_{(n)}\} \setminus \{x^-\}.
\end{gather*}
\textbf{Case 4.iv:} Assume that $x^- \le x_{(I-1)} < x_{(I)} < x^+ $.
We have that $x'_{(I-1)} = x_{(I)}$ and thus 
\begin{gather*}
L(X') =  \{ x_{(I)} \} \cup \{x_{(1)}, \dots, x_{(I-1)}\} \setminus \{x^-\} \quad \\
\text{and} \quad R(X') = \{x^+ \} \cup \{x_{(I)}, x_{(I)}, \dots, x_{(n)}\} \setminus \{ x_{(I)} \}.
\end{gather*}
In all of the above subcases of \textbf{Case 4} we can conclude that \cref{eq:setminus_stability1,eq:setminus_stability2} both hold.
\end{proof}
An immediate consequence of the above is that if we insert or delete a single point from the root of a kd-tree, we can bound the number of changes in every node.
The proof follows from an simple inductive argument that utilizes Proposition \ref{prop:stability_median}.
The following is \cref{thm:kd-tree-stability} restated.
\begin{theorem}[kd-tree stability]
Fix a dataset $X \subset \mathbb{R}^d$ and integer $L > 0$. 
Let $T$ be a depth $L$ kd-tree built on $X$. 
Let $X'$ be the dataset after inserting or a deleting a point.
In either case, every node of the new kd-tree $T'$ has changed by removing at most one point from and/or inserting at most one point into the corresponding node in $T$.
Furthermore, the only points in both $X$ and $X'$ that can change their root-to-leaf path after the insert/delete are the maximum (resp. minimum) element of a node $v$ which is a left (resp. right) child in $T$.
\end{theorem}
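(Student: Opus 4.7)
The plan is to prove the theorem by induction on the depth $\ell$ of the kd-tree, using Proposition \ref{prop:stability_median} as the local stability tool at every internal node.

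First I set up notation: for any node $v$, let $X_v$ (resp.\ $X'_v$) denote the set of points associated with $v$ in $T$ (resp.\ $T'$). The base case is the root $r$, where $X_r = X$ and $X'_r = X'$ differ by the single insertion or deletion that triggered the update, so $|X'_r \setminus X_r| \le 1$ and $|X_r \setminus X'_r| \le 1$. For the inductive step, assume that an internal node $v$ at level $\ell$ satisfies both inequalities, and let $j_\ell$ be the coordinate along which $v$ splits. By construction the children of $v$ in $T$ (resp.\ $T'$) inherit the sets $L(X_v)$ and $R(X_v)$ (resp.\ $L(X'_v)$ and $R(X'_v)$), where $L, R$ are defined with respect to the ordering on the $j_\ell$-th coordinate, breaking ties by a fixed rule such as the original point index to realize the strict total order assumed by Proposition \ref{prop:stability_median}. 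Invoking that proposition with $X^+ = X'_v \setminus X_v$ and $X^- = X_v \setminus X'_v$ yields
\[
|L(X_v) \setminus L(X'_v)|,\ |L(X'_v) \setminus L(X_v)|,\ |R(X_v) \setminus R(X'_v)|,\ |R(X'_v) \setminus R(X_v)| \le 1,
\]
which is exactly the inductive hypothesis transported to the two children of $v$. Iterating down to the leaves gives the first part of the theorem.

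For the ``furthermore'' clause I would revisit the case analysis in the proof of Proposition \ref{prop:stability_median} and observe that, at any node $v$, the only elements of $X_v \cap X'_v$ whose side (left vs.\ right) actually changes are $x_{(I-1)}$ (the pre-update maximum of $L(X_v)$) and $x_{(I)}$ (the pre-update minimum of $R(X_v)$); every other discrepancy between $L(X_v) \triangle L(X'_v)$ and $R(X_v) \triangle R(X'_v)$ is accounted for by the inserted/deleted point $x^{\pm}$ itself, which is not in $X \cap X'$. A point $x \in X \cap X'$ therefore changes its root-to-leaf path only if it flips sides at some ancestor, and at that flipping node it must be the maximum of the current left child or the minimum of the current right child, which is the theorem's phrasing.

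The main subtlety I anticipate is the ``insert plus delete'' regime (Case~4 of Proposition \ref{prop:stability_median}): even though the global update is a \emph{single} insertion or deletion at the root, a reassignment at an ancestor of $v$ can propagate \emph{both} an inserted point and a deleted point into $v$ simultaneously. Consequently the inductive invariant must be stated in terms of symmetric difference rather than a one-sided change. Proposition \ref{prop:stability_median} is already phrased with exactly this generality ($|X^+|, |X^-| \le 1$ allowed to hold simultaneously), so the induction step absorbs this cascading effect without any additional case analysis.
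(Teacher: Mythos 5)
Your proof is correct and follows essentially the same route as the paper's: induction on tree depth, with Proposition~\ref{prop:stability_median} serving as the local stability tool at each node, and the ``furthermore'' clause extracted from the proposition's case analysis. The only cosmetic difference is where you place the inductive invariant — you track ``$|X'_v \setminus X_v|, |X_v \setminus X'_v| \le 1$'' at the current node and derive the $L/R$-partition stability from it, while the paper tracks the $L/R$-partition stability at the parent and derives the one-in/one-out bound for the children — but these are the same argument shifted by half a level, and your explicit flagging of the simultaneous insert-plus-delete regime (Case~4 of the proposition) correctly identifies the reason the proposition must allow $|X^+| = |X^-| = 1$.
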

\begin{proof}
Let $X_v \subseteq X$ be the subset of $X$ that lands in the node $v$ of $T$.
We define $X'_v \subset X'$ the  same way.
The first claim of the theorem is equivalent to the claim that after an insert/delete, for every depth $\ell$ and every node $v$ in $T$ at depth $\ell$, we have 
\[ \max\{|L(X_v) \setminus L(X'_v)|, |L(X'_v) \setminus L(X_v)| \}\le 1 \]
\begin{equation}\label{eq:stability_induction}
\text{and} \quad \max\{|R(X_v) \setminus R(X'_v)|, |R(X'_v) \setminus R(X_v)| \}\le 1,
\end{equation}
where the points in $X_v$ are ordered with respect to their $j_\ell$-th coordinate.
The second claim of the theorem is equivalent to the following: after an insert/delete, for every depth $\ell$ and every node $v$ in $T$ at depth $\ell$,  the only points in $X_v$ that move from $L(X_v)$ to $R(X_v)$ or vice versa are $x_{(I(|X_v|)-1)}$ or $x_{I(|X_v|)}$. 
Formulating the claims this way will be more convenient.

We will prove both claims via induction over the depth $\ell \in \{1, \dots, L\}$ of the tree.
For the base case $\ell=1$, it is easy to see that both cases hold.
For the inductive step, assume that the claim is true for nodes at level $\ell-1$.
We will show it is true for nodes at level $\ell$.
Fix a node $v_P$ at level $\ell-1$ of the tree.
Let $v_L$ be the left child of $v_P$.
By the inductive assumption we have $\max\{|L(X_{v_P}) \setminus L(X'_{v_P})|, |L(X'_{v_P}) \setminus L(X_{v_P})| \}\le 1$.
Since $X_{v_L} = L(X_{v_P})$ by definition, we conclude that at most one point is inserted into and/or at most one point is deleted from $v_L$. \cref{prop:stability_median} now implies that \cref{eq:stability_induction} holds for $v_L$.
Furthermore, inspecting the detailed case analysis of \cref{prop:stability_median} reveals that the only points that move from $L(X_{v_L})$ to $R(X_{v_L})$ or vice versa are $x_{(I(|X_{v_L}|)-1)} \in X_{v_L}$ or $x_{(I(|X_{v_L}|))} \in X_{v_L}$.
An identical argument yields the same conclusion for $v_P$'s right child $v_R$. 
As the choice of $v_P$ was arbitrary, we conclude that both claims hold for all nodes at depth $\ell$.
\end{proof}

\subsection{DySk-based Priority Queues}\label{app:tbq_priority}
In this section we will prove that there is an efficient implementation of a max priority queue based on max heaps in the DySk model.
Before we do this, it will be helpful to recall what max heaps are as well as some of their properties.

We say a tree is built on a dataset $X = \{x_1, \dots, x_n\}$ if the nodes in the tree are in one-to-one correspondence to a data element in $X$.
In other words, they key of a node corresponds to a unique data element $x_i \in X$.
A \emph{max heap} built on dataset $X$ is a complete binary tree that satisfies the \emph{heap property}: every internal node in the tree has a larger value than both of its children.
Recall how a point is inserted into/deleted from a max heap. 

\textbf{Insertion:} When inserting $x$, we go to the first vacant leaf and insert $x$ there. If $x$ is larger than its parent $x'$, we swap $x$ and $x'$ in the tree. We continue this process on the upward path to the root until we cannot swap anymore or $x$ becomes the new root.

\textbf{Deletion: } To delete $x \in X$ from the tree, we swap $x$ with the last leaf in the tree with corresponding key $\tilde{x} \in X$.
We then delete $x$ from the tree (i.e.\ from the last leaf) and consider 2 cases: $1)$ if $\tilde{x}$ is larger than its parent $x'$, swap it or $2)$ if $x$ is smaller than its larger child $x'$, swap them. 
We continue this process of moving $x$ up or down the tree until neither condition $1)$ nor $2)$ is satisfied.

We are now ready to prove the \cref{lem:max_DDS}.
\begin{proof}[Proof of \cref{lem:max_DDS}]
We will maintain a max priority queue by implementing a max heap in the DySk model.
Before we explain how to do this, we introduce some definitions.
Let $P$ be the unique path from the root to some node $v$ in a binary tree.
We define the \emph{off-by-one tree of $P_v$}, denoted $T(P)$, to be the subtree that contains all the siblings of all nodes on the path $P$.
It is not hard to see that this subtree has size at most $2\ell -1$ where $\ell$ is the length of $P$.
We define the \emph{insertion path} $P^*$ of a max heap to be the path starting from root to the parent of the first vacant leaf in the heap.
For any node $x$ in a max heap, we define the \emph{deletion path} of $x$, denoted $P_x$, to be the unique root-to-leaf path that goes from the root to $x$, and then proceeds downward through the larger child until it reaches a leaf.
We should emphasize that we do not view these paths as a sequence of values from the dataset, but rather a sequence of nodes in the actual tree.

\textbf{Build:} Given an initial dataset $X = \{x_1, \dots, x_n\}$, we build the max heap as follows.
First, for every element $x_i \in X$ we assign a unique disk address $\texttt{addr}(i)$ requiring at most $\log N$ bits of space whose contents will be determined below.
We will assume that every disk address points to a block of memory that consists of $\Theta(\log N)$ words.\footnote{This assumption is harmless since we can just make a convention to reserve a large enough contiguous blocks of disk memory for every disk address.}
After this, we construct a max heap on $X$ where we slightly modify the keys of each heap node $v$ corresponding to some $x_i$ to be the tuple $(x_i, \texttt{addr}(i))$.
Crucially, comparisons are only performed using the $x_i$ part of the tuple.
For every node $v$ we assign an auxiliary disk address $\texttt{a}_v$.
For every non-root node $v$ we write to $\heapaddr_v$ the key $(x_i, \addr(i))$ and the sequence of disk addresses $(\heapaddr_u)_{u \in P_v}$ of the nodes on the deletion path $P_v$. 
For the root node $v_r$, we write to the disk address $\texttt{a}_{v_r}$ the tuple $(x_{i^*}, \texttt{addr}(i^*))$ (recall that $x_{i^*}$ denotes the maximum), two sequences of (heap) disk addresses $(\heapaddr_u)_{u \in P_{v_r}}$ and $(\heapaddr_u)_{u \in P^*}$ (corresponding to the  deletion and insertion paths $P_{v_r}$ and $P^*$), and the disk address $\texttt{a}_{v_{\text{last}}}$ of the last occupied leaf $v_{\text{last}}$ in the heap.
Finally, we write to $\texttt{addr}(i)$ the full precision data element $x_i$ and the disk address $\texttt{a}_v$ of $x_i$'s corresponding node $v$ in the heap.
We then discard all information on main memory except for the maximum element $x_{i^\star}$ and the disk addresses $\{\texttt{addr}(i)\}_{x_i \in X}$.

\textbf{Main Memory Space:} We use $\bits$ bits to store the maximum value $x_{i^*}$ and $\log N$ bits to store its ID $i^*$. (We also us $n \log N$ bits for the disk addresses $\{\texttt{addr}(i)\}_{x_i \in X}$ but we do not count these towards our final memory usage).

\textbf{Disk Space:} First, we verify that we have not assigned more than $O(\log N)$ words of information to any disk address.
For every $\addr(i)$ we need $O(1)$ words to store the full precision $x_i$ and the disk address $\heapaddr_v$ of $x_i$'s corresponding node $v$ in the heap.
For every $\heapaddr_v$ corresponding to a non-root node $v$, we need $O(\log n)$ words to store the key $(x_i, \addr(i))$ and the heap addresses $(\heapaddr_u)_{u \in P_v}$.
For the root node $v_r$, we also nee need $O(\log n)$ words to store $(x_{i^*}, \addr(i^*))$ and the heap addresses $(\heapaddr_u)_{u \in P_v}$ and $(\heapaddr_u)_{u \in P^*}$, and the disk address $\heapaddr_{v_{\text{last}}}$ of the last occupied leaf $v_{\text{last}}$.  
So all together we use $O(n \log N)$ words of disk space.

\textbf{Insertion:} Let $x_{n+1}$ be the point that is to be inserted into the heap.
It is not hard to see that the only information that is relevant to performing the standard max heap insertion procedure are the keys on the insertion path $P^*$.
It is also not hard to see that after an insertion, the only points whose deletion path change are those on $P^*$ and the only information required to update these deletion paths is the deletion paths of all nodes on $T(P^*)$.
Thus, our goal will be to use as few I/Os as possible to recover this information.

To this end, we perform a sequence of 3 read I/Os (of different sizes) as follows. We first read the disk address $\addr(i^*)$ of the maximum element $x_{i^*}$.
From this I/O we get the disk address $\heapaddr_{v_r}$ of the root node $v_r$ of the max heap.
We then perform a second I/O to obtain the contents of this address from which we get the disk addresses $(\heapaddr_u)_{u \in P^*}$ corresponding to the nodes on the insertion path.
Finally, we can probe these heap node disk addresses as well as the disk addresses of all nodes that are siblings to $P^\star$\footnote{We will assume that the disk addresses $\{\heapaddr_u\}$ are sufficiently structured so as to easily infer the address of the sibling node to any node $v$ for which we have its corresponding disk address $\heapaddr_v$ (if such a sibling exists).} to get two things: 1) the keys $\{(x_j , \addr(j)) : \text{$x_j$'s node is in } T(P^*)\}$ and 2) the sequence of disk addresses $(a_u)_{u \in P_v}$ corresponding to the deletion path of \emph{every} $v$ in $T(P^*)$.
In other words, from the third probe we have recovered the contents of the off-by-one tree $T(P^\star)$ in main memory as well as the (disk addresses) of the deletion paths of all nodes in $T(P^*)$.

Next, we assign disk addresses $\addr(n+1)$ and $\heapaddr_{v_{\text{new}}}$ assigned to $x_{n+1}$ and the first vacant leaf $v_{\text{new}}$ that we fill with the key $(x_{n+1}, \addr(n+1))$ respectively.
We can then perform the standard insertion procedure for $x_{n+1}$ into the heap since we have recovered all the relevant local information. 
We can then update $x_{i^*}$ in main memory if it has changed.
This sequence of steps takes no more than $O(\log n \log N)$ time.
Notice that this operation preserves dynamic consistency since the maximum element is unique since we break ties using the numerical values of the IDs.

After the insertion is completed, we need to update the disk with these changes. 
First, we need to update the contents of any disk address $\heapaddr_v$ corresponding to non-root nodes $v$ in $P^\star$ whose contents have changed (because we have moved around keys in the heap and/or have new deletion paths).
This requires $O(\log n)$ write probes.
Second, we need to update the contents of the address $\heapaddr_{v_r}$ corresponding to the root node $v_r$ to update the potentially new maximum encoded in the key tuple $(x_{n+1}, \addr(n+1))$, the new insertion path,\footnote{Again, we will assume that the disk addresses for the heap are sufficiently structured so that we can infer this information from the addresses of the nodes on the insertion path $P^\star$.} and the disk address $\heapaddr_{v_{\text{last}}}$ corresponding to the new last occupied leaf $v_{\text{new}}$.
This requires one write probe of size $O(1)$.
Finally, we need to write to $\addr(n+1)$ the full precision of the point $x_{n+1}$ and the disk address $\heapaddr_{v*}$ correspond to the node $v^*$ in $P^*$ that it ends up in after insertion which requires a write probe of size $O(1)$.
Thus, we can request a single batch of $O(\log n)$ write probes to apply all the changes and it takes $O(\log n \log N)$ time to prepare all these write probes.
All together, we conclude that the running time of the insertion procedure is $O(\log n \log N)$.

\textbf{Deletion:}
Let $x_i$ be the point that is to be deleted from the dataset.
The high level argument is slightly more complicated than the insertion procedure. 
We perform a sequence of 3 read I/Os (of different sizes) as follows. 
We first read the disk address $\addr(i)$ of the to-be deleted point $x_{i}$.
From this I/O we get the disk address $\heapaddr_{v}$ of the root node $v$ corresponding to $x_i$ in the max heap.
We then perform a second I/O to obtain the contents of this address from which we get the disk addresses $(\heapaddr_u)_{u \in P_v}$ corresponding to the nodes on the deletion path $P_v$.
Finally, we can probe these heap node disk addresses, the disk addresses of all nodes that are siblings to $P_v$ and the address of the last occupied leaf $\heapaddr_{v_{\text{last}}}$ to get three things: 1) the keys $\{(x_j , \addr(j)) : \text{$x_j$'s node is in } T(P_v)\}$, 2) the sequence of disk addresses $(a_u)_{u \in P_v}$ corresponding to the deletion path of \emph{every} $v$ in $T(P^*)$, and 3) the key $(x_k, \addr(k))$ of the last occupied leaf $v_{\text{last}}$.
In other words, from the third probe we have recovered the contents of the off-by-one tree $T(P^\star)$ in main memory, the (disk addresses) of the deletion paths of all nodes in $T(P^*)$ and the point $(x_k, \addr(k))$ that replaces $(x_i, \addr(i))$ in the heap at the beginning of the deletion operation.

We can then perform the standard deletion procedure to remove $x_{i}$ from the heap since we have recovered all the relevant local information. 
We can then update $x_{i^*}$ in main memory if it has changed.
This sequence of steps takes no more than $O(\log n \log N)$ time.
Notice that this operation preserves dynamic consistency since the maximum element is unique since we break ties using the numerical values of the IDs.

After the deletion is completed, we need to update the disk with these changes. 
First, we need to update the contents of any disk address $\heapaddr_v$ corresponding to non-root nodes $v$ in $P_v$ whose contents have changed (because we have moved around keys in the heap and/or have new deletion paths).
This requires $O(\log n)$ write probes.
Second, we need to update the contents of the address $\heapaddr_{v_r}$ corresponding to the root node $v_r$ to update the potentially new maximum, the new insertion path (the root-to-leaf path of to the parent of the former last occupied leaf $v_{\text{last}}$), and the disk address $\heapaddr_{v_{\text{last}}}$ corresponding to the new last occupied leaf.
This requires one write probe of size $O(1)$.
Third, we need to update the disk address $\addr(k)$ corresponding to the point $x_k$ that was formerly inside the last occupied leaf $v_{\text{last}}$ which requires a write probe of size $O(1)$.
Finally, we need to delete/free the contents of $\addr(i)$ which also requires a write probe of size $O(1)$.
Thus, we can request a single batch of $O(\log n)$ write probes to apply all the changes and it takes $O(\log n \log N)$ time to prepare all these write probes.
All together, we conclude that the running time of the insertion procedure is $O(\log n \log N)$.

\textbf{Query:} This takes $O(1)$ time as we store the maximum element $x_{\text{max}}$ explicitly in main memory.    
\end{proof}

\subsection{\ouralg}\label{app:tbq_tbq}

\begin{algorithm}
\caption{\textsc{\ouralg~ Build}}\label{alg:tbq_build}
\textbf{Inputs:} Dataset $X = \{x_1, \dots, x_n\} \in 
 \mathbb{R}^{d \times n}$ and encoding length $L$. \textbf{Output:} Encoded dataset $X_{\text{Enc}} \in [2^L]^{n}$,  codebook $\mu \in \mathbb{R}^{2^L \times d}$, rotation matrix $\mc{O} \in \mathbb{R}^d$, DySk priority queues $\{H_i\}_{i \in [2^{L+1}-2]}$.
\vspace{0.5em}

\begin{algorithmic}
    \STATE Initialize dataset encoding $X_{\text{Enc}}$ and product codebook $\{ \mu_j \}_{j=1}^{M}$.\vspace{0.5em}
    \STATE Sample a random rotation matrix $\mathcal{O} \in \mathbb{R}^{d \times d}$.\vspace{0.5em}
    \STATE Let  $X' = \{\mathcal{O} x_1, \dots, \mathcal{O} x_n\}$ be the randomly rotated dataset.
    \vspace{0.5em}
    \STATE Sample a random sequence $S = (i_1, \dots, i_L)$ from $\{1, \dots, d\}$ without replacement.\vspace{0.5em}
    \STATE Compute kd-tree $T $ using $X'$ and $S$.\vspace{0.5em}
    \STATE Determine the partition $\{\mc{P}_{k}\}_{k \in [2^L]} \subset [N]$ via the leaves of  $T$.
    \vspace{0.5em}
        \FOR{$k$ in $[2^L]$}\vspace{0.5em}
        \STATE Set $k$th row of $\mu$ to be the mean of all points in $X$ that are assigned to $\mc{P}_{k}$.\vspace{0.5em}
        \ENDFOR
        \vspace{0.5em}
        \FOR{ $i$ in $[n]$}\vspace{0.5em}
        \STATE Set $i$th entry of $X_{\text{Enc}}$ to the identity $k$ of $x_i$'s group $\mc{P}_{k}$.
        \ENDFOR
    \vspace{0.5em}
    \FOR{non-root nodes $v_i$ in $T$}
        \IF{$v_i$ is the left child of its parent}
        \STATE Construct a DySk max priority queue $H_i$ on points in $v_i$.
        \ELSE
        \STATE Construct a DySk min priority queue $H_i$ on points in $v_i$.
        \ENDIF
        \ENDFOR
    \STATE Store auxiliary information for priority queues on disk (as per Lemma \ref{lem:max_DDS}).
    \vspace{0.5em}
    \STATE \textbf{return} $X_{\text{Enc}}$,  $\mu$, $\mc{O}$, and $\{H_i\}_{i \in [2^{L+1}-2]}$.
\end{algorithmic}
\end{algorithm}

\begin{algorithm}
\caption{\textsc{\ouralg~ Insert}}\label{alg:insert}\textbf{Inputs:} New point $x_{n+1} \in 
 \mathbb{R}^{d}$.\\
\textbf{Output:} Updated encoding $X_{\text{Enc}} \in [2^L]^{n+1}$, code book $ \mu \in \mathbb{R}^{2^L \times d}$, and priority queues $\{H_i\}_{i \in [2^{L+1}-2]}$.
\vspace{0.5em}

\begin{algorithmic}
\STATE Append new row to dataset encoding $X_{\text{Enc}}$.\vspace{0.5em}
    \STATE Reconstruct partition $\{\mc{P}_{k}\}_{k \in [2^L]}$ using encoding $X_{\text{Enc}}$.\vspace{0.5em}
    \STATE Construct new partition $\{\mc{P}'_{k}\}_{k \in [2^L]}$ using max/min elements stored in priority queues and $x_{n+1}$.\vspace{0.5em}
        \FOR{non-root node $v_i$ in $T$}\vspace{0.5em}
            \IF{$v_i$ is the left child of its parent}\vspace{0.5em}
            \STATE $y_i\leftarrow$ maximum value from max priority queue $H_i$.\vspace{0.5em}
            \ELSE
            \STATE $y_i\leftarrow$ minimum value from min priority queue $H_i$.\vspace{0.5em}
            \ENDIF\vspace{0.5em}
            \IF{$y_i$'s encoding has changed}\vspace{0.5em}
            \STATE Subtract $y_{i}$'s contribution from the rows of $\mu$ determined by the old partition $\{\mc{P}_{k}\}$.\vspace{0.5em}
            \STATE Add $y_i$'s contribution to the rows of $\mu$ determined by the new partition $\{\mc{P}'_{k}\}$.\vspace{0.5em}
            \ENDIF
        \ENDFOR
    \STATE Add $x_{n+1}$'s contribution to the row of $\mu$ determined by the new partition $\{\mc{P}'_{k}\}$.\vspace{0.5em}
    \STATE set $(n+1)$th row of $X_{\text{Enc}}$ to encoding of $x$'s group $\mc{P}'_{k}$.\vspace{0.5em}
    \STATE Udpate the priority queues $\{H_i\}_{i \in [2^{L+1}-2]}$ using the DySk-based update defined in \cref{lem:max_DDS}.\vspace{0.5em}
    \STATE \textbf{return} $X_{\text{Enc}}$, $\mu$ and $\{H_i\}_{i \in [2^{L+1}-2]}$.
\end{algorithmic}
\end{algorithm}

\begin{algorithm}
\caption{\textsc{\ouralg~ Delete}}\label{alg:delete}
\textbf{Inputs:} Point to be deleted $x_{i} \in 
 \mathbb{R}^{d}$.\\
\textbf{Output:} Updated encoding $X_{\text{Enc}} \in [2^L]^{n-1}$, code book $ \mu \in \mathbb{R}^{2^L \times d}$, and priority queues $\{H_i\}_{i \in [2^{L+1}-2]}$.
\vspace{0.5em}

\begin{algorithmic}
\STATE Delete $i$th row from dataset encoding $X_{\text{Enc}}$.\vspace{0.5em}
    \STATE Reconstruct partition $\{\mc{P}_{k}\}_{k \in [2^L]}$ using encoding $X_{\text{Enc}}$.\vspace{0.5em}
    \STATE Construct new partition $\{\mc{P}'_{k}\}_{k \in [2^L]}$ using max/min elements stored in priority queues and $x_{i}$.\vspace{0.5em}
        \FOR{non-root node $v_i$ in $T$}\vspace{0.5em}
            \IF{$v_i$ is the left child of its parent}\vspace{0.5em}
            \STATE $y_i\leftarrow$ maximum value from max priority queue $H_i$.\vspace{0.5em}
            \ELSE
            \STATE $y_i\leftarrow$ minimum value from min priority queue $H_i$.\vspace{0.5em}
            \ENDIF\vspace{0.5em}
            \IF{$y_i$'s encoding has changed}\vspace{0.5em}
            \STATE Subtract $y_{i}$'s contribution from the rows of $\mu$ determined by the old partition $\{\mc{P}_{k}\}$.\vspace{0.5em}
            \STATE Add $y_i$'s contribution to the rows of $\mu$ determined by the new partition $\{\mc{P}'_{k}\}$.\vspace{0.5em}
            \ENDIF
        \ENDFOR
    \STATE Subtract $x_{i}$'s contribution to the row of $\mu$ determined by the old partition $\{\mc{P}_{k}\}$.\vspace{0.5em}
    \STATE Update the priority queues $\{H_i\}_{i \in [2^{L+1}-2]}$ using the DySk-based update defined in \cref{lem:max_DDS}.\vspace{0.5em}
    \STATE \textbf{return} $X_{\text{Enc}}$, $\mu$ and $\{H_i\}_{i \in [2^{L+1}-2]}$.
\end{algorithmic}
\end{algorithm}

In this subsection we include the proof of \cref{thm:tbq}, our main result for \ouralg.
\begin{proof}[Proof of \cref{thm:tbq}]
Define $I = I(n) = \lceil n/2 \rceil$ to be the right median index and $I-1$ to be the left median index of a dataset of size $n$.
The fundamental operations for a \ouralg~are explained below.

\textbf{Build:} 
Given an initial dataset $X = \{x_1, \dots, x_n\}$ our goal will be to compute an $L$ bit encoding for each $x_i$ and construct a corresponding code books $\mu \in \mathbb{R}^{2^L \times d}$ whose entries are in one-to-one correspondence with the set of possible encodings.
We begin by sampling a random rotation matrix $\mc{O}$ and rotate all points in $X$ to obtain the augmented dataset $X'$.
We then construct a depth $L+1$ kd-tree on $X'$. 
Let $S = (j_1, \dots, j_L) \in [d]^\ell$ be the sequence of dimensions that the kd-tree uses to split on.
We assign $L$ bits to every $x_i$ based on the root-to-leaf path of its rotation $\mc{O}x_i$ in the kd-tree. 
We set the $j$th row of the codebook $\mu$ to be the average of all (un-rotated) points in $X$ that land in the $j$th leaf after being rotated.

We now explain how to incorporate the priority queues from \cref{lem:max_DDS} to enable efficient dynamic updates.
Recall that every non-root node $v$ in the kd-tree corresponds to a subset $X_v'$ of the dataset $X'$.
If $v$ is a left child at depth $\ell$, we build a max priority queue from \cref{lem:max_DDS} on $X_v'$ where the order of the points $X_v'$ is determined by the index $j_{\ell-1}$.
If $v$ is a right child, we build a minimum priority queue on $X_v'$ instead.
We will make the harmless assumption that every disk address points to a block of memory that consists of $\Theta(\max\{d, L,\log N\})$ words.
Notice that a naive implementation of the priority queues would require us to store $L$ disk addresses $\addr(i)$ for each $x_i$ since $x_i$ participates in $L$ priority queues.
To avoid this, we apply \cref{lem:max_DDS} in a ``grey box'' way as follows: for each $x_i$ we store in $\addr(i)$ the \emph{both} the full precision of the original vector $x_i$ and its rotated version $\mc{O}x_i$ as well as \emph{all} disk addresses $\{\texttt{a}_v\}$ of $x_i$'s corresponding nodes $\{v\}$ in all $L$ heaps used in the priority queues.

We then discard all information stored on main memory except for the rotation matrix $\mc{O}$, the max/min elements (and their identities) of all  priority queues, the number of vectors in every priority queue and the disk address $\{\addr(i)\}_{x_i \in X}$. 
The pseudocode for this operation is provided in \cref{alg:tbq_build}.

\textbf{Main Memory Space:} 
We use $d^2 b$ bits to store the rotation matrix $\mc{O}$.
We use $2^L d \bits $ bits to store the code book $\mu$.
We use $n L$ bits to store the encoding of all vectors $x_i$s.
We also use $(2^{L+1}-2)(\log N + db)$ bits to store all the max/min elements (and their identities) for all priority queues.
These max/min are the rotated points $\{\mc{O}x_i\}$.
We also use no more than $(2^{L+1}-2)\log N$ bits to store the number of vectors in the priority queues.
So in total we use less than $d^2b + 2^{L+1}(2\log N + d\bits) + 2^L d \bits + nL$ bits of space in main memory.
(We also use $ n \log N$ bits of space for all the disk addresses corresponding to the $x_i$s but we do not count this).

\textbf{Disk Space:}
Let us verify that we have not assigned more than $O(\max\{d, L,\log N\})$ words of information to any disk address.
Recall that for each $x_i$ we have modified the contents of the disk address $\texttt{addr}(i)$ to include \emph{all} disk addresses $\{\texttt{a}_v\}$ of $x_i$'s corresponding nodes $\{v\}$ in all $L$ heaps used in the priority queues. 
Thus we only need $O(L)$ words to store these points.
The full precision vectors are now $d$ dimensional vectors so we need $O(d)$ words of memory to store them.
In the proof of \cref{lem:max_DDS}, we also store disk addresses $\heapaddr_v$ (corresponding to some vertex $v$ in some heap) which also require $O(d)$ words to store the full precision value of the corresponding key $(x_i, \addr(i))$.
Taking into account our modification to \cref{lem:max_DDS} explained above, we can conclude that we do not use more than $O(n 2^L\max\{d, L, \log N\})$ words of disk space.

\textbf{Insertion:}
Let $x_{n+1}$ be the point that is to be inserted into $X$ and let $x'_{n+1} = \mc{O}x_{n+1}$ be the vector after applying the random rotation stored in main memory.
Recall that we have stored all the max/min values of all priority queues in main memory as the number of points in each priority queue.
Thus, using all this stored information and $x'_{n+1}$, we can determine which points have changed their root-to-leaf paths using the guarantees of \cref{prop:stability_median,thm:kd-tree-stability}.
For every point that has changed, we can add and subtract its contribution from its new and old leaf respectively.
We can then update the encoding of every point that has changed based on which leaf it now lives in.
We can also determine which leaf $x_{n+1}$ lands in and add its contribution to it. 
This sequence of operations takes $O(d^2 + 2^L d)$ time.
Notice that this operation preserves dynamic consistency since the maximum/minimum elements of every priority queue are unique since we break ties using the numerical values of the IDs.

After updating the encodings and code books, we need to update the priority queues.
From the guarantee of \cref{thm:kd-tree-stability} we know that every priority queue gets at most one insert and/or one delete.
Moreover, the points that are to be inserted/deleted are already in main memory. 
We can thus apply the appropriate update (insert or delete) from \cref{lem:max_DDS} to all the priority queues in parallel.
We can slightly modify the delete operation of \cref{lem:max_DDS} in the case of a simultaneous insert and delete.
The final I/O size claimed is inherited from the guarantee of \cref{lem:max_DDS} and our ``grey box'' modification described in the build operation. 
Putting everything together we have that the running time of the insertion procedure is $O(d^2 + 2^L \log n \max\{d, L, \log N\})$.
The pseudocode for this operation is provided in \cref{alg:insert}.

\textbf{Deletion:}  
The deletion operation is very similar to the insertion operation.
Let $x_i \in X$ be the point that is to be deleted.
We can determine which points in the tree change their root-to-leaf path and by the same argument as the insertion operation and we can update the code book and all the encodings the same way.
The main difference is that we remove $x_i$'s contribution from its leaf but don't add it to any other leaf since it is being deleted.
Updating the priority queues is also identical to the insertion operation.
The pseudocode for this operation is provided in \cref{alg:delete}.

\textbf{Query:} Given a query $q \in \mathbb{R}^d$, we can return our estimate for the $k$-nearest neighbors as follows.
We first compute the distances between every $\mu_j \in \mathbb{R}^d$ (from the code book $\mu$) and the query point $q$ which takes $O(2^Ld)$ time.
We can then keep track of the $k$ nearest neighbors via brute force search (breaking ties in an arbitrary deterministic way) which takes an additional $O(d2^L + n\log k)$ time by computing a look-up table for the distances between the code words and the query point and then keeping track of the smallest $k$ distances using a simple minimum priority queue.
\end{proof}

\subsection{Details on Random Rotation}\label{app:rr}
As a preprocessing step, \ouralg\ applies a random rotation the given search dataset. The rotation is subsequently applied to every newly inserted data point and to every query point. 
This is a common preprocessing step, often used in ANN and in many other computational geometry problems. 

A random rotation is a change of basis on the input vectors, from the basis in which they were originally given to a uniformly random basis. In other words, the vectors are re-written in a new, uniformly random system of coordinates. On the one hand, a change of basis is an isometry; it perfectly preserves the distances between all pairs of vectors, so nearest neighbor searches are not affected. On the other hand, it ``smooths'' the representation of the vectors across coordinates, and assists in avoiding possible pathologies in the input system of coordinates. After a random change of basis, each new coordinate is an identically distributed linear combination of the original coordinates.

\subsection{\ouralg~ based Product Quantizer}\label{app:tbq_tbpq}
In this subsection we introduce some relevant background on Product Quantization (PQ) \cite{vqindex2002, jegou2010product}.
We then show how PQ can be modified by incorporating our \ouralg~  data structure in place of k-means cluster to obtain \ouralg-PQ.
\subsubsection{A Primer on Product Quantization}\label{app:tbq_pq}
Product quantization (PQ) is practical and widely deployed method for vector quantization.
Given a dataset $X = \{x_1, \dots, x_n\}$ 
of $d$ dimensional vectors, a product quantizer with parameters $M$ and $L$ splits every vector $x = (x^1, \dots, x^M) \in X$ into $M$ contiguous sub-vectors, each of dimension $d/M$, and forms the block datasets $X_1, \dots, X_M$ from these sub-vectors.
For each block dataset $X_j$ it computes a $k$-means clustering of $X_j$ to obtain $k=2^L$ centroids $\mu_j = \{\mu_{j,1}, \dots, \mu_{j,2^L}\}$.
It then replaces every sub-vector $x_i^j \in X_j$ with the identity $t \in [2^L]$ of its nearest centroid $\mu_{j,t}$.
Thus, every vector $x_i = (x_i^1, \dots, x_i^M) \in X$ is replaced with the compressed representation $C_i = (t_1, \dots, t_M) \in [2^L]^M$.
Computing the approximate distance between a search point $q \in \mathbb{R}^d$ and $x_i \in X$ we can ``decompress" $C_i$ into $\hat{x}_i = (\mu_{1, t_1}, \dots, \mu_{M, t_M}) \in \mathbb{R}^d$ and compute the distance $\|\hat{x}_i - q\|_2$.
Thus all we need to store in main memory to answer query requests are the encodings $\{C_i\}_{i \in [n]}$ and the \emph{product codebook} $\{\mu_j\}_{j \in [M]}$.
See \cref{alg:PQ} for the pseudocode.
We can alternatively view the k-means clustering step as a way to partition the block dataset $X_j$ into $2^L$ groups $\{\mc{P}_{j,k}\}_{k \in [2^L]}$ for which we can represent every point $x_i^j$ in the group $\mc{P}_{j,k}$ by the average of all the points in $\mc{P}_{j,k}$.

\;
\begin{algorithm}
\caption{\textsc{Product Quantizer}}\label{alg:PQ}

\textbf{Inputs:} Dataset $X = \{x_1, \dots, x_n\} \in 
 \mathbb{R}^{d \times n}$,  number of blocks $M$, encoding length per-block $L$.\\
\textbf{Output:} Encoded dataset $X_{\text{Enc}} \in [2^L]^{n \times M}$ and product codebook $\{ \mu_j \}_{j=1}^{M} \subset \mathbb{R}^{2^L \times d/M}$.
\vspace{0.5em}

\begin{algorithmic}
    \STATE Initialize dataset encoding $X_{\text{Enc}}$.\vspace{0.5em}
    \STATE Initialize product codebook $\{ \mu_j \}_{j=1}^{M}$.\vspace{0.5em}
    \FOR{$j$ in $[M]$}
    \STATE Let  $X_j = \{x_1^j, \dots, x_n^j\}$ be the block dataset.
    \vspace{0.5em}
    \STATE Compute centroids $\{C_{j,k}\}_{k \in [2^L]}$ via k-means on $X_j$.\vspace{0.5em}
        \FOR{$k$ in $[2^L]$}
        \STATE Set $k$th row of $\mu_j$ to $C_{j,k}$.
        \ENDFOR
        \FOR{ $i$ in $[n]$}
            \STATE Set $(i,j)$th entry of $X_{\text{Enc}}$ to $k^\star = \text{argmin}_{k \in [2^L]} \|C_{j,k} - x_i^j\|_2$.
            \ENDFOR
    \ENDFOR
    \vspace{0.5em}
    \STATE \textbf{return} $X_{\text{Enc}}$ and $\{\mu_j\}_{j=1}^M$.
\end{algorithmic}
\end{algorithm}

\subsubsection{Product \ouralg~Guarantees}
We can replace the k-means step in PQ with \ouralg.
This enables us to build a PQ-like dynamically consistent quantization schemes that enjoy efficient updates.
The following theorem states the guarantees of this data structure. See~\cref{pq-tbq} for a high level illustration of this quantizer.

\begin{theorem}\label{thm:tbpq}
There is a dynamically consistent implementation of a product \ouralg~with parameters $L$ and $M$ with the following guarantees:
    \begin{CompactItemize}
        \item \textbf{Main memory space:} $nL +  2^L d \bits + 2^{L+1}M(2\log N + d\bits)+ Md^2$ bits of space.
        \item \textbf{I/Os per-update:}
        A sequence of three read I/Os of size $O(T)$, $ O(T)$, $O(T\log n)$ words and a single write I/O of size $O(T\log n)$ words where $T = M2^L\max\{d, ML, \log N\}$.
        \item \textbf{Update time:} $O\left(\frac{d^2}{M} + 2^L M\log n \max\{d, ML, \log N\}\right)$ time.
        \item \textbf{Query time:} 
        $O(d2^L + n(M + \log k))$ time to return $k$-approximate nearest neighbors.
    \end{CompactItemize}
\end{theorem}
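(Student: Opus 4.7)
The plan is to derive \Cref{thm:tbpq} as a direct product reduction to \Cref{thm:tbq} applied in parallel to $M$ block sub-quantizers. As depicted in \Cref{pq-tbq} and in analogy with standard product quantization (\Cref{alg:PQ}), a product \ouralg{} with parameters $(L,M)$ partitions each vector into $M$ contiguous sub-vectors of dimension $d/M$, forms the block datasets $X_1,\dots,X_M$, and runs one independent instance of \ouralg{} on each $X_j$ with encoding length $L$. The compressed representation of a point $x_i$ is the concatenation of the $M$ per-block leaf-IDs, and each block maintains its own codebook, its own priority queues, and its own random rotation.

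For the memory bound I would invoke the memory guarantee of \Cref{thm:tbq} with $d$ replaced by $d/M$ on each of the $M$ instances and sum. The encoding of each point uses $L$ bits per block (matching the $nL$ term of the statement under that convention). The $M$ per-block codebooks contribute $M \cdot 2^L (d/M) b = 2^L d b$ bits. The $M \cdot 2^{L+1}$ priority queues contribute $M \cdot 2^{L+1}(2\log N + (d/M) b)$, which is upper-bounded by the stated $2^{L+1} M (2\log N + d b)$. The per-block rotation matrices contribute at most $M d^2 b$ bits, matching the final term.

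For the I/O bound, the key observation is causal independence across blocks: each block's update and the addresses it probes depend only on the corresponding sub-vector. By aggregating the address lists across all $M$ sub-trees, each of the three reads and the single write prescribed by \Cref{thm:tbq} for a single block can be served simultaneously for every block in the same round, so the number of sequential I/Os stays $O(1)$ while the size of each I/O scales by $M$. Substituting $d/M$ into the per-block size $O(2^L\max\{d/M, L, \log N\})$ of \Cref{thm:tbq} and multiplying by $M$ yields the stated $T = M 2^L \max\{d, ML, \log N\}$ (after absorbing the factor $M$ into the $\max$). The update time is the sum of the per-block update times with dimension $d/M$, which evaluates directly to $O(d^2/M + 2^L M \log n \max\{d, ML, \log N\})$.

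Finally, for the query time, given $q \in \mathbb{R}^d$ I would precompute, for every block $j$ and every codeword $\mu_{j,k}$, the squared distance $\|q^j - \mu_{j,k}\|_2^2$, building a lookup table of size $M \cdot 2^L$ in $O(d 2^L)$ total time. For each of the $n$ encoded dataset points, I read off its $M$ codeword indices, sum the tabulated contributions in $O(M)$ time, and maintain a min-heap of the best $k$ distances at cost $O(\log k)$ per point, giving the $O(d 2^L + n(M + \log k))$ bound. The only subtle step is the I/O batching argument: one must verify that the parallel across-block I/O requests do not serialize into dependent rounds. This follows because the full set of disk addresses probed in each round is determined in advance from main-memory state alone, by the stability statement of \Cref{thm:kd-tree-stability} and the structural properties of insertion/deletion paths used in \Cref{lem:max_DDS}, so all $M$ block-level requests can be issued in the same I/O round.
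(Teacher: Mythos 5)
Your proposal takes essentially the same approach as the paper, which in fact \emph{omits} the proof of \Cref{thm:tbpq}, noting only that it is a ``simple modification'' of \Cref{thm:tbq} obtained by running $M$ independent block instances and modifying each $\addr(i)$ to store all $ML$ heap-node disk addresses for $x_i$ across all blocks. Your I/O-batching argument---that all $M$ blocks' address lists can be determined in main memory in advance (via \Cref{thm:kd-tree-stability} and the structural properties in \Cref{lem:max_DDS}), so that the three reads and one write can be issued in the same rounds for every block simultaneously---is precisely the justification the paper leaves implicit, and your query-time analysis is the standard PQ lookup-table argument.

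One accounting wrinkle worth flagging: if each block independently encodes each point in $L$ bits, the total encoding is $nML$ bits, not the $nL$ stated in \Cref{thm:tbpq}; your parenthetical about ``matching the $nL$ term under that convention'' points at but does not resolve this tension, since the $2^Ld\bits$ codebook term and the $T=M2^L\max\{\cdot\}$ I/O bound both require $2^L$ codewords per block, fixing $L$ as a per-block quantity. Likewise, $M$ per-block $(d/M)\times(d/M)$ rotation matrices cost only $d^2\bits/M$ bits in total, well under the stated $Md^2$ (which also appears to be missing a $\bits$ factor). These are discrepancies in the paper's stated bounds rather than gaps in your argument---your derived bounds are in fact tighter---but they should be flagged rather than silently absorbed into the claimed expressions.
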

Implementing \ouralg-PQ requires a simple modification to the implementation of \ouralg~appearing in \cref{thm:tbq}.
For every point $x_i$ we need to modify the contents of $\addr(i)$ to contain the heap disk addresses $\{\heapaddr_v\}$ to all $ML$ heap node that $x_i$ belongs to across all blocks.
The query algorithm is the standard one used for PQ. (See \cite{jegou2010product}).
Since the modification is straightforward and the proof is nearly identical to that of \cref{thm:tbq}, we omit the proof of \cref{thm:tbpq}.

\section{Experiment Details}\label{app:experiments}

\begin{algorithm}[t!]
\caption{\textsc{Construct Streaming Dataset}}\label{alg:cdss}

\textbf{Inputs:} Static dataset $X$, number of clusters $c$, initial dataset size $n_0$, query fraction $f_q \in (0,1)$, iterations per cluster $\tau$, query freshness $\alpha \in [0,1]$, delete fraction $f_d \in [0,1]$. \vspace{0.5em}\\
\textbf{Output:} Streaming dataset $\{(U_t, Q_t)\}_{t=0}^T$.
\vspace{0.5em}

\begin{algorithmic}
    \STATE Partition $X$ into $c$ clusters $C_1, C_2, ..., C_c$ via k-means. Shuffle cluster order and find $j = \min\{k : \sum_{i\leq k} \vert C_i \vert \geq n_0\}$.\vspace{0.5em}
    \STATE Sample an $f_q$-fraction of elements uniformly at random from $\cup_{i\leq j} C_i$ as $Q_0$, and define the remainder as $U_0= I_0$. \vspace{0.5em}
    \STATE Re-order the remaining clusters $C_1, ..., C_{c-j}$ by increasing order of the distance between their centroid and that of $I_0$.\vspace{0.5em}
    \STATE Sample from each remaining cluster an $f_q$-fraction of elements uniformly at random as the candidate query set $\tilde{Q}_i, i=1, ..., c-j$. Define $\tilde{Q}_0 = Q_0$.\vspace{0.5em}
    \STATE Partition the remainder of each cluster into $\tau$ (approximately) even subsets $C_{ik}, k = 1, ..., \tau$. \vspace{0.5em}
    \STATE Set $t=1$. \vspace{0.5em}
    \FOR{$i = 1, ..., c-j$}\vspace{0.5em}
        \STATE Associate to each $\tilde{Q}_j$, $j\leq i$, the weight $w_i = (1-\alpha)^{\vert j - i \vert}$, applying the convention $0^0 = 1$ where appropriate. \vspace{0.5em}
        \FOR{$k = 1, ..., \tau$}
            \STATE Set $I_t = C_{ik}$. \vspace{0.5em}
            \STATE Set $D_t$ as the $\lceil{f_d \vert I_t \vert}\rceil$-oldest elements in $X_t$, breaking ties uniformly at random. \vspace{0.5em}
            \STATE Generate $Q_t$ as a sample of size $\lceil{f_q \vert I_t \vert}\rceil$ from $\cup_{j \leq i} \tilde{Q}_j$, sampling without replacement and weighting queries in each $\tilde{Q}_j$ by $w_j$. \vspace{0.5em}
            \STATE Set $U_t = (I_t, D_t)$. \vspace{0.5em}
            \STATE Set $t = t+1$. \vspace{0.5em}
        \ENDFOR
    \ENDFOR    
    \vspace{0.5em}
    \STATE \textbf{return} $\{(U_t, Q_t)\}_{t=0}^T$.
\end{algorithmic}
\end{algorithm}

\paragraph{Dataset construction for query freshness experiment.}

We construct the dynamic search scenario from our 100K sample of Deep1B, taking $n_0$ to be 10\% of the data, $C=10$, $\tau=10$, $f_q = 0.1$, and $f_d = 0$. We vary the query freshness parameter $\alpha$ across three values: $\alpha=0$ (no freshness, queries uniform over the observed clusters), $\alpha=0.1$ (queries slightly biased towards more recent clusters), and $\alpha=1$ (all queries sampled from the most recent cluster). We set $M=8$ and $L=12$ for the RebuildPQ and FrozenPQ quantizers, for a compression rate of $1$ bit per dimension.

\paragraph{Dataset construction for recall experiments.}

We construct streaming datasets from each of the three 100M-scale static vector datasets described in \cref{sec:experiments}. For each, we take $n_0$ to be 10\% of the data, $C=10$, $\tau=10$, $f_q = 0.1$, $f_d = 1$, and $\alpha=1$. For each dataset, all product quantizers use the same number of blocks $M$ and codebook size $L$. We use $M=8$, $L=12$ for Deep1B; $M=8$, $L=12$ for BigANN; and $M=10$, $L=12$ for Text2Image. This implies a compression to $1$, $0.75$, and $0.6$ bits per dimension, respectively.

DeDriftPQ requires the additional parameter $m$, which controls how many of the largest clusters are selected for repartitioning. Note that the \emph{total} number of clusters to be repartitioned is strictly greater, as an additional number $m'$ of the smallest clusters are included to keep the total size of the clustering fixed over time. We use $m=2$ in this section. Although it may seem low, we make two observations to the contrary: first, at codebook size $L=2$, $m=2$ is proportionally much larger than the settings explored by the authors in \cite{dedrift}, which favors the method in terms of quantization performance; second, even at low $m$ the number of vectors reassigned can be significant relative to the total size of the data - see \cref{fig:iocosts} in the main paper.

\paragraph{I/O cost and latency experiment details.} 

No streaming scenario is constructed for the I/O cost experiments, since we are simply computing the cost of a single update. Datasets are drawn from Deep1B, and as above we use $M=8$ for both the \ouralg~ and DeDriftPQ quantizers. We vary $L \in \{4, 6\}$. We set $m$ for DeDrift as a function of the $L$, specifically the smallest integer such that at least $2\%$ of clusters are reassigned at each iteration; this yields $m=1$ for $L=4$ and $m=2$ for $L=6$. Note that in this experiment, smaller $m$ favors DeDrift, and by definition $m\geq 1$.

\paragraph{Counting Disk I/O Costs}\label{app:cdioc}

In \cref{sec:experiments}, we compare disk I/O costs of \ouralg~versus DeDriftPQ for a single update over varying scales for the full dataset. As discussed in that section, we use ``number of full-precision vectors required from disk to make an update" as our practical instantiation of the disk I/O cost. Here we describe in detail how we compute this number for both methods. 

\textbf{DeDriftPQ} We begin by learning a PQ codebook on the dataset. We subsequently quantize the entire dataset using the learned codebook. Then, given an update vector and parameter $m$:
\begin{enumerate}
    \item Encode the update vector using the existing codebook and update cluster counts accordingly.
    \item Apply the DeDrift criterion independently in each of the $M$ blocks to identify a subset of data for reclustering.
    \item Take the union of all unique vector IDs returned across blocks, and count the size of this set.
\end{enumerate}

\textbf{\ouralg} We use the read I/O bound computed in  \cref{thm:tbpq} as a starting point to compute the cost of a streaming update, but replace the worst-case term for the number of heap insertion / deletion paths required from disk ($2^L$) with an empirical estimate from our sampled datasets. This proceeds as follows:
\begin{enumerate}
    \item Initialize \ouralg~ on the dataset.
    \item Within each block, apply the \ouralg~ update for the new (sub)vector to each KD-tree data structure:
    \begin{enumerate}
        \item Starting at the root node for each tree, check whether the min heap, max heap, or both require an update upon insertion. This depends on the value of the inserted point and the location of the current median; see \cref{app:tbq_tbq}.
        \item Add the count (0, 1, or 2) of heaps changed, then recurse on the subtree(s) corresponding to the changed heaps.
        \item Return the total count of heaps changed for the tree.
    \end{enumerate}
    \item Return the sum of all heaps changed across blocks.
\end{enumerate}

\subsection{Further Experiments}\label{app:further}

\begin{wrapfigure}
{R}{0.45\textwidth}
\begin{center}
\includegraphics[width=0.45\textwidth]{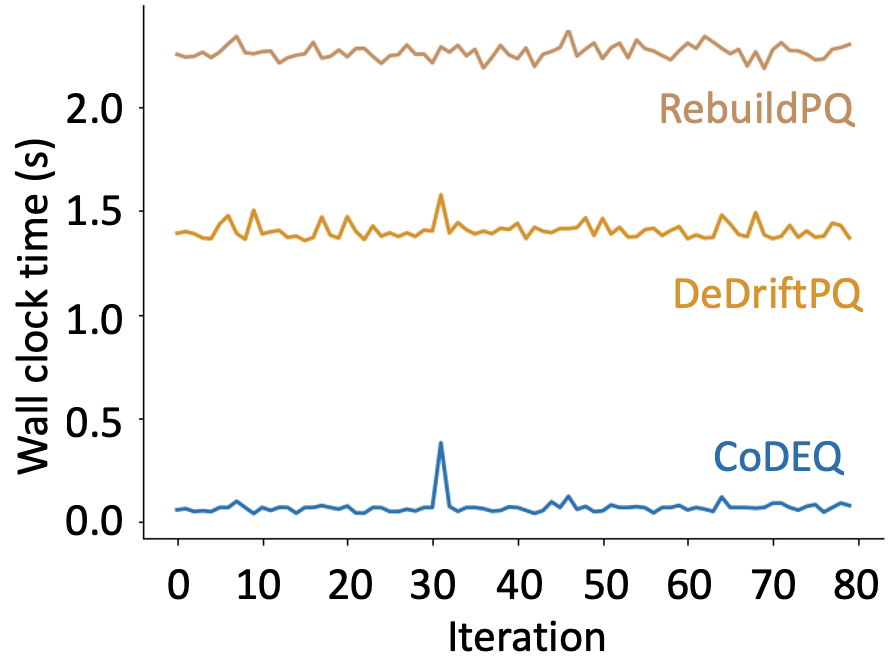}
\caption{Wall-clock latency of single point insertion iterations.}
\label{fig:latency}
\end{center}
\end{wrapfigure}

\paragraph{Latency across streaming iterations.}

We complement the I/O cost evaluation with an empirical study of latency across streaming iterations in a simple implementation that communicates with disk. Our implementation is not optimized for performance (which typically involves many tricks to limit or streamline communication with disk) but rather designed to force disk reads and writes to illustrate differences in disk communication efficiency.

\paragraph{Approximate search indices.}

At massive scale, quantization is typically combined with an approximate search data structure such as hierarchical navigable small world (HNSW) graphs \cite{malkov2018efficient} or inverted file (IVF) indices \cite{jegou2010product}. We replace the flat index (i.e. exhaustive search) over the data with either HNSW or IVF to show the effect on the end-to-end recall performance in such systems. We use the FAISS implementations of both indices, setting $M=8$ in HNSW and using $128$ cells with $1$ probe for IVF. Results are reported in the top and middle rows of Figure~\ref{fig:recall_approx}, on the first 1000 iterations of the 100M datasets (roughly 3M points). We find that using \ouralg\ within the search pipeline leads to a durable advantage in terms of recall quality and stability over time.

\paragraph{Oversampling.}

In large-scale vector search with disk access, it is common to \emph{oversample}, that is, to retrieve a larger number of full-precision points from disk than required, then re-rank them in main memory. Oversampling can help compensate for the loss in recall incurred by nearest-neighbor search over the quantized points. We evaluate the impact of oversampling by computing the recall-10@50, representing the ideal performance of a two-step search procedure in which full-precision vectors are retrieved at a 5x oversampling rate. Results are reported in the bottom row of Figure~Figure~\ref{fig:recall_approx}. Across all datasets, \ouralg\ continues to match or outperform all baselines in this scenario.

\begin{figure*}[t]
\begin{center}
\includegraphics[width=\linewidth]{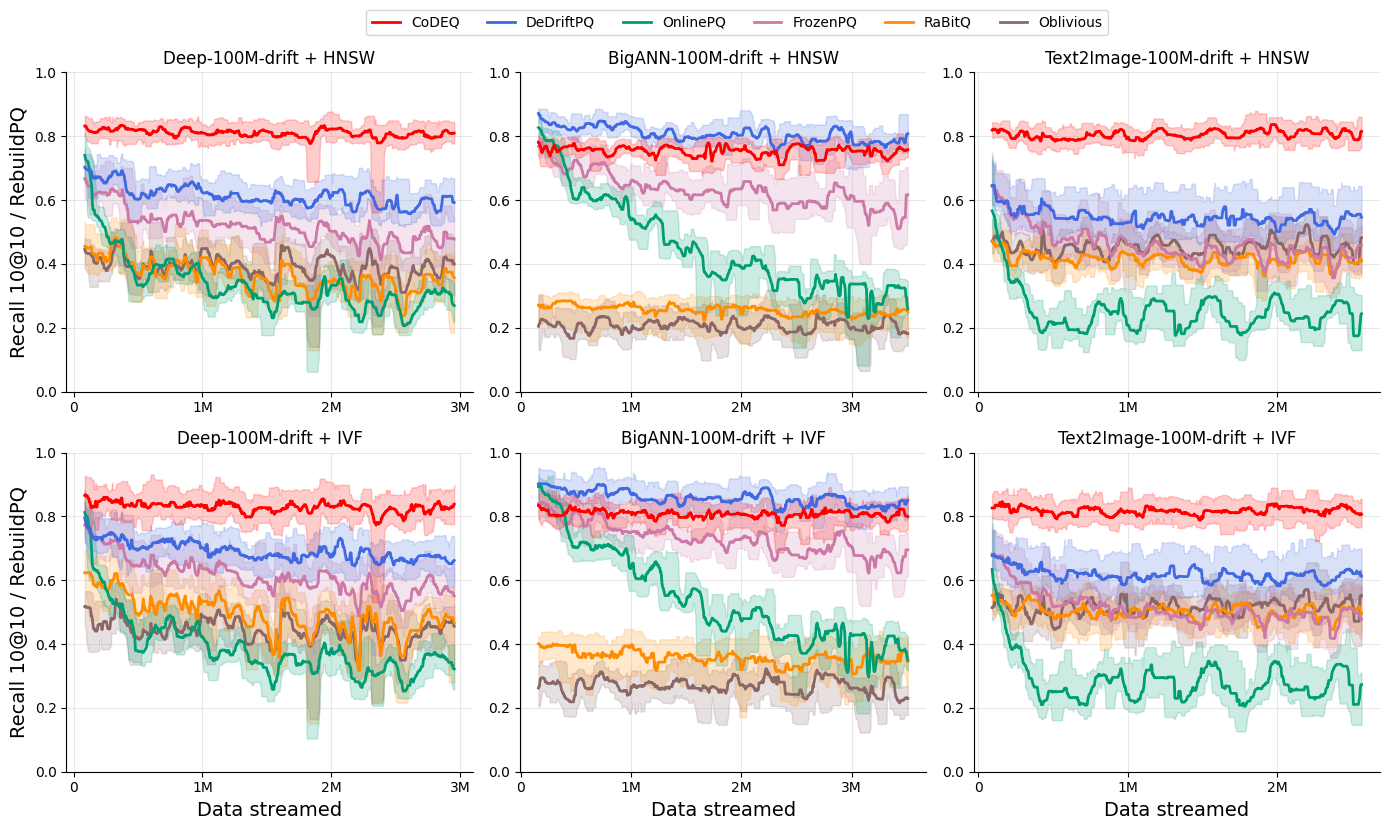}\\
\includegraphics[width=\linewidth]{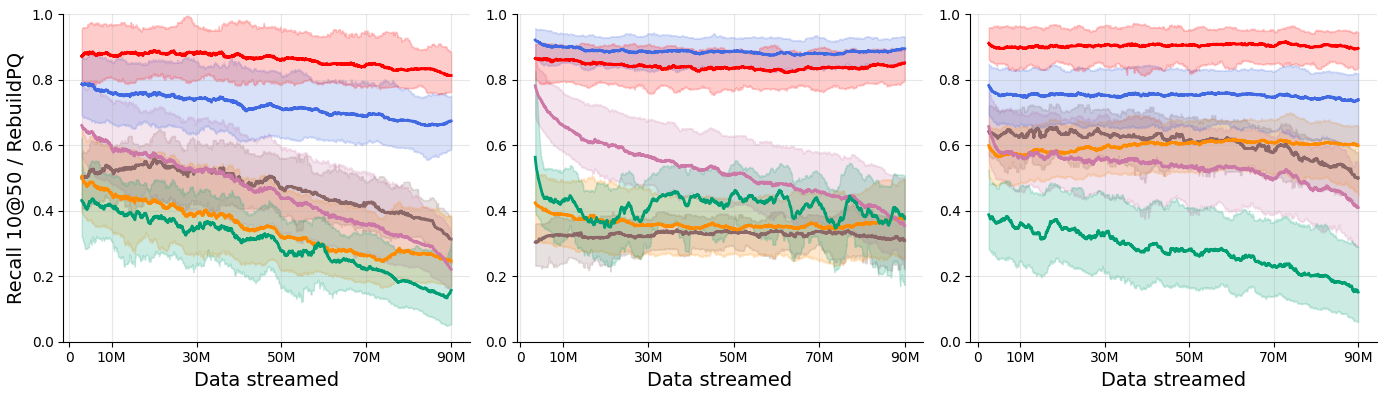}
\caption{Recall vs data streamed for quantizers combined with additional practical components of a search pipeline.
Solid lines denote the rolling median while shaded regions capture the rolling 10th-90th quantiles. Top row: Approximate k-NN search with HNSW. Middle row: Approximate k-NN search with IVF. Bottom row: Recall-10@50 expressed as a fraction of RebuildPQ, representing a 5x oversampling approach.}
\label{fig:recall_approx}
\end{center}
\end{figure*}

\begin{wrapfigure}
{R}{0.45\textwidth}
\begin{center}
\includegraphics[width=0.45\textwidth]
{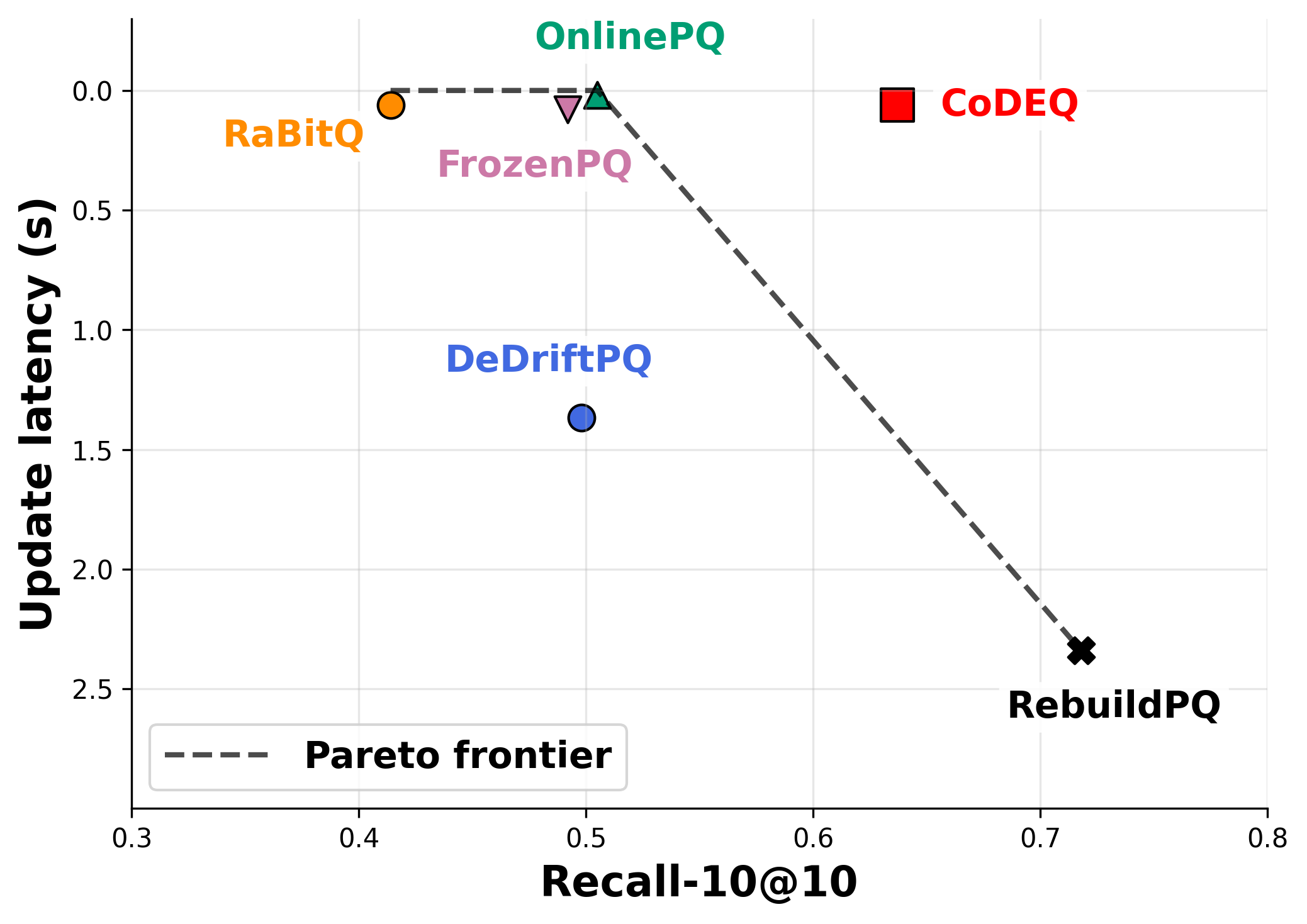}
\caption{CoDEQ extends the recall-latency Pareto frontier for streaming quantization on drifting data. Points reflect performance of each quantizer on the Deep-\emph{drift} dataset, whose construction is described in Section~\ref{sec:experiments}.}
\label{fig:pareto}
\end{center}
\end{wrapfigure}

\title{Quantization for Vector Search under Streaming Updates}

\end{document}